\newtheorem{conjecture}{Conjecture}[section]
\newtheorem{theorem}{Theorem}[section]
\newtheorem*{theorem*}{Theorem}
\newtheorem{proposition}{Proposition}[section]
\newtheorem{lemma}{Lemma}[section]
\newtheorem{corollary}{Corollary}[section]
\theoremstyle{remark}
\newtheorem{remark}{Remark}[section]
\newtheorem{claim}{Claim}[section]
\theoremstyle{definition}
\newtheorem{definition}{Definition}[section] 
\DeclareMathOperator{\spn}{span}
\DeclareMathOperator{\tr}{tr}
\DeclareMathOperator{\rr}{\mathbb{R}}
\DeclareMathOperator{\cc}{\mathbb{C}}
\DeclareMathOperator{\arity}{arity}
\DeclareMathOperator{\vk}{\mathbf{K}}
\DeclareMathOperator{\vl}{\mathbf{L}}
\DeclareMathOperator{\vx}{\mathbf{x}}
\DeclareMathOperator{\vy}{\mathbf{y}}
\DeclareMathOperator{\vz}{\mathbf{z}}
\DeclareMathOperator{\va}{\mathbf{a}}
\DeclareMathOperator{\vb}{\mathbf{b}}
\DeclareMathOperator{\vr}{\mathbf{R}}
\DeclareMathOperator{\vrr}{\mathbf{r}}
\DeclareMathOperator{\vc}{\mathbf{C}}
\DeclareMathOperator{\vcc}{\mathbf{c}}
\DeclareMathOperator{\vj}{\mathbf{J}}
\DeclareMathOperator{\vjj}{\mathbf{j}}
\DeclareMathOperator{\ic}{\mathcal{I}}
\DeclareMathOperator{\fc}{\mathcal{F}}
\DeclareMathOperator{\gc}{\mathcal{G}}
\DeclareMathOperator{\gk}{\mathfrak{G}_{\mathcal{F}}}
\DeclareMathOperator{\qk}{\mathfrak{Q}_{\mathcal{F}}}
\DeclareMathOperator{\ofc}{\overline{\fc}}
\DeclareMathOperator{\ogc}{\overline{\gc}}
\DeclareMathOperator{\stab}{Stab}
\newcommand{\tcwd}[1]{\left\langle #1 \right\rangle_{+, \circ, \otimes ,\top}}
\newcommand{\tcwdn}[1]{\left\langle #1 \right\rangle_{\circ, \otimes ,\top}}
\DeclareMathOperator{\holant}{Holant}
\DeclareMathOperator{\plholant}{Pl-Holant}
\DeclareMathOperator{\csp}{\#CSP}
\DeclareMathOperator{\eq}{\mathcal{EQ}}
\DeclareMathOperator{\geneq}{\mathcal{GEQ}}
\title{The Converse of the Real Orthogonal Holant Theorem\footnote{An abridged version of this work appeared in the Proceedings of the 52nd International Colloquium on Automata, Languages, and Programming (ICALP 2025) \cite{orthogonal}}}
\author{ Ben Young\footnote{Department of Computer Sciences, University of Wisconsin-Madison}\\
\texttt{\href{mailto:benyoung@cs.wisc.edu}{benyoung@cs.wisc.edu}}}
\date{}
\begin{document}
\maketitle

\begin{abstract}
The Holant theorem is a powerful tool for studying the computational complexity of
counting problems. Due to the great expressiveness of the Holant framework,
a converse to the Holant theorem would itself be a very powerful 
\emph{counting indistinguishability theorem}. The most general converse does not hold, but we prove
the following, still highly general, version:
if any two sets of real-valued signatures are Holant-indistinguishable, then they are equivalent 
up to an orthogonal transformation. This resolves a partially open conjecture of Xia (2010).
Consequences of this theorem include the well-known result that homomorphism counts from all graphs
determine a graph up to isomorphism, the classical sufficient condition for simultaneous orthogonal
similarity of sets of real matrices,
and a combinatorial characterization of simultaneosly orthogonally decomposable (odeco) sets of 
symmetric tensors.
\end{abstract}

\section{Introduction}
\paragraph{Holant problems.}

Holant problems were introduced by Cai, Lu, and Xia \cite{cai_computational_2011} as a highly
expressive framework for studying the computational complexity of counting problems. 
The problem $\holant(\fc)$ is defined by a set $\fc$ of \emph{signatures}, 
where a signature $F$ of \emph{arity}
$n$ on \emph{domain} $[q] := \{0,1,\ldots,q-1\}$ is a tensor in $(\cc^q)^{\otimes n}$, or equivalently
a function $[q]^n \to \cc$.
Given a \emph{signature grid} $\Omega$ -- a multigraph in which every degree-$n$ vertex is assigned a
$n$-ary signature from $\fc$ --
the problem is to compute the \emph{Holant value} of $\Omega$, which is the value
of the contraction of $\Omega$ as a tensor network (see \autoref{sec:holant} for formal definitions). 
For various $\fc$, $\holant(\fc)$ captures a wide variety of natural counting problems on
graphs, including counting partial or perfect matchings, graph homomorphisms, 
proper vertex or edge-colorings, or Eulerian orientations.
Major complexity dichotomies classifying $\holant(\fc)$ as either 
polynomial-time tractable or \#P-hard,
depending on $\fc$, have been proved for various combinations of restrictions on $\fc$ -- for example,
requiring that the signatures in $\fc$ be real- or nonnegative-real-valued, symmetric
(invariant under reordering of their inputs), or on the Boolean domain $q=2$
\cite{huang_2016_dichotomy,cai_complete_2016,cai2015holant,lin_complexity_2018,shao}.

Holant problems were motivated by Valiant's technique of \emph{holographic
transformations} \cite{valiant}. In particular, Valiant's \emph{Holant theorem} (\autoref{thm:holant}
below) states roughly that two any signature sets $\fc$ and $\gc$ that are 
equivalent up to a certain 
linear transformation are \emph{Holant-indistinguishable}, meaning that each signature grid $\Omega$
has the same Holant value whether its vertices are assigned signatures from $\fc$ or from 
$\gc$.
Many problems which do not otherwise appear tractable are in fact tractable under a Holographic
transformation to a known tractable problem \cite{valiant_2006_accidental}.
Xia \cite{xia} conjectured the converse of the Holant theorem: if $\fc$ and
$\gc$ are Holant-indistinguishable, then they are equivalent up to linear transformation.
Xia's general conjecture is false \cite{cai_complete_2016}, but one case highlighted
by Xia was left open. This paper proves that case, which is as follows.
\begin{theorem*}[\autoref{thm:result}, informal]
    Let $\fc$ and $\gc$ be sets of real-valued signatures. Then $\fc$ and $\gc$ are equivalent under
    a real orthogonal transformation if and only if $\fc$ and $\gc$ are Holant-indistinguishable.
\end{theorem*}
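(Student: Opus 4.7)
The forward direction follows immediately from the Holant theorem: a real orthogonal $O$ satisfies $O^T O = I$, so replacing each $F \in \fc$ by $O^{\otimes \arity(F)} F$ inserts factor pairs $O, O^T$ along each edge of a signature grid that contract to the identity, leaving the Holant value unchanged. The substance is the converse.

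My plan for the converse is to reduce the statement to a classical result in invariant theory. Write $\fc = \{F_1,\ldots,F_k\}$ and $\gc = \{G_1,\ldots,G_k\}$ paired so that $F_i$ and $G_i$ share arity $n_i$, as supplied by the Holant-indistinguishability bijection. Regard the tuples $\vr_\fc := (F_1,\ldots,F_k)$ and $\vr_\gc := (G_1,\ldots,G_k)$ as points in the real tensor space $T := \bigoplus_{i=1}^k (\rr^q)^{\otimes n_i}$, on which $O(q)$ acts diagonally via $O \cdot (X_1,\ldots,X_k) := (O^{\otimes n_1}X_1,\ldots,O^{\otimes n_k}X_k)$. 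The theorem's conclusion is exactly that $\vr_\fc$ and $\vr_\gc$ lie in a common $O(q)$-orbit.

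Every signature grid $\Omega$ whose vertices are labeled by symbols $F_1,\ldots,F_k$ defines a polynomial function $p_\Omega : T \to \rr$, namely the Holant value viewed as a polynomial in the tensor entries. Since a signature grid is a multigraph (with loops and multi-edges permitted) and its Holant is the total tensor contraction along edges, $p_\Omega$ realizes an arbitrary complete pairwise index contraction of copies of the $F_i$. By Weyl's First Fundamental Theorem of invariant theory for $O(q)$, such complete contractions generate the ring of $O(q)$-invariant polynomials on $T$. Hence Holant-indistinguishability, the hypothesis $p_\Omega(\vr_\fc) = p_\Omega(\vr_\gc)$ for all $\Omega$, is equivalent to $\vr_\fc$ and $\vr_\gc$ taking the same value on every $O(q)$-invariant polynomial on $T$.

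The final step is orbit separation: since $O(q)$ is a compact Lie group acting linearly on the real vector space $T$, every orbit is closed and the ring of real polynomial invariants separates the orbits. Consequently $\vr_\fc$ and $\vr_\gc$ lie in the same $O(q)$-orbit, producing the desired orthogonal transformation. The main obstacle I anticipate is the invariant-theory reduction: one must verify carefully that signature grids exhaust the generating contractions of the FFT -- accommodating arbitrary multiplicities of each $F_i$-label and using polarization to pass from pure Holant polynomials of fixed multidegree to general multilinear $O(q)$-invariants. It is also essential that the ambient group be compact real $O(q)$ rather than complex $O(q,\cc)$, since over $\cc$ orbits can fail to be closed and invariants no longer separate them, matching Xia's observation that the general complex linear converse is false.
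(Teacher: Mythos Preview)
Your argument is correct and is genuinely different from the paper's proof. You reduce directly to two classical facts: the First Fundamental Theorem for $O(q)$ (so that Holant values on signature grids span the $O(q)$-invariant polynomials on $T$) and the orbit-separation theorem for compact group actions on real vector spaces (so that agreement on all invariants forces a common orbit). The verification you flag as the main obstacle is routine: a complete contraction in the FFT sense is precisely a multigraph with loops and multi-edges, i.e.\ a signature grid, and disjoint unions of grids realize products, so the Holant polynomials span each multidegree component of the invariant ring. One small gap: you tacitly assume $\fc$ is finite; for infinite $\fc$ you need a further compactness step in $O(q)$ (the sets $\{H:H\fc'=\gc'\}$ over finite $\fc'\subset\fc$ are closed with the finite intersection property). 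By contrast, the paper does not invoke orbit separation at all. It proves an explicit duality $T(\rr^q)^{\stab(\fc)}=\ofc$ (an $O(q)$-analogue of Tannaka--Krein, built on Schrijver's \autoref{thm:duality}), extracts from $\stab(\fc\oplus\gc)$ a matrix with nonzero off-diagonal block, diagonalizes that block via SVD, and then runs an induction on the domain size $q$ that ultimately pins down $H$. Your route is shorter and conceptually clean but nonconstructive and tied to compactness of real $O(q)$; the paper's route is longer but yields intermediate structural results (the gadget duality, \autoref{lem:offdiagblock}, \autoref{lem:induction}) that it reuses for the odeco characterization and that are designed to adapt to the planar/quantum-orthogonal setting of \autoref{con:quantum}, where your compactness argument is unavailable.
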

Later, the techniques developed in a previous version of
the present work 
\cite{orthogonal} (including the direct sum and subdomain-restriction ideas in
Lemmas \ref{lem:offdiagblock} and \ref{lem:restriction} below) were used
to show that Xia's general conjecture holds for any $\fc$ and
$\gc$ which are \emph{quantum-nonvanishing} \cite{bipartite}.
The $\fc$ and $\gc$ in the context of
\autoref{thm:result} are quantum-nonvanishing, so a slightly weaker version of 
\autoref{thm:result} in which the obtained orthogonal matrix is complex 
also follows from this more general converse. 

\paragraph{Vertex and edge coloring models.}
This work uses and generalizes ideas from the theory of vertex coloring models and edge coloring models,
two well-studied classes of Holant problems.
De la Harpe and Jones \cite{jones} defined vertex and edge coloring models as extensions of 
statistical mechanics models (e.g. the Ising model), calling them ``spin models'' and ``vertex models'',
respectively. A vertex coloring model (also called a spin system) is defined by a graph $X$ with edge and possibly vertex
weights. Given an input graph $K$, one aims to compute the \emph{partition function}, the number of 
(weighted) graph homomorphisms from $K$ to $X$.
An edge coloring model is defined by a set $\fc$ of symmetric signatures containing exactly
one signature of each arity, and the problem of computing its partition function is equivalent to 
$\holant(\fc)$ (this restriction on $\fc$ ensures that
edge coloring models take ordinary graphs, rather than signature grids, as input).

One thread of prior work on vertex and edge coloring models 
characterizes which scalar-valued functions on graphs
are expressible as vertex coloring models \cite{freedman_reflection, schrijver}
or as edge coloring models 
\cite{szegedy_edge_2007, schrijver_graph_2008, draisma_characterizing_2012, regts_characterization_2013}.
Another, related, line of works compute the rank of \emph{connection matrices}
for vertex coloring models \cite{lovasz} and edge coloring models 
\cite{regts_rank_2012,draisma_tensor_2013}.
See \cite{regts} for an overview of many of the above results. 
Following Freedman, Lov\'{a}sz, and Schrijver \cite{freedman_reflection}, these works use
(labeled) \emph{quantum graphs}, algebras of formal linear combinations of graphs 
equipped with labeled vertices or ``half edges'' incident to a single vertex,
special cases of our 
\emph{quantum gadgets} below.
Many of these works also apply techniques from invariant theory, either of the symmetric group
in the case of vertex coloring models \cite{schrijver}, or, as in this work, of the
orthogonal group $O(q)$ in the case of edge-coloring models \cite{szegedy_edge_2007, schrijver_graph_2008, draisma_characterizing_2012, regts_rank_2012, draisma_tensor_2013, regts_characterization_2013}.

\paragraph{Counting Indistinguishability Theorems.}
\autoref{thm:result} is a very general and powerful algebraic 
\emph{counting indistinguishability theorem}. 
Such a theorem proves that two signatures, or sets of signatures, are indistinguishable
as parameters for a counting problem if and only if they are equivalent under an algebraic
transformation. These theorems exist for both vertex and edge coloring models, as well as other
counting problems \cite{dvorak_recognizing_2010, sherali-adams, dell, grohe_homomorphism_2022, lasserre}.
Since Holant captures a wide variety of counting problems,
many such theorems are special cases
of \autoref{thm:result} (see \autoref{sec:corollaries}).
If the counting problem in question is a generalized vertex coloring model ($\holant(\fc \cup \eq)$
for some $\fc$), then the algebraic transformation is isomorphism, and if the counting problem is,
as in this work, a generalized edge coloring model, then it is orthogonal.
The first counting indistinguishability theorem, proved by Lov\'{a}sz \cite{lovasz_operations}, 
states that two graphs are
isomorphic if and only if they admit the same number of homomorphisms from all graphs. Much later,
Lov\'{a}sz \cite{lovasz} extended this theorem to vertex coloring models with nonnegative real 
weights, followed by extensions to complex edge weights by Schrijver 
\cite{schrijver}, and to weights from any field of characteristic zero by Cai and Govorov 
\cite{cai-lovasz}. Young \cite{young2022equality} extended Cai and Govorov's proof to \#CSP, or
$\holant(\fc \cup \eq)$ for any $\fc$. 

For edge coloring models, Schrijver \cite{schrijver_graph_2008} showed that 
$\fc$ and $\gc$ define indistinguishable real edge coloring models if and only if
$\fc$ and $\gc$ are equivalent under a real orthogonal transformation. 
This is a special case of our \autoref{thm:result}.
Schrijver's proof exploits the specific nature of edge coloring models -- that $\fc$ and $\gc$ consist
of symmetric signatures and exactly one signature per arity -- to transform input graphs into polynomials
expressible in variables $y_1,\ldots,y_q$ (for $\fc$ and $\gc$ on domain $[q]$), 
where a monomial with variable multiset $\{y_{i_1},\ldots, y_{i_n}\}$ corresponds to the
$\{i_1,\ldots,i_n\}$-entry of the unique $n$-ary signature in $\fc$.
Another form of this result (in which the orthogonal matrix is allowed to be complex) 
follows from Regts' proof of 
\cite[Lemma 5]{regts_characterization_2013}, which similarly encodes $\fc$ and $\gc$ as polynomials.

Man\v{c}inska and Roberson \cite{planar} 
showed that two graphs are \emph{quantum isomorphic} -- a relaxation of isomorphism originally
defined in terms of quantum strategies for a nonlocal game \cite{nonsignalling}
-- if and only if they admit the same number of homomorphisms from all planar graphs.
Cai and Young \cite{cai_planar_2023} extended this result to
planar $\csp$ (or $\plholant(\fc \cup \eq)$, where $\plholant$ restricts to planar signature grids). 

\paragraph{Odeco signature sets}
A real-valued symmetric signature (tensor) is \emph{orthogonally decomposable}, 
or \emph{odeco} \cite{robeva}, if it is orthogonally transformable to a signature in $\geneq$, the set
of \emph{generalized equality} signatures, which take nonzero values only when all of their inputs
are equal. Hence odeco tensors generalize diagonalizable matrices.
Call a set $\fc$ of signatures odeco if the signatures are simultaneously odeco (there
is a single orthogonal transformation mapping $\fc$ into $\geneq$).
In counting complexity, if $\fc$ is odeco, then
$\holant(\fc)$ is polynomial-time tractable, as $\fc$ maps into $\geneq$, a trivially tractable set,
under an orthogonal holographic transformation.
Indeed, the tractability of \emph{Fibonacci} signature sets \cite{cai2008holographic}
can, with one exception, be explained by such 
sets being simultaneously odeco (see e.g. \cite[Section 2.2]{cai_chen_2017}). 
Fibonacci sets constitute almost all nontrivial tractable cases of $\holant^*$ problems
(an important variant of Holant in which all unary signatures are assumed 
present) for symmetric signatures on the Boolean domain \cite{cai_computational_2011}.
Simultaneously odeco
sets provide a natural starting point for extending Fibonacci signatures to higher
domains \cite{liu_2024,domain3}, where no full complexity dichotomy for $\holant^*$ is known.

Boralevi, Draisma, Horobeţ, and Robeva \cite{boralevi_orthogonal_2017}, resolving a conjecture of
Robeva \cite{robeva}, showed using techniques from algebraic geometry that a single tensor $F$ is 
odeco if and only if the signature of 
a certain $F$-gadget is symmetric. Using \autoref{thm:result}, we in \autoref{thm:odeco} extend this 
characterization to sets of simultaneously odeco signatures: $\fc$ is odeco if and only if every connected 
$\fc$-gadget has a symmetric signature. The latter condition is 
equivalent to the symmetry of the signatures of a set of small gadgets constructed from 
every pair of 
signatures in $\fc$. Therefore, if $\fc$ is finite, our characterization yields a simple
$O(q^{2n-2} |\fc|^2)$-time algorithm (for $\fc$ on domain $[q]$ with maximum arity $n$) for deciding
whether $\fc$ is odeco.
Our characterization also deepens the connection between Fibonacci and odeco 
signatures, as the original proof of tractability of any Fibonacci signature set $\fc$
\cite{cai2008holographic} relied on the fact that every connected $\fc$-gadget has a signature which is
itself Fibonacci (in particular, is symmetric). 
One can view the (iii) $\implies$ (ii) result in \autoref{thm:odeco} as a general-domain version
of this proof.

\paragraph{Overview.}
We introduce the necessary preliminaries and formally state \autoref{thm:result} in 
\autoref{sec:preliminaries}. 
In \autoref{sec:duality}, we use an invariant-theoretic
result of Schrijver \cite{schrijver_tensor_2008} to prove a
combinatorial-algebraic duality (\autoref{thm:intertwiner_gadget}) showing that
quantum $\fc$-gadgets exactly capture all tensors invariant under the
action of the group of orthogonal transformations stabilizing $\fc$.
The proof of \autoref{thm:intertwiner_gadget} entails characterizing quantum $\fc$-gadgets as the
algebra generated by a few fundamental gadgets under operations which respect the action of
$O(q)$. It follows and generalizes a proof of a similar result of Regts \cite{regts_rank_2012}
for edge coloring models. By unifying the perspective of Regts with that of
Man\v{c}inska and Roberson and Cai and Young \cite{planar, cai_planar_2023, young2022equality}
we find our proof analogous to proofs of similar results in the latter line of work
(see \autoref{rem:duality}). 

However, Man\v{c}inska and Roberson and Cai and Young's proofs of their ensuing
counting indistinguishability theorems use the \emph{orbits} and/or
\emph{orbitals} of the action of the symmetric or quantum symmetric group on the domain set $[q]$,
which do not exist for $O(q)$. Instead, we apply a novel method: induction on the domain
size $q$. We show in \autoref{lem:induction} that, if $\fc$ and $\gc$ contain a binary signature
represented by a nontrivial (i.e., not a multiple of $I$) diagonal matrix, then we can separate $[q]$
into smaller subdomains and inductively transform
the restrictions of $\fc$ and $\gc$ to each subdomain, producing a full transformation from 
$\fc$ to $\gc$. Using \autoref{thm:intertwiner_gadget},
we show that there is some nonzero matrix $D$ intertwining $\fc$
and $\gc$. Exploiting the power of diagonalization afforded by orthogonal transformations, we may
assume this $D$ is diagonal. Either $D = \pm I$, giving a trivial orthogonal transformation between
$\fc$ and $\gc$, or $D$ is not a multiple of $I$, in which case we use the fact that $D$ intertwines
$\fc$ and $\gc$ to add $D$ to both $\fc$ and $\gc$ while preserving their Holant-indistinguishability, 
then apply induction.

In \autoref{sec:corollaries}, we show that \autoref{thm:result} 
encompasses a wide range of existing counting indistinguishability theorems,
and yields some novel variations of these theorems. We also prove our combinatiorial characterization of
odeco signature sets.
Finally, in \autoref{sec:variations}, we conjecture a variation of \autoref{thm:result}
extending the results 
of Man\v{c}inska and Roberson \cite{planar} and Cai and Young \cite{cai_planar_2023} 
to planar-Holant-indistinguishability and quantum orthogonal transformations.

\section{Preliminaries, Background, and the Main Theorem}
\label{sec:preliminaries}
\subsection{Holant Problems}
\label{sec:holant}
A \emph{signature} $F$ of finite \emph{arity} $n \geq 0$ on finite \emph{domain} $V(F)$ is function $V(F)^n \to \rr$.
We will often take $V(F) = [q] := \{0,1,\ldots,q-1\}$, in which case we also view $F$ as a
tensor in $(\rr^q)^{\otimes n}$. For $\vx = (x_1,\ldots,x_n) \in V(F)^n$, abbreviate
$F_{\vx} := F(x_1,\ldots,x_n) \in \rr$. Signature $F$ is \emph{symmetric} if 
its value depends only on the multiset of inputs, not on their order.
Any time we consider a set $\fc$ of signatures,
we assume that all signatures in $\fc$ have the same domain, denoted $V(\fc)$. 

In the context of a signature set $\fc$, a
\emph{signature grid} (or \emph{$\fc$-grid}) $\Omega$ consists of an underlying multigraph with 
vertex set $V$ and edge set $E$, and an assignment of an $\deg(v)$-ary signature $F_v \in \fc$ 
to each $v \in V$, along with an ordering of the edges $\delta(v) \subset E$ incident to $v$ 
to serve as the
$\deg(v)$ input variables to $F$. That is, there is an ordering $e_1,\ldots,e_{\deg(v)}$ of $v$'s 
incident edges such that, if $\sigma: E \to V(\fc)$ is an assignment of a value in $V(\fc)$ to each
edge of $\Omega$, then $F_v$ evaluates to 
$F_v(\sigma|_{\delta(v)}) := F_v(\sigma(e_1),\ldots,\sigma(e_{\deg(v)}))$.
Somewhat unusually, we also allow $E$ to contain \emph{vertexless loops}, edges whose two ends are 
connected to each other, with no incident vertices.
The problem $\holant(\fc)$ is defined as follows: given an $\fc$-grid $\Omega$
with vertex set $V$ and edge set $E$, compute the \emph{Holant value}
\begin{equation}
    \holant_\Omega(\fc) := \sum_{\sigma: E \to V(\fc)} \prod_{v \in V} F_v(\sigma|_{\delta(v)}).
    \label{eq:holant}
\end{equation}
When $\fc$ is clear from context, we abbreviate $\holant_\Omega(\fc)$ as 
$\holant_\Omega$. Each vertexless loop of $\Omega$ contributes a global factor $|V(\fc)|$ to
$\holant_\Omega$. More generally, the Holant value of a disconnected signature grid
is the product of the Holant values of its connected components.
For signature sets $\fc$ and $\fc'$, define $\holant(\fc \mid \fc')$ as the Holant problem 
whose input is a bipartite $(\fc\sqcup\fc')$-grid 
$\Omega$, with the vertices in the two bipartitions assigned signatures in $\fc$ and $\fc'$,
respectively.

For example, if $\fc$ is on the Boolean domain $\{0,1\}$ and consists of, for each arity $n$, a symmetric
signature which evaluates to 1 on input strings of Hamming weight 1 and 0 on all other input strings,
then $\holant_{\Omega}(\fc)$ equals the number of perfect matchings in the multigraph underlying
$\Omega$. For another example, let $A_X \in \rr^{q \times q}$ be the 
adjacency matrix of $q$-vertex weighted graph $X$, and define the set
$\eq = \{=_n \mid n \geq 1\}$ of \emph{equality}
signatures, where $=_n(x_1,\ldots,x_n)$ is 1 if $x_1 = \ldots = x_n$, and is 0 otherwise. 
Consider $\holant(A_X \mid \eq)$. We can think of any edge assignment
$\sigma$ with a nonzero contribution to the Holant value as a map from vertices in $\Omega$ assigned
$\eq$ signatures to values in $[q]$, or equivalently vertices of $X$. This map must send the two 
$\eq$ vertices
adjacent to every $A_X$ vertex to an edge in $A_X$, so, if $K$ is the graph resulting from ignoring
(i.e. treating as edges) the degree-two vertices assigned $A_X$ in the underlying graph of $\Omega$,
then $\holant_{\Omega}(A_X \mid \eq)$ equals the number of graph homomorphisms from $K$ to $X$.
See \autoref{fig:hom}.
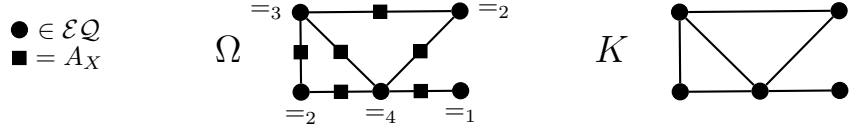
\begin{figure}[ht!]
    \centering
    \begin{tikzpicture}[scale=0.75]
    \GraphInit[vstyle=Classic]
    \SetUpEdge[style=-]
    \SetVertexMath

    \def\xshift{6}

    \begin{scope}[rotate=-44.5]
        \Vertex[x=0,y=1,NoLabel]{a1}
        \Vertex[x=2,y=3,NoLabel]{b1}
        \Vertex[x=2,y=1,NoLabel]{c1}
        \Vertex[x=1,y=0,NoLabel]{d1}
        \Vertex[x=3,y=2,NoLabel]{e1}
    \end{scope}

    \begin{scope}[xshift=-11cm, yshift=4.2cm, rotate=-44.5]
        \Vertex[x=0+\xshift,y=1,L={=_3},Lpos=180]{a2}
        \Vertex[x=2+\xshift,y=3,L={=_2},Lpos=0]{b2}
        \Vertex[x=2+\xshift,y=1,L={=_4},Lpos=270]{c2}
        \Vertex[x=1+\xshift,y=0,L={=_2},Lpos=270]{d2}
        \Vertex[x=3+\xshift,y=2,L={=_1},Lpos=270]{e2}

        \tikzset{VertexStyle/.style = {shape=rectangle, fill=black, minimum size=5pt, inner sep=1pt, draw}}
        \Vertex[x=1+\xshift,y=2,NoLabel]{ax1}
        \Vertex[x=1+\xshift,y=1,NoLabel]{ax2}
        \Vertex[x=0.5+\xshift,y=0.5,NoLabel]{ax3}
        \Vertex[x=1.5+\xshift,y=0.5,NoLabel]{ax4}
        \Vertex[x=2+\xshift,y=2,NoLabel]{ax5}
        \Vertex[x=2.5+\xshift,y=1.5,NoLabel]{ax6}

    \end{scope}

    \foreach \xs/\num in {0/1,\xshift/2} {
        \Edges(a\num,b\num,c\num,d\num,a\num,c\num,e\num)
    }

    \node[font=\Large] at (-7.3,0.05) {$\Omega$};
    \node[font=\Large] at (-0.5,0.05) {$K$};

    \Vertex[x=-11,y=0.4,L={\in \eq}]{ax8}
    \tikzset{VertexStyle/.style = {shape=rectangle, fill=black, minimum size=5pt, inner sep=1pt, draw}}
    \Vertex[x=-11,y=-0.1,L={=A_X}]{ax7}

\end{tikzpicture}
    \caption{An $(A_X \mid \eq)$-grid $\Omega$ and the graph $K$ such that $\holant_\Omega(A_X \mid \eq)$
    counts the number of homomorphisms from $K$ to $X$.}
    \label{fig:hom}
\end{figure}

\subsection{Gadgets and Signature Matrices}
Instead of viewing a signature $F$ as a tensor in $(\rr^q)^{\otimes n}$ or function in $\rr^{[q]^n}$,
we can partition its inputs in two to view it naturally as a matrix.
\begin{definition}[$F^{m,d},f$]
    \label{def:sig_matrix}
    For $F \in (\rr^q)^{\otimes n}$ and $m,d \geq 0$ with $m+d=n$, define the 
    $(m,d)$-\emph{signature matrix}, or \emph{flattening}, $F^{m,d} \in \rr^{q^m\times q^d}$ of $F$
    by, for $\vx \in [q]^m$ and $\vy \in [q]^d$,
    \[
        (F^{m,d})_{\vx,\vy} = F(x_0,\ldots,x_{m-1},y_{d-1},\ldots,y_0),
    \]
    where we use the standard isomorphism $[q^n] \cong [q]^n$ to index $F^{m,d}$.

    Abbreviate $f = F^{n,0} \in \rr^{q^n}$ -- the (column) \emph{signature vector} of $F$.
\end{definition}
We will often identify binary signatures in $(\rr^q)^{\otimes 2}$
with their $1,1$ signature matrices in $\rr^{q \times q}$.
\begin{definition}[$\gk,\gk(m,d)$]
    For signature set $\fc$, an $\fc$-\emph{gadget} is a $\fc$-grid equipped with an ordered set of
    \emph{dangling edges} with zero or one endpoints.

    Define $\gk$ to be the set of all $\fc$-gadgets, and $\gk(m,d) \subset \gk$ to be the set of gadgets
    with $m+d$ dangling edges $\ell_0, \ldots, \ell_{m-1}, r_{d-1}, \ldots, r_0$ drawn with dangling
    ends in counterclockwise cyclic order around the gadget, with
    $\ell_0, \ldots, \ell_{m-1}$ and $r_{0}, \ldots, r_{d-1}$ on the left 
    and right, respectively, from top to bottom.
\end{definition}
See \autoref{fig:gadgetops} for examples of gadgets.
A gadget in $\gk(m,d)$ defines a $(m+d)$-ary signature in flattened form, with dangling edges representing
inputs, as follows (cf. \eqref{eq:holant}):
\begin{definition}[$M(\vk)$]
    \label{def:gadget_sig_matrix}
    Define the \emph{signature matrix} $M(\vk) \in \rr^{q^m \times q^d}$ of $\vk \in \gk(m,d)$ by
    \[
        M(\vk)_{\vx,\vy} = \sum_{\substack{\sigma: E(\vk) \to [q] \\ \forall i: \sigma(\ell_i) = x_i 
        \\ \forall j: \sigma(r_j) = y_{j}}}\prod_{v \in V} F_v(\sigma|_{\delta(v)})
        ~~\text{ for $\vx \in [q]^m$ and $\vy \in [q]^d$}.
    \]
\end{definition}
In other words, $M(\vk)_{\vx,\vy}$ equals the Holant value of $\vk$ when the left and right dangling
edges are fixed to the domain elements in $\vx$ and $\vy$, respectively.
If $\vk \in \gk(m,d)$ is a gadget consisting
of a single vertex assigned $F$ with $m$ and $d$ incident left and right dangling edges, 
respectively, ordered to match the input order of $F$, then $M(\vk) = F^{m,d}$.
In general, for $\vk \in \gk(m,d)$, there is a unique $F \in (\rr^q)^{\otimes m+d}$, called the \emph{signature of} $\vk$,
such that $M(\vk) = F^{m,d}$. Note that $F$ does not depend on the
particular left/right partition (i.e. choice of $m$ and $d$) of a fixed cyclic order 
of $\vk$'s dangling edges.
\begin{definition}[$\circ, \otimes, \top$]
    Define the following three operations on gadgets:
    \begin{itemize}
        \item For $\vk \in \gk(m,d)$ and $\vl \in \gk(d,r)$, construct $\vk \circ \vl \in \gk(m,r)$
            by connecting the dangling ends of $r_i \in E(\vk)$ and $\ell_i \in E(\vl)$ for $i \in [d]$.
        \item For $\vk \in \gk(m_1,d_1)$ and $\vl \in \gk(m_2,d_2)$, construct 
            $\vk \otimes \vl \in \gk(m_1+m_2,d_1+d_2)$ as the disjoint union of $\vk$ and $\vl$,
            placing $\vk$ above $\vl$. Interleave $\vk$ and $\vl$'s dangling edges into
            an overall cyclic order: $\vk$ left, then $\vl$ left, then $\vl$ right, then $\vk$ right.
        \item For $\vk \in \gk(m,d)$, construct $\vk^{\top} \in \gk(d,m)$ 
            by exchanging the roles of left and right dangling edges and reversing
            the overall cyclic dangling edge order (visually, horizontally reflect $\vk$).
    \end{itemize}
\end{definition}
See \autoref{fig:gadgetops}. As one might expect,
the gadget operations $\circ, \otimes, \top$ induce the respective operations -- composition, 
Kronecker product, transpose -- on their signature matrices. See e.g. 
\cite[Section 1.3]{cai_chen_2017}.
\begin{figure}[ht!]
    \begin{center}
    \begin{subfigure}{0.8\textwidth}
        \centering
        \begin{tikzpicture}[scale=0.7]
    \GraphInit[vstyle=Classic]
    \SetUpEdge[style=-]
    \SetVertexMath
    \tikzset{VertexStyle/.style = {shape=circle, fill=black, minimum size=5pt, inner sep=1pt, draw}}

    \def\wirelen{0.7}
    \def\xsh{7}

    \draw[thin, color=gray] (-\wirelen,-0.4) .. controls (-\wirelen/2,-0.4) .. (0,0);
    \draw[thin, color=gray] (-\wirelen,0.4) .. controls (-\wirelen/2,0.4) .. (0,0);
    \draw[thin, color=gray] (-\wirelen,2) -- (2,2);
    \draw[thin, color=gray] (2,0) -- (3+\wirelen,0);
    \draw[thin, color=gray] (3,1) -- (3+\wirelen,1);
    \draw[thin, color=gray] (-\wirelen,1.2) .. controls (\wirelen/2,1.2) .. (2,0);

    \Vertex[x=1.3,y=1.1,NoLabel]{ax1}
    \Vertex[x=0,y=0,NoLabel]{a2}
    \Vertex[x=2,y=2,NoLabel]{b2}
    \Vertex[x=2,y=0,NoLabel]{c2}
    \Vertex[x=3,y=1,NoLabel]{e2}

    \Edges(a2,ax1,b2,c2)
    \Edges(c2,e2)

    \Edge(c2)(ax1)

    \node at (1.5,2.5) {$\mathbf{K}$};

    \def\ysh{0.5}
    \draw[thin, color=gray] (\xsh+1+\wirelen,0+\ysh) -- (\xsh+1,0+\ysh);
    \draw[thin, color=gray] (\xsh+-1,0+\ysh) -- (\xsh-1-\wirelen,0+\ysh);
    \draw[thin, color=gray] (\xsh+-1,1+\ysh) -- (\xsh-1-\wirelen,1+\ysh);

    \tikzset{VertexStyle/.style = {shape=circle, fill=black, minimum size=5pt, inner sep=1pt, draw}}
    \Vertex[x=\xsh+1,y=0+\ysh,NoLabel]{a3}
    \Vertex[x=\xsh+1,y=1+\ysh,NoLabel]{b3}
    \Vertex[x=\xsh-1,y=0+\ysh,NoLabel]{c3}
    \Vertex[x=\xsh-1,y=1+\ysh,NoLabel]{e3}

    \node at (\xsh, 1.5+\ysh) {$\mathbf{L}$};

    \Vertex[x=\xsh+0,y=0.5+\ysh,NoLabel]{bx3}

    \Edges(e3,c3,b3)
    \Edge(e3)(a3)
\end{tikzpicture}
    \end{subfigure}

    \begin{subfigure}{0.4\textwidth}
        \begin{tikzpicture}[scale=0.7]
    \GraphInit[vstyle=Classic]
    \SetUpEdge[style=-]
    \SetVertexMath
    \tikzset{VertexStyle/.style = {shape=circle, fill=black, minimum size=5pt, inner sep=1pt, draw}}

    \def\wirelen{0.7}
    \def\xsh{5}

    \node at (\xsh-2, 2.7) {$\mathbf{K} \circ \mathbf{L}$};

    \draw[thin, color=gray] (-\wirelen,-0.4) .. controls (-\wirelen/2,-0.4) .. (0,0);
    \draw[thin, color=gray] (-\wirelen,0.4) .. controls (-\wirelen/2,0.4) .. (0,0);
    \draw[thin, color=gray] (-\wirelen,2) -- (2,2);
    \draw[thin, color=gray] (-\wirelen,1.2) .. controls (\wirelen/2,1.2) .. (2,0);

    \Vertex[x=1.3,y=1.1,NoLabel]{ax1}
    \Vertex[x=0,y=0,NoLabel]{a2}
    \Vertex[x=2,y=2,NoLabel]{b2}
    \Vertex[x=2,y=0,NoLabel]{c2}
    \Vertex[x=3,y=1,NoLabel]{e2}

    \Edges(a2,ax1,b2,c2)
    \Edge(c2)(ax1)
    \Edge(c2)(e2)

    \draw[thin, color=gray] (\xsh+1+\wirelen,0) -- (\xsh+1,0);

    \tikzset{VertexStyle/.style = {shape=circle, fill=black, minimum size=5pt, inner sep=1pt, draw}}
    \Vertex[x=\xsh+1,y=0,NoLabel]{a3}
    \Vertex[x=\xsh+1,y=1,NoLabel]{b3}
    \Vertex[x=\xsh-1,y=0,NoLabel]{c3}
    \Vertex[x=\xsh-1,y=1,NoLabel]{e3}

    \Vertex[x=\xsh,y=0.5,NoLabel]{bx2}

    \Edge(e3)(c3)
    \Edge(c3)(b3)
    \Edge(e3)(a3)

    \Edge(e2)(e3)
    \Edge(c2)(c3)
\end{tikzpicture}
        \vspace{0.5cm}
    \end{subfigure}
    \begin{subfigure}{0.25\textwidth}
        \begin{tikzpicture}[scale=0.7]
    \GraphInit[vstyle=Classic]
    \SetUpEdge[style=-]
    \SetVertexMath
    \tikzset{VertexStyle/.style = {shape=circle, fill=black, minimum size=5pt, inner sep=1pt, draw}}

    \def\wirelen{0.7}
    \def\xsh{1}
    \def\ysh{2}

    \node at (1, \ysh+2.7) {$\mathbf{K} \otimes \mathbf{L}$};

    \draw[thin, color=gray] (-\wirelen,\ysh-0.4) .. controls (-\wirelen/2,\ysh-0.4) .. (0,\ysh+0);
    \draw[thin, color=gray] (-\wirelen,\ysh+0.4) .. controls (-\wirelen/2,\ysh+0.4) .. (0,\ysh+0);
    \draw[thin, color=gray] (-\wirelen,\ysh+2) -- (2,\ysh+2);
    \draw[thin, color=gray] (2,\ysh+0) -- (3+\wirelen,\ysh+0);
    \draw[thin, color=gray] (3,\ysh+1) -- (3+\wirelen,\ysh+1);
    \draw[thin, color=gray] (-\wirelen,\ysh+1.2) .. controls (\wirelen/2,\ysh+1.2) .. (2,\ysh);

    \Vertex[x=1.3,y=\ysh+1.1,NoLabel]{ax1}
    \Vertex[x=0,y=\ysh+0,NoLabel]{a2}
    \Vertex[x=2,y=\ysh+2,NoLabel]{b2}
    \Vertex[x=2,y=\ysh+0,NoLabel]{c2}
    \Vertex[x=3,y=\ysh+1,NoLabel]{e2}

    \Edges(a2,ax1,b2,c2)
    \Edges(c2,e2)

    \Edge(c2)(ax1)

    \draw[thin, color=gray] (\xsh+1+\wirelen+1,0) -- (\xsh+1,0);
    \draw[thin, color=gray] (\xsh+-1,0) -- (\xsh-1-\wirelen,0);
    \draw[thin, color=gray] (\xsh+-1,1) -- (\xsh-1-\wirelen,1);

    \Vertex[x=\xsh+1,y=0,NoLabel]{a3}
    \Vertex[x=\xsh+1,y=1,NoLabel]{b3}
    \Vertex[x=\xsh-1,y=0,NoLabel]{c3}
    \Vertex[x=\xsh-1,y=1,NoLabel]{e3}

    \Vertex[x=\xsh+0,y=0.5,NoLabel]{bx3}

    \Edges(e3,c3,b3)
    \Edge(e3)(a3)
\end{tikzpicture}
    \end{subfigure}
    \hspace{0.6cm}
    \begin{subfigure}{0.25\textwidth}
        \begin{tikzpicture}[scale=0.7]
    \GraphInit[vstyle=Classic]
    \SetUpEdge[style=-]
    \SetVertexMath

    \def\wirelen{0.7}
    \begin{scope}[xscale=-1]
        \draw[thin, color=gray] (-\wirelen,-0.4) .. controls (-\wirelen/2,-0.4) .. (0,0);
        \draw[thin, color=gray] (-\wirelen,0.4) .. controls (-\wirelen/2,0.4) .. (0,0);
        \draw[thin, color=gray] (-\wirelen,2) -- (2,2);
        \draw[thin, color=gray] (2,0) -- (3+\wirelen,0);
        \draw[thin, color=gray] (3,1) -- (3+\wirelen,1);
        \draw[thin, color=gray] (-\wirelen,1.2) .. controls (\wirelen/2,1.2) .. (2,0);

        \Vertex[x=1.3,y=1.1,NoLabel]{ax1}
        \Vertex[x=0,y=0,NoLabel]{a2}
        \Vertex[x=2,y=2,NoLabel]{b2}
        \Vertex[x=2,y=0,NoLabel]{c2}
        \Vertex[x=3,y=1,NoLabel]{e2}

        \Edges(a2,ax1,b2,c2)
        \Edges(c2,e2)

        \Edge(c2)(ax1)

    \end{scope}
    \node at (-1.1,2.5) {$\mathbf{K}^\top$};
\end{tikzpicture}
        \vspace{1cm}
    \end{subfigure}
    \end{center}
    \caption{Operations on gadgets $\vk \in \mathfrak{G}(4,2)$ and $\mathbf{L} \in \mathfrak{G}(2,1)$. 
        Dangling edges are drawn thinner than internal edges.}
    \label{fig:gadgetops}
\end{figure}
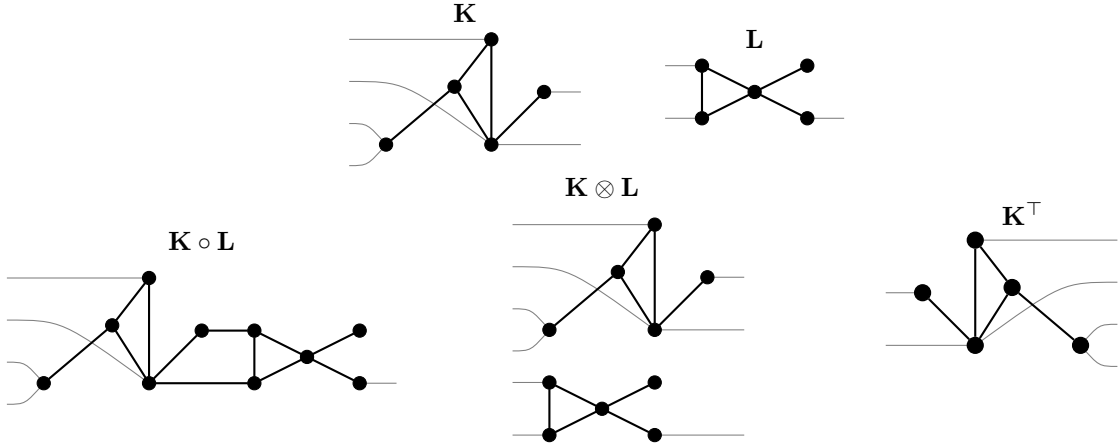

\begin{definition}[$\langle\cdot,\cdot\rangle,\|\cdot\|$]
    For real-valued $\fc$ and $n$-ary gadgets $\vk$ and $\vl$, construct the signature grid
    $\langle \vk,\vl \rangle$ by connecting the $i$th dangling edges
    of $\vk$ and $\vl$, for $i \in [n]$.
    If $\vk$ and $\vl$ have signatures $K$ and $L$, then define $\langle K,L \rangle
    := \holant_{\langle \vk,\vl\rangle} = \langle K^{n,0}, L^{n,0}\rangle$ (the standard
    inner product on $\rr^{q^n}$). Define $\|F\| := \sqrt{\langle F,F\rangle} = \sqrt{\sum_{\vx} F_{\vx}^2}$.
\end{definition}

\subsection{Signature Transformations and Invariance}
\begin{definition}[$T(\rr^q)$]
    Following Schrijver \cite{schrijver_tensor_2008}, define
    \[
        T(\rr^q) := \bigcup_{n \geq 0} (\rr^q)^{\otimes n}
    \]
    to be the set of all real-valued signatures on domain $[q]$.
\end{definition}

\begin{definition}[$HF,H\fc$]
    Define an action of the group $\text{GL}_q(\rr)$ of invertible $q \times q$ matrices on the set 
    $T(\rr^q)$ by, for $H \in \text{GL}_q(\rr)$ and 
    $F \in (\rr^q)^{\otimes n}$, letting $HF\in (\rr^q)^{\otimes n}$ be the signature whose signature vector is
    $H^{\otimes n}f$ -- that is, $(HF)^{n,0} = H^{\otimes n}f$ (see \autoref{fig:pivot_h}).

    For a signature set $\fc \subset T(\rr^q)$, define $H\fc := \{HF \mid F \in \fc\}$.
\end{definition}
\begin{figure}
    \center
    \begin{tikzpicture}[scale=0.95]
    \tikzstyle{every node}=[font=\small]
    \GraphInit[vstyle=Classic]
    \SetUpEdge[style=-]
    \SetVertexMath

    \def\ysh{0.5}
    \def\vx{0}
    \def\xsh{7}
    \def\wlen{1}
    \def\hwlen{0.8}
    \def\wgap{0.3}

    \foreach \y in {0,...,4} {
        \draw[thin, color=gray] (\vx-\wlen, \y*\ysh) 
        .. controls (\vx-\wlen+\wlen/3,\y*\ysh) .. (\vx,2*\ysh);

        \draw[thin, color=gray] (\vx-\hwlen-\wlen-\wgap,\y*\ysh) -- (\vx-\wlen-\wgap,\y*\ysh)
        node[draw, fill=black, regular polygon, regular polygon sides=3, minimum size = 6pt, inner sep = 0pt, pos=0.5] {};
    }
    \node at (\vx-\hwlen/2-\wlen-\wgap,5.5*\ysh) {$H^{\otimes 5}$};
    \node at (\vx-\wlen/2,5.5*\ysh) {$F^{5,0}$};

    \Vertex[x=\vx,y=2*\ysh,L=F,Lpos=300,Ldist=-0.2cm]{f}
    \node at (\vx+0.8,2*\ysh) {$=$};

    \begin{scope}[xshift=2.4cm]
        \foreach \y in {0,...,4} {
            \draw[thin, color=gray] (\vx-\wlen, \y*\ysh) 
            .. controls (\vx-\wlen+\wlen/3,\y*\ysh) .. (\vx,2*\ysh);
        }
        \Vertex[x=\vx,y=2*\ysh,L=HF,Lpos=300,Ldist=-0.2cm]{f2}

        \node at (\vx+1.5,2*\ysh) {$\iff$};
        \node at (\vx-\wlen/2,5.5*\ysh) {$(HF)^{5,0}$};
    \end{scope}

    \begin{scope}[xshift=\xsh cm]
        \foreach \y in {1,2,3} {
            \draw[thin, color=gray] (\vx,2*\ysh) .. controls (\vx+2*\wlen/3,\y*\ysh) .. (\vx+\wlen,\y*\ysh);
            \draw[thin, color=gray] (\vx+\hwlen+\wlen+\wgap,\y*\ysh) -- (\vx+\wlen+\wgap,\y*\ysh)
            node[draw, fill=black, regular polygon, regular polygon sides=3, minimum size = 6pt, inner sep = 0pt, pos=0.5, rotate=180] {};
        }

        \foreach \y in {1.5,2.5} {
            \draw[thin, color=gray] (\vx,2*\ysh) .. controls (\vx-2*\wlen/3,\y*\ysh) .. (\vx-\wlen,\y*\ysh);
            \draw[thin, color=gray] (\vx-\hwlen-\wlen-\wgap,\y*\ysh) -- (\vx-\wlen-\wgap,\y*\ysh)
            node[draw, fill=black, regular polygon, regular polygon sides=3, minimum size = 6pt, inner sep = 0pt, pos=0.5] {};
        }
        \node at (\vx-\hwlen/2-\wlen-\wgap,4.5*\ysh) {$H^{\otimes 2}$};
        \node at (\vx+\hwlen/2+\wlen+\wgap,4.5*\ysh) {$(H^\top)^{\otimes 3}$};
        \node at (\vx,4.5*\ysh) {$F^{2,3}$};

        \node at (\vx+\wlen+\hwlen+\wgap+0.6,2*\ysh) {$=$};
        \Vertex[x=\vx,y=2*\ysh,L=F,Lpos=270,Ldist=+0.05cm]{ff}

        \begin{scope}[xshift=4.3cm]
            \foreach \y in {1.5,2.5} {
                \draw[thin, color=gray] (\vx,2*\ysh) .. controls (\vx-2*\wlen/3,\y*\ysh) .. (\vx-\wlen,\y*\ysh);
            }
            \foreach \y in {1,2,3} {
                \draw[thin, color=gray] (\vx,2*\ysh) .. controls (\vx+2*\wlen/3,\y*\ysh) .. (\vx+\wlen,\y*\ysh);
            }
            \Vertex[x=\vx,y=2*\ysh,L=HF,Lpos=270,Ldist=+0.05cm]{ff2}
            \node at (\vx,4.5*\ysh) {$(HF)^{2,3}$};
        \end{scope}
    \end{scope}
\end{tikzpicture}
    \caption{Illustrating $H^{\otimes n}f = (HF)^{n,0}$, or equivalently 
    $H^{\otimes m} F^{m,d} (H^\top)^{\otimes d} = (HF)^{m,d}$.}
    \label{fig:pivot_h}
\end{figure}
We usually have $H \in O(q)$, the $q \times q$ (real) orthogonal group. We will use the following
notations from invariant theory \cite{schrijver_tensor_2008, regts_rank_2012}.
\begin{definition}[$T(\rr^q)^Q$, $\stab(\fc)$]
    \label{def:invariance}
    For a subgroup $Q \subset O(q)$, define
    \[
        T(\rr^q)^Q := \{F \in T(\rr^q) \mid HF = F \text{ for every } H \in Q\} \subset T(\rr^q).
    \]
    Dually, for a signature set $\fc \subset T(\rr^q)$, define
    \[
        \stab(\fc) := \{H \in O(q) \mid HF = F \text{ for every } F \in \fc\} \subset O(q).
    \]
\end{definition}
A two-sided dangling edge, or \emph{wire}, has no incident vertices. As a gadget with one
left-facing and one right-facing dangling end, a wire
has siganture matrix $I$ (the identity), as its left and right inputs must agree.
Connecting the two ends of a wire, we obtain a vertexless loop.
\begin{definition}[$\mathcal{W}$]
    Let $\mathcal{W} \subset T(\rr^q)$ be the signatures of
    gadgets with no vertices -- that is, the signatures of gadgets consisting of only wires and 
    vertexless loops.
\end{definition}
A gadget consisting of $n$ wires has a $2n$-ary
signature in $\mathcal{W}$, which uniquely corresponds, via the order of the gadget's dangling edges,
to a partition (or matching) of $[2n]$ into two-element subsets.
Since the gadgets defining $\mathcal{W}$ have no vertices, they belong to $\gk$ for any $\fc$.
This view of $\mathcal{W}$ as a set of universal signatures is reinforced by the following
classical theorem from representation theory, called the (tensor) \emph{First Fundamental Theorem} for
$O(q)$, proved by Weyl \cite{weyl_classical_1966}, and stated in this
form by Regts \cite[Theorem 4.3]{regts}. 
Define $\langle \fc \rangle_+$ as the set of all $\rr$-linear 
combinations of (matching-arity) signatures in $\fc$.
\begin{theorem}[FFT for $O(q)$]
    $T(\rr^q)^{O(q)} = \langle \mathcal{W} \rangle_+$.
    \label{thm:fft}
\end{theorem}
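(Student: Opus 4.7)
The plan is to prove both inclusions separately: the $\supseteq$ direction by direct computation on the generators of $\mathcal{W}$, and the $\subseteq$ direction by reducing to the classical (polynomial) First Fundamental Theorem for $O(q)$ due to Weyl.

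For $\supseteq$, I verify that every generator of $\langle \mathcal{W}\rangle_+$ is $O(q)$-invariant. A single wire has $2$-ary signature $I \in \rr^{q\times q}$, and for any $H \in O(q)$,
\[
    (H^{\otimes 2}I)_{x,y} = \sum_{a\in [q]} H_{x,a}H_{y,a} = (HH^\top)_{x,y} = I_{x,y},
\]
so $HI = I$. A vertexless loop contributes the $0$-ary scalar $q$, trivially invariant. Any wire gadget in $\gk(m,d)$ has signature equal, up to the permutation of tensor factors induced by the cyclic order of its dangling edges, to $I^{\otimes (m+d)/2}$ times a nonnegative integer power of $q$; since the $O(q)$-action is diagonal it commutes with permutations of tensor factors, so every element of $\mathcal{W}$ is $O(q)$-invariant. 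Invariance is preserved by $\rr$-linear combinations, giving $\langle\mathcal{W}\rangle_+ \subseteq T(\rr^q)^{O(q)}$.

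For $\subseteq$, I translate tensors into multilinear forms. A tensor $F \in (\rr^q)^{\otimes n}$ is $O(q)$-invariant if and only if the associated $n$-linear form
\[
    p_F(v_1,\ldots,v_n) := \sum_{\vx \in [q]^n} F_{\vx}\prod_{i=1}^n (v_i)_{x_i}
\]
on $(\rr^q)^n$ is invariant under the diagonal $O(q)$-action. Weyl's classical FFT states that the algebra of $O(q)$-invariant polynomials on $(\rr^q)^n$ is generated by the pairwise inner products $\langle v_i, v_j \rangle$. Extracting the multilinear (multidegree $(1,1,\ldots,1)$) component, every invariant $n$-linear form is an $\rr$-linear combination of products $\prod_{\{i,j\}\in\pi}\langle v_i, v_j\rangle$ as $\pi$ ranges over perfect matchings of $[n]$; in particular $n$ must be even, consistent with the fact that every wire gadget has even arity (and for odd $n$ both sides vanish, since $-I \in O(q)$ acts by $-1$). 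Each such product is precisely $p_{F_\pi}$, where $F_\pi \in \mathcal{W}$ is the signature of the vertexless wire gadget whose $n$ dangling ends are paired according to $\pi$; thus $F \in \langle\mathcal{W}\rangle_+$.

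The substantive content is entirely Weyl's FFT, which I would invoke as a black box rather than reprove. For a self-contained treatment, one could instead appeal to Schur--Weyl duality for $O(q)$ via the Brauer algebra, whose diagrammatic basis of pair-partitions coincides exactly with vertexless wire gadgets, making the theorem essentially a restatement; alternatively one simply cites \cite[Theorem 4.3]{regts} or \cite{weyl_classical_1966}.
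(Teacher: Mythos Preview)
Your proof is correct. The $\supseteq$ direction matches the paper's treatment (a direct computation, illustrated in the paper's \autoref{fig:fft}). For $\subseteq$ you take the classical route: translate tensors to multilinear forms, invoke Weyl's polynomial FFT as a black box, and read off the multilinear component as a span of matching monomials, which are exactly the signatures in $\mathcal{W}$.

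The paper's proof is genuinely different. Rather than citing Weyl directly, it derives the FFT as the degenerate case $\fc = \varnothing$ of \autoref{thm:intertwiner_gadget}: since $\stab(\varnothing) = O(q)$ and $\varnothing$-gadgets have no vertices (hence $\overline{\varnothing} = \langle\mathcal{W}\rangle_+$), the identity $T(\rr^q)^{\stab(\fc)} = \overline{\fc}$ specializes to $T(\rr^q)^{O(q)} = \langle\mathcal{W}\rangle_+$. The black box underlying this route is Schrijver's duality (\autoref{thm:duality}) rather than Weyl's theorem. What your approach buys is directness and independence from the paper's machinery; what the paper's approach buys is a demonstration that the FFT is literally a corollary of the general intertwiner--gadget duality, so no separate invariant-theoretic input beyond \autoref{thm:duality} is needed anywhere in the paper. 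Both arguments ultimately outsource the hard work to a cited result; they just outsource to different ones.
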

We will not need the $\subseteq$ direction of \autoref{thm:fft}, although it follows directly from
our \autoref{thm:intertwiner_gadget} below. The $\supseteq$ direction follows from a simple calculation
and has the geometric intuition shown in \autoref{fig:fft}.
\begin{figure}[ht!]
    \centering
    \begin{tikzpicture}[scale=0.8]
    \tikzset{VertexStyle/.style = {shape=circle, fill=black, minimum size=5pt, inner sep=1pt, draw}}
    \GraphInit[vstyle=Classic]
    \SetUpEdge[style=-]
    \SetVertexMath

    \def\dx{2}

    \node at (\dx+2.6, 2.5) {$=$};
    \node at (3.5*\dx+2.6, 2.5) {$=$};
    \node at (6*\dx+2.6, 2.5) {$=$};

    \node at (\dx, 5.8) {$H^{\otimes 6} W^{6,0}$};
    \node at (8*\dx+0.4, 5.8) {$W^{6,0}$};

    \foreach \iy in {0,1,...,5} {
        \draw[thin, color=gray] (\dx-1,\iy) -- (\dx-0.4,\iy);
        \node[draw, fill=black, regular polygon, regular polygon sides=3, minimum size = 7pt, inner sep = 1pt] at (\dx-0.7,\iy) {};
    }

    \foreach \ix in {1,3.5,6,8} {
        \draw[thin, color=gray] (\dx*\ix,3) .. controls (\dx*\ix+1.5,3.1) and (\dx*\ix+1.5,4.9) .. (\dx*\ix,5);
        \draw[thin, color=gray] (\dx*\ix,0) .. controls (\dx*\ix+2,0.2) and (\dx*\ix+2,3.8) .. (\dx*\ix,4);
        \draw[thin, color=gray] (\dx*\ix,1) .. controls (\dx*\ix+1.5,1.05) and (\dx*\ix+1.5,1.95) .. (\dx*\ix,2);
    }

    \foreach \iy in {2,4,5} {
        \draw[thin, color=gray] (3.5*\dx-1,\iy) -- (3.5*\dx-0.4,\iy);
        \node[draw, fill=black, regular polygon, regular polygon sides=3, minimum size = 7pt, inner sep = 1pt] at (3.5*\dx-0.7,\iy) {};
    }
    \foreach \iy in {0,1,3} {
        \draw[thin, color=gray] (3.5*\dx-1,\iy) -- (3.5*\dx-0.4,\iy);
        \node[draw, fill=black, rotate=15, regular polygon, regular polygon sides=3, minimum size = 7pt, inner sep = 1pt] at (3.5*\dx+0.35,\iy+0.05) {};
    }

    \foreach \iy in {0,1,3} {
        \draw[thin, color=gray] (6*\dx-1,\iy) -- (6*\dx-0.4,\iy);
    }
    \foreach \iy in {2,4,5} {
        \draw[thin, color=gray] (6*\dx-1.6,\iy) -- (6*\dx-0.4,\iy);
        \node[draw, fill=black, regular polygon, regular polygon sides=3, minimum size = 7pt, inner sep = 1pt, label={[label distance=-0.2mm]$H$}] at (6*\dx-1.3,\iy) {};
        \node[draw, fill=black, rotate=180, regular polygon, regular polygon sides=3, minimum size = 7pt, inner sep = 1pt, label={[label distance=0.2mm]below:$H^\top$}] at (6*\dx-0.7,\iy+0.03) {};
    }
\end{tikzpicture}
    \caption{Demonstrating $H^{\otimes 6}W^{6,0} = W^{6,0}$ for 6-ary $W \in \mathcal{W}$ and
    orthogonal $H$. Each lower $H$ is moved
    along its wire to the top, and transposed in the process (its left and
right dangling edges are switched).}
    \label{fig:fft}
\end{figure}
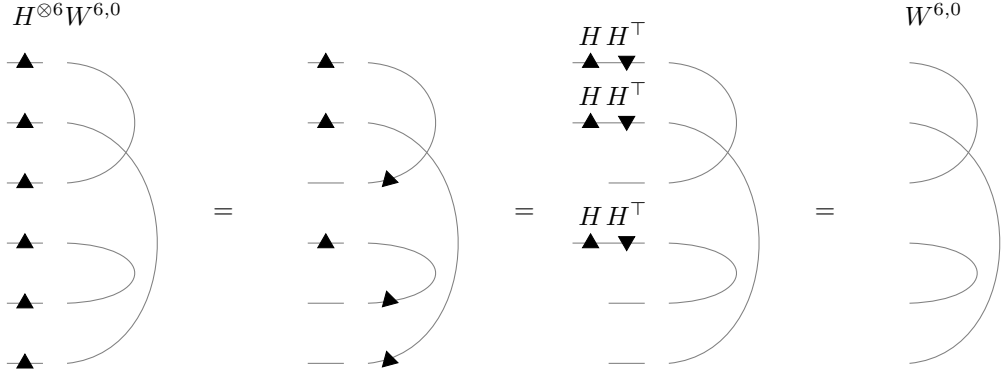

\subsection{The Holant Theorem}
\label{sec:holanttheorem}
Throughout this work, we will be considering pairs of signature sets, usually denoted $\fc$
and $\gc$. We assume such pairs are \emph{similar}, meaning they have the same domain size $q$ and
there is a bijection $\fc \to \gc$ such that, for $n$-ary $F \in \fc$, the image $G \in \gc$ of $F$, 
called the signature \emph{corresponding} to $F$ and denoted by $F \leftrightsquigarrow G$, has the same
arity $n$.
\begin{definition}[$\vk_{\fc\to\gc}, \Omega_{\fc\to\gc}$]
    For (similar) sets $\fc$ and $\gc$ and gadget $\vk \in \gk$, define the gadget
    $\vk_{\fc\to\gc} \in \mathfrak{G}_{\gc}$ by replacing every $F \in \fc$ assigned to a vertex in
    $\vk$ with the corresponding $G \in \gc$. If $\vk$ has zero dangling edges then it is an
    $\fc$-grid $\Omega$, and is transformed to a $\gc$-grid $\Omega_{\fc\to\gc}$.
\end{definition}

Holant problems were originally motivated by the following theorem, a powerful reduction tool
proved by Valiant \cite{valiant}.
For $F \in (\rr^q)^{\otimes n}$ and $A \in \text{GL}_q$, define $FA$ similarly
to $AF$, by $(FA)^{0,n} = F^{0,n}A^{\otimes n}$.
\begin{theorem}[The Holant Theorem]
    For any $(\fc \mid \fc')$-grid $\Omega$ and matrix $A \in \text{GL}_q$,
    \[
        \holant_{\Omega}(\fc \mid \fc') = \holant_{\Omega'}
        (A\fc \mid \fc' A^{-1}),
    \]
    where $\Omega' = \Omega_{(\fc|\fc') \to (A\fc | \fc' A^{-1})}$.
    \label{thm:holant}
\end{theorem}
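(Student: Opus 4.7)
The approach is the standard manipulation underlying Valiant's original proof: insert $I = A^{-1}A$ on every edge of the bipartite signature grid $\Omega$, then absorb the $A$ factors into the $\fc$-side signatures and the $A^{-1}$ factors into the $\fc'$-side signatures. Because each edge of $\Omega$ has one endpoint assigned a signature from $\fc$ and one endpoint assigned a signature from $\fc'$, this redistribution produces exactly $A\fc$ on one side and $\fc' A^{-1}$ on the other without altering the value of the Holant sum.

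Concretely, I would start from the right-hand side and expand
\[
    \holant_{\Omega'}(A\fc \mid \fc' A^{-1}) = \sum_{\alpha : E \to [q]} \prod_{u \in V_\fc} (AF_u)(\alpha|_{\delta(u)}) \prod_{v \in V_{\fc'}} (F'_v A^{-1})(\alpha|_{\delta(v)}),
\]
substituting $(AF_u)(\alpha|_{\delta(u)}) = \sum_{\sigma} F_u(\sigma) \prod_{e \in \delta(u)} A_{\alpha(e),\sigma(e)}$ and $(F'_v A^{-1})(\alpha|_{\delta(v)}) = \sum_{\tau} F'_v(\tau) \prod_{e \in \delta(v)} (A^{-1})_{\tau(e),\alpha(e)}$ directly from the definitions of $HF$ and $FA$. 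Interchanging the order of summation, the inner sum over $\alpha$ factors over edges: for each edge $e$, it evaluates to $\sum_{k} A_{k,\sigma(e)} (A^{-1})_{\tau(e),k} = (A^{-1}A)_{\tau(e),\sigma(e)} = \delta_{\tau(e),\sigma(e)}$. This forces $\sigma = \tau$ on every edge and collapses the expression to $\sum_{\sigma} \prod_u F_u(\sigma|_{\delta(u)}) \prod_v F'_v(\sigma|_{\delta(v)}) = \holant_\Omega(\fc \mid \fc')$.

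The argument is essentially a routine calculation, so there is no real obstacle beyond carefully tracking conventions. The definition of $FA$ in terms of $F^{0,n}$ determines which index of each $A$-factor is being contracted, and one must ensure that after the redistribution the $A$'s and $A^{-1}$'s on each shared edge are oriented so that their product is $A^{-1}A$ rather than $AA^{-1}$ (both equal $I$, but the bookkeeping must match). The bipartite structure of the signature grid is essential: if $\fc$ and $\fc'$ were mixed, inserting $A$ and $A^{-1}$ on an edge between two $\fc$-vertices would leave a stray factor of $A^2$ behind, which is precisely why the theorem is stated for $\holant(\fc \mid \fc')$ rather than $\holant(\fc \cup \fc')$.
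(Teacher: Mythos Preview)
Your argument is the standard ``insert $A^{-1}A$ on every edge and redistribute'' proof of Valiant's Holant theorem, and it is correct. The paper itself does not supply a proof of this statement: it simply states the theorem and attributes it to Valiant, so there is nothing in the paper to compare your approach against.
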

Xia \cite{xia} conjectured that the converse of \autoref{thm:holant} holds as long as one of
$\fc$ or $\fc'$ contain a signature with arity greater than one -- that is, if 
$\holant_{\Omega}(\fc \mid \fc') = \holant_{\Omega_{(\fc|\fc') \to (\gc | \gc')}}(\gc \mid \gc')$
for every $(\fc \mid \fc')$-grid $\Omega$, then there is an $A \in \text{GL}_q$ 
such that $\gc = A\fc$ and $\gc' = \fc' A^{-1}$. However, Cai, Guo, and Williams
\cite[Section 4.3]{cai_complete_2016} observe that this conjecture is false. They consider the problems
\[
    \holant([0,1,0] \mid [a,b,1,0,0]) ~~\text{and}~~ \holant([0,1,0] \mid [0,0,1,0,0])
\]
for any $a,b$ not both 0,
where $[0,1,0]$ and $[a,b,1,0,0]$ are symmetric signatures on the Boolean domain $\{0,1\}$ 
of arity $n=2,4$, respectively, specified by their values on input strings of Hamming
weight $0$ through $n$.
For both problems, $[0,1,0]$ is only nonzero on edge assignment $\sigma$ if $\sigma$ assigns its two
incident edges opposite values. Thus, unless its contribution to the total Holant value is 0, 
$\sigma$ must assign 0 to exactly half the edges in $\Omega$, and 1 to the other half. Then, since
every $[a,b,1,0,0]$ evaluates to zero
whenever it receives more 1 inputs than 0 inputs, no nonzero assignment $\sigma$ assigns any
$[a,b,1,0,0]$ more 0 inputs than 1 inputs. Therefore, in this context,
$[a,b,1,0,0]$ is indistinguishable from $[0,0,1,0,0]$, so the hypothesis of Xia's conjecture is satisfied. 
However, there is no $A \in \text{GL}_2$ satisfying $A[0,1,0] = [0,1,0]$ and 
$[a,b,1,0,0]A^{-1} = [0,0,1,0,0]$. To see this, if $A = \begin{bmatrix} a & b \\ c & d\end{bmatrix}$, then, with the signature vector of $[0,1,0]$ being $(0,1,1,0)^\top$,
\[
    A^{\otimes 2}
    \begin{bmatrix} 0 \\ 1 \\ 1 \\ 0 \end{bmatrix} =
    \begin{bmatrix} 
        a^2 & ab & ba & b^2 \\
        ac & ad & bc & bd \\
        ca & cb & da & db \\
        c^2 & cd & dc & d^2
    \end{bmatrix}
    \begin{bmatrix} 0 \\ 1 \\ 1 \\ 0 \end{bmatrix}
    = \begin{bmatrix} 0 \\ 1 \\ 1 \\ 0 \end{bmatrix}
    \iff ab = cd = 0 \text{ and } ad + bc = 1
    \implies \begin{cases} a = d = 0 & \text{or} \\ b = c = 0\end{cases}.
\]
If $b=c=0$ then $A$ simply rescales a signature's entries, and if $a=d=0$ then
$A$ rescales a signature's entries and exchanges the roles of $0$ and $1$, which has the effect of
reversing the entries in the $[\cdot,\ldots,\cdot]$ notation. Thus $[0,0,1,0,0]A = [0,0,*,0,0]
\neq [a,b,1,0,0]$.

This counterexample exists due to 
the bipartiteness enforced by the definition of $\holant(\fc \mid \gc)$, because
\[
    \left\langle [0,0,1,0,0],[0,0,1,0,0]\right\rangle = \|[0,0,1,0,0]\|^2
    \neq \|[a,b,1,0,0]\|^2 = \left\langle [a,b,1,0,0],[a,b,1,0,0] \right\rangle,
\]
so $[a,b,1,0,0]$ and $[0,0,1,0,0]$ are not indistinguishable on
general (non-bipartite) signature grids. To avoid such counterexamples, we consider the
following well-known form of \autoref{thm:holant} that applies to non-bipartite grids.
An edge in an $\fc$-grid $\Omega$, viewed on its own, is a wire,
with signature $I$. Therefore we can, without changing the Holant value, replace each edge in 
$\Omega$ by a binary gadget with a single vertex assigned $I$, splitting the edge into a path
of length two. Then $\Omega$ is a signature grid in the context of
$\holant(\fc \mid I)$, so, by \autoref{thm:holant}, for any invertible $H$,
$\holant_\Omega(\fc \mid I) = \holant_{\Omega'}(H\fc \mid IH^{-1})$. If $H$ is orthogonal,
then, by \autoref{thm:fft}, $I^{0,2}H^{-1} = (HI^{2,0})^\top = (I^{2,0})^\top = I^{0,2}$, so
$\holant_\Omega(\fc \mid I) = \holant_{\Omega'}(H\fc \mid I)$. Then $\holant(H\fc \mid I)$
is again equivalent to $\holant(H\fc)$, so we have proved the following.
\begin{corollary}[The Orthogonal Holant Theorem]
    \label{cor:holanttheorem}
    If $\gc = H\fc$ for orthogonal matrix $H$, then $\holant_\Omega(\fc) = \holant_{\Omega_{\fc\to\gc}}(\gc)$ for every $\fc$-grid $\Omega$.
\end{corollary}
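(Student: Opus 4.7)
The plan is to reduce the non-bipartite Holant problem to a bipartite one to which \autoref{thm:holant} directly applies, and then to exploit the orthogonality of $H$ to show that the extra signatures introduced by this reduction are themselves unchanged by the holographic transformation.

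The first step is to notice that each edge $e$ of an $\fc$-grid $\Omega$, considered as a stand-alone gadget, is a single wire, and so has signature matrix $I \in \rr^{q \times q}$. Consequently, inserting a fresh degree-$2$ vertex assigned the binary signature $I$ into the interior of $e$ (splitting $e$ into a path of length two) does not change the Holant value. Performing this insertion on every edge of $\Omega$ yields a bipartite $(\fc \mid \{I\})$-grid $\widetilde\Omega$ with $\holant_{\widetilde\Omega}(\fc \mid I) = \holant_\Omega(\fc)$. Now \autoref{thm:holant} applied with the orthogonal matrix $H$ gives
\[
\holant_{\widetilde\Omega}(\fc \mid I) = \holant_{\widetilde\Omega'}(H\fc \mid I H^{-1}),
\]
where $\widetilde\Omega' = \widetilde\Omega_{(\fc \sqcup \{I\}) \to (H\fc \sqcup \{I H^{-1}\})}$.

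The key remaining step is to verify that $I H^{-1} = I$ as a binary signature whenever $H \in O(q)$. Viewed as a tensor, the binary signature $I$ equals $I^{2,0} = \sum_{i \in [q]} e_i \otimes e_i$, which is precisely the signature of a wire and therefore lies in $\mathcal{W}$. By \autoref{thm:fft}, this tensor is $O(q)$-invariant, and a direct calculation shows that this invariance is exactly the defining identity $H H^\top = I$ of $O(q)$. Equivalently, $(IH^{-1})^{0,2} = I^{0,2}(H^{-1})^{\otimes 2} = (H^{\otimes 2} I^{2,0})^\top = (I^{2,0})^\top = I^{0,2}$, using $(H^{-1})^\top = H$. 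Hence $\holant_{\widetilde\Omega'}(H\fc \mid I H^{-1}) = \holant_{\widetilde\Omega'}(H\fc \mid I)$.

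Finally, the degree-$2$ vertices assigned $I$ on the right-hand side of the bipartition are still just wires after the transformation, so contracting each of them recovers an $H\fc$-grid whose underlying multigraph is $\Omega_{\fc \to H\fc}$, and this contraction preserves the Holant value. Chaining the three equalities yields $\holant_\Omega(\fc) = \holant_{\Omega_{\fc \to H\fc}}(H\fc)$, as desired. There is no genuine obstacle here: the entire argument hinges on the orthogonal invariance of the wire signature $I$, which is nothing more than the defining property $H H^\top = I$ of $O(q)$; the rest is the standard trick of passing through a bipartite encoding so that \autoref{thm:holant} can be invoked.
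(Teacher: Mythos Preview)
Your proof is correct and follows essentially the same approach as the paper: insert a degree-$2$ vertex assigned $I$ on every edge to obtain a bipartite $(\fc \mid I)$-grid, apply \autoref{thm:holant}, use the $O(q)$-invariance of the wire signature (equivalently $HH^\top = I$) to conclude $IH^{-1} = I$, and then contract the $I$ vertices back. The calculation and the citation of \autoref{thm:fft} for the invariance of $I$ match the paper almost verbatim.
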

Xia \cite{xia} also considers the converse of \autoref{cor:holanttheorem}, and proves that it holds
for specific $\fc$ and $\gc$ consisting of real-valued symmetric signatures with small domain
and/or arity.
The main result of this work is the converse of \autoref{cor:holanttheorem} for any
sets $\fc$ and $\gc$ of real-valued signatures, with no restrictions.
\begin{theorem}[Main Result]
    \label{thm:result}
    Let $\fc$, $\gc$ be sets of real-valued signatures. Then the following are equivalent.
    \begin{enumerate}[label=(\roman*)]
        \item $\holant_{\Omega}(\fc) = \holant_{\Omega_{\fc\to\gc}}(\gc)$
        for every $\fc$-grid $\Omega$.
        \item There is a real orthogonal matrix $H$ such that $H\fc = \gc$.
    \end{enumerate}
\end{theorem}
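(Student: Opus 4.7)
The direction (ii) $\implies$ (i) is the Orthogonal Holant Theorem, \autoref{cor:holanttheorem}, so the content is the converse (i) $\implies$ (ii). My plan is induction on the domain size $q$, with the base case $q = 1$ handled directly since each signature is a scalar and an orthogonal $1 \times 1$ matrix is $\pm 1$. For the inductive step, I first translate the combinatorial hypothesis into algebraic data using the intertwiner-gadget duality \autoref{thm:intertwiner_gadget}, which identifies signatures of quantum $\fc$-gadgets with $\stab(\fc)$-invariant tensors. Applied to the combined set $\fc \sqcup \gc$, this duality should extract from Holant-indistinguishability a nonzero intertwiner: a matrix $D \in \rr^{q \times q}$ such that a pair of corresponding $\fc$- and $\gc$-gadgets witness $D$ as a common signature, giving an algebraic bridge between $\fc$ and $\gc$.

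Next, since Holant-indistinguishability and the intertwining relation are preserved under simultaneous orthogonal transformations of $\fc$ and $\gc$, I apply the orthogonal factors from the singular value decomposition of $D$ to reduce to the case where $D$ is diagonal with nonnegative entries. If $D = \pm I$, the intertwining condition directly yields an orthogonal transformation $H = \pm I$ between $\fc$ and $\gc$, modulo checking arity parity. Otherwise, $D$ is a nonscalar diagonal matrix whose distinct eigenspaces partition $[q]$ into blocks of strictly smaller dimension. I then invoke \autoref{lem:induction}, which adds $D$ as a binary signature to both $\fc$ and $\gc$ without breaking Holant-indistinguishability, since $D$ realizes a matching pair of $\fc$- and $\gc$-gadget signatures. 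Any orthogonal transformation between the augmented sets must fix $D$ and hence commute with it, so it is block-diagonal relative to the eigenspaces of $D$. Using spectral projectors on these eigenspaces (expressible as polynomials in $D$, and thus as gadgets), the restrictions of $\fc$ and $\gc$ to each eigenspace are Holant-indistinguishable signature sets on a strictly smaller domain. The inductive hypothesis then produces a per-block orthogonal matrix, and the block-diagonal assembly is the required $H$.

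The main obstacle I expect is the first step: extracting from Holant-indistinguishability an intertwiner $D$ that is genuinely nonzero and whose algebraic role is strong enough to drive the augment-and-restrict reduction. The natural map $M(\vk_\fc) \mapsto M(\vk_\gc)$ on binary gadget signatures preserves inner products by hypothesis, but realizing this map uniformly as the action of a single matrix $D$ requires the careful application of \autoref{thm:intertwiner_gadget} to $\fc \sqcup \gc$ rather than to $\fc$ alone, together with a nondegeneracy argument to guarantee $D \neq 0$ outside of trivial cases. A secondary subtlety is verifying that ``restrict to an eigenspace of $D$'' interacts cleanly with gadget composition, so that the restrictions form \emph{genuine} Holant-indistinguishable sets on the smaller domain to which the inductive hypothesis applies; I expect this to follow from the fact that the spectral projectors lie in the gadget-algebra generated by $D$.
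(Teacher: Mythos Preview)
Your overall inductive architecture matches the paper, but there are real gaps in exactly the two steps you flag as obstacles, and your sketches of how to fill them do not work.

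The matrix $D$ is not ``a common signature of corresponding $\fc$- and $\gc$-gadgets.'' The paper applies \autoref{thm:intertwiner_gadget} to the direct sum $\fc \oplus \gc$ on domain $[2q]$ (not to a disjoint union on $[q]$), and via \autoref{lem:offdiagblock} obtains some $H \in \stab(\fc\oplus\gc)$ with a nonzero off-diagonal block; $D$ is that block (after SVD of it and an orthogonal change of basis on each side). Its operative property is the intertwining relation $D^{\otimes m}F^{m,d} = G^{m,d}D^{\otimes d}$ for every $F \leftrightsquigarrow G$, which has nothing to do with $D$ being a gadget signature on either side. Consequently your justification for why adjoining $D$ preserves Holant-indistinguishability (``$D$ realizes a matching pair of gadget signatures'') fails: you cannot replace $D$-vertices by gadgets. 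Instead the paper isolates the $D$-vertices of an $(\fc\cup\{D\})$-grid $\Omega$ as a subgadget to write $\holant_\Omega = \tr\bigl((\bigotimes_i D^{m_i})\, F^{p,p}\bigr)$ for some $F \in \ofc$, then uses $D^{\otimes p}F^{p,p} = G^{p,p}D^{\otimes p}$ and cyclicity of trace to slide the $D$'s across to the $\gc$-side. This argument lives in the main proof, not in \autoref{lem:induction}; that lemma already \emph{assumes} corresponding copies of $D$ lie in both sets.

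Second, inside \autoref{lem:induction}, naive block-diagonal assembly of per-eigenspace orthogonal maps does not give $H\fc = \gc$: signatures of arity $\geq 2$ have mixed entries indexed by both $X$ and $Y$ that are invisible to the restrictions $\fc|_X$ and $\fc|_Y$, and there is no reason independently chosen $H_X,H_Y$ should act correctly on those entries. The paper's fix is to, after obtaining $H_X$, transform $\gc$ by $H_X^{-1}\oplus I_Y$ and then adjoin all $X$-pins $\{\Delta_b : b \in X\}$ to both sides, proving in \autoref{claim:equiv} that indistinguishability survives; after repeating for $Y$, both sides contain every pin $\Delta_b$, which forces entrywise equality and lets one unwind the transformations.
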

Call $\fc$ and $\gc$ satisfying (i) \emph{Holant-indistinguishable}, and call $\fc$ and
$\gc$ satisfying (ii) \emph{ortho-equivalent}. 
The following two properties follow directly from the definitions and \autoref{cor:holanttheorem},
but, as they will
prove useful throughout the proof of \autoref{thm:result}, we state them explicitly.
\begin{proposition}
    For any orthogonal $H_1$ and $H_2$, $\fc$ and $\gc$ are 
    ortho-equivalent/Holant-indistinguishable if and only if
    $H_1\fc$ and $H_2\gc$ are ortho-equivalent/Holant-indistinguishable, respectively.
    \label{prop:transform}
\end{proposition}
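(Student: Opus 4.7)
The plan is to prove each of the two biconditionals (for ortho-equivalence and for Holant-indistinguishability) by two applications of a closure property: the group structure of $O(q)$ for the first, and \autoref{cor:holanttheorem} for the second. In each case, by the symmetric roles of $H_1$ and $H_2$ (and the fact that $H_1^{-1}, H_2^{-1}$ are again orthogonal), the two directions of the biconditional are formally identical, so it suffices to prove one direction.

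For the ortho-equivalence statement, I would suppose $\fc$ and $\gc$ are ortho-equivalent via orthogonal $H$ with $H\fc = \gc$, and then set $H' := H_2 H H_1^{-1}$. Since the orthogonal group is closed under products and inverses, $H'$ is orthogonal, and
\[
    H'(H_1 \fc) = H_2 H H_1^{-1} H_1 \fc = H_2 H \fc = H_2 \gc,
\]
which witnesses ortho-equivalence of $H_1\fc$ and $H_2\gc$. Conversely, ortho-equivalence of $H_1\fc$ and $H_2\gc$ via orthogonal $H''$ yields ortho-equivalence of $\fc$ and $\gc$ via the orthogonal matrix $H_2^{-1} H'' H_1$.

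For the Holant-indistinguishability statement, I would apply \autoref{cor:holanttheorem} twice. Since $H_1$ is orthogonal, for every $\fc$-grid $\Omega$ we have $\holant_\Omega(\fc) = \holant_{\Omega_{\fc \to H_1\fc}}(H_1\fc)$, and analogously for $H_2$, $\gc$. The signature correspondence $F \leftrightsquigarrow H_i F$ makes $\fc$ similar to $H_1\fc$ and $\gc$ similar to $H_2\gc$, so $\Omega \mapsto \Omega_{\fc \to H_1\fc}$ is a bijection between $\fc$-grids and $H_1\fc$-grids (and likewise on the $\gc$ side), compatible with the combined correspondence $\fc \to \gc \to H_2\gc$ via $\fc \to H_1\fc \to H_2\gc$. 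Chasing these equalities, Holant-indistinguishability of $\fc$ and $\gc$ is equivalent to that of $H_1\fc$ and $H_2\gc$.

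This proposition is essentially a bookkeeping statement, recording that both notions in \autoref{thm:result} are stable under orthogonal pre-multiplication of either side; there is no real obstacle beyond carefully tracking which grid lives over which signature set when applying \autoref{cor:holanttheorem}.
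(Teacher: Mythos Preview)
Your proposal is correct and matches the paper's own treatment: the paper does not give a detailed proof but simply states that the proposition ``follows directly from the definitions and \autoref{cor:holanttheorem},'' which is precisely the route you take (orthogonal group closure for ortho-equivalence, and two applications of \autoref{cor:holanttheorem} for Holant-indistinguishability). There is nothing to add.
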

\begin{proposition}
    For any $\fc$ and $\gc$, and any additional pair $\fc'$ and $\gc'$ of signature sets,
    if $\fc \sqcup \fc'$ and $\gc \sqcup \gc'$ are ortho-equivalent (where the $\fc'$ signatures in $\fc \sqcup \fc'$ correspond to the $\gc'$ signatures in $\gc \sqcup \gc'$), then $\fc$ and $\gc$ are ortho-equivalent.
    \label{prop:union}
\end{proposition}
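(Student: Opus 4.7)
The claim is essentially a matter of unpacking the definition of ortho-equivalence once the bijection between corresponding signatures is fixed. The plan is: take the orthogonal matrix $H$ witnessing ortho-equivalence of $\fc \sqcup \fc'$ and $\gc \sqcup \gc'$, observe that by hypothesis the bijection between $\fc \sqcup \fc'$ and $\gc \sqcup \gc'$ restricts to a bijection between $\fc'$ and $\gc'$ (and hence between $\fc$ and $\gc$), and conclude that $H$ itself simultaneously sends $\fc$ to $\gc$ and $\fc'$ to $\gc'$.

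More concretely, I will argue as follows. By the definition of ortho-equivalence (property (ii) of \autoref{thm:result}) applied to $\fc \sqcup \fc'$ and $\gc \sqcup \gc'$, there is an orthogonal $H$ together with an arity-preserving bijection $\fc \sqcup \fc' \to \gc \sqcup \gc'$ such that each $F \in \fc \sqcup \fc'$ is mapped to its corresponding signature in $\gc \sqcup \gc'$ under $HF$. The hypothesis specifies that this bijection pairs the $\fc'$ part of the union with the $\gc'$ part; since the bijection is determined on $\fc'$ this forces its complement to pair the $\fc$ part with the $\gc$ part. For each $F \in \fc$ with corresponding $G \in \gc$, we then have $HF = G$, which is exactly the statement that $H\fc = \gc$ under the induced correspondence $F \leftrightsquigarrow G$. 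Hence $\fc$ and $\gc$ are ortho-equivalent via the same $H$.

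There is no substantive obstacle here; the only subtle point is the requirement, stated in the proposition's parenthetical, that the signatures of $\fc'$ inside $\fc \sqcup \fc'$ correspond to the signatures of $\gc'$ inside $\gc \sqcup \gc'$. Without this, the bijection witnessing ortho-equivalence could in principle mix $\fc'$-signatures with $\gc$-signatures, and the restriction argument would fail. With the hypothesis in place, the argument is immediate, and the same proof clearly works in the reverse direction as well (the restriction of $H$'s action is literally $H$'s action on a sub-bijection), giving a self-contained one-paragraph verification.
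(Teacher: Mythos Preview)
Your argument is correct and matches the paper's treatment: the paper does not give a separate proof of this proposition but simply remarks that it follows directly from the definitions, which is precisely what you have spelled out. Your observation about the necessity of the parenthetical hypothesis is apt and is implicit in the paper's phrasing.
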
   

\subsection{Block Matrices and Signatures}
\label{sec:block}
\begin{definition}
    \label{def:block}
    Let $\ic$ be an index/domain set, and $X \sqcup Y = \ic$ be a nontrivial partition of $\ic$.
    \begin{enumerate}
    \item
    For a matrix $H \in \rr^{\ic \times \ic}$ and $R,C \in \{X,Y\}$, let $H|_{R,C} \in \rr^{R\times C}$ be the
    submatrix of $H$ with rows indexed by $R$ and columns indexed by $C$. Up to row and
    column reordering, $H$ is the block matrix
    \[
        H = \begin{bmatrix} H|_{X,X} & H|_{X,Y} \\ H|_{Y,X} & H|_{Y,Y}\end{bmatrix}.
    \]
    \item
    More generally, for a signature/tensor $F \in \rr^{\ic^n}$ and $\vr \in \{X,Y\}^n$, define
    $F|_{\vr} \in \rr^{\vr}$ (where we identify $\vr$ with the set $\prod_{i=1}^n R_i$) to be the subtensor of $F$ with $i$th
    input restricted to $R_i$.
    For signature set $\fc$, let $\fc|_{\vr} := \{F|_{\vr}: F \in \fc\}$.

    Abbreviate $F|_X := F|_{X^n}$ and $\fc|_X := \fc|_{X^n}$.
    \item 
    Let $F^{m,d} \in \rr^{\ic^m \times \ic^d}$ be a signature matrix, and let 
    $\vr \in \{X,Y\}^m$ and $\vc \in \{X,Y\}^d$. Define 
    $F^{m,d}|_{\vr,\vc} \in \rr^{\vr \times \vc}$
    as the submatrix of $F^{m,d}$ with rows indexed by $\vr$ and columns indexed by
    $\vc$ (in other words, $F^{m,d}|_{\vr,\vc} = (F|_{\vr\vc})^{m,d}$, where
    $\vr\vc$ is the concatenation of $\vr$ and $\vc$).
    \end{enumerate}
\end{definition}

\begin{definition}[$\oplus$]
\label{def:oplus-f-g}
    Let $F,G$ be $n$-ary signatures on domains $V(F)$, $V(G)$, both of size $q$.
    The \emph{direct sum} $F \oplus G$ of $F$ and $G$ is an
    $n$-ary signature on domain $V(F) \sqcup V(G)$ defined by
    \[
        (F \oplus G)_{\vx} = \begin{cases} 
            F_{\vx} & \vx \in V(F)^n \\
            G_{\vx} & \vx \in V(G)^n \\
            0 & \text{otherwise}
        \end{cases}
        \qquad\text{for } \mathbf{x} \in (V(F) \sqcup V(G))^n.
    \]
    For signature sets $\fc$ and $\gc$, define 
    $\fc \oplus \gc = \{F \oplus G \mid \fc \ni F \leftrightsquigarrow G \in \gc\}$.
\end{definition}
We will frequently apply \autoref{def:block} to the partition $V(\fc) \sqcup V(\gc)$ of the domain
of $\fc\oplus\gc$.
It follows directly from the definitions that, for $\fc \ni F \leftrightsquigarrow G \in \gc$ and any 
$m+d = \arity(F)$,
\begin{equation}
    (F \oplus G)^{m,d}|_{\vr,\vc} = 
    \begin{cases}
        F^{m,d} & \vr = V(\fc)^m \wedge \vc = V(\fc)^d \\ G^{m,d} & \vr = V(\gc)^m \wedge \vc = V(\gc)^d \\ 0 & \text{otherwise}.
    \end{cases}
    \label{eq:oplus_index}
\end{equation}
\autoref{sec:appendix_block} states and proves some additional necessary results involving 
actions of block matrices on block signatures.
These additional results confirm that these block structures interact as
one would expect, analogously to ordinary block matrix algebra.

\section{Quantum Gadget Duality}
\label{sec:duality}
In this section, we prove a duality (\autoref{thm:intertwiner_gadget}) between tensors invariant
under the action of $\stab(\fc)$ and the signatures of $\fc$-gadgets. 
\autoref{thm:intertwiner_gadget} and its proof are extensions of Regts' 
\cite[Theorem 3]{regts_rank_2012}
from edge-coloring models to general signature sets $\fc$, largely translated into the language of
signature matrices.

Say $\fc \subset T(\rr^q)$ is a \emph{graded subalgebra} of $T(\rr^q)$ if 
$\fc \cap (\rr^q)^{\otimes n}$ (the signatures of arity $n$) 
is a vector space over $\rr$ (closed under $\rr$-linear combinations) and
$\fc$ is closed under $\otimes$.
Say $\fc$ is \emph{contraction-closed} if, for every $n$-ary $F \in \fc$,
the $(n-2)$-ary signature resulting from \emph{contracting} any two inputs of $F$ 
(connecting the corresponding dangling edges of the $n$-ary gadget consisting of a single vertex
assigned $F$) is also in $\fc$.
The following theorem, due to Schrijver, is the key connection between
algebra and combinatorics underlying
our proof of \autoref{thm:intertwiner_gadget} and, eventually, \autoref{thm:result}.
\begin{theorem}[{\cite[{Corollary 1e}]{schrijver_tensor_2008}}]
    Let $\fc \subset T(\rr^q)$. Then there is a subgroup $Q \subset O(q)$ with $\fc = T(\rr^q)^Q$
    if and only if $\fc$ is a contraction-closed graded subalgebra of $T(\rr^q)$ containing $I$.
    \label{thm:duality}
\end{theorem}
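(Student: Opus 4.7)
The proof splits into a routine direction and a substantial one. The forward direction ($\Rightarrow$) I would dispatch by direct verification that, for any subgroup $Q \subset O(q)$, the set $T(\rr^q)^Q$ is a contraction-closed graded subalgebra containing $I$: closure under $\rr$-linear combinations and under $\otimes$ follow from linearity of the $Q$-action and the identity $H^{\otimes(m+n)} = H^{\otimes m} \otimes H^{\otimes n}$; invariance of $I$ uses $HIH^{\top}=I$ for orthogonal $H$; and contraction-closure follows because contracting two inputs of $F$ amounts to pairing them with $I$, an invariant tensor, which preserves invariance.

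For the converse ($\Leftarrow$), the natural candidate is $Q := \stab(\fc) \subset O(q)$, which is closed (being the common zero set of the continuous maps $H \mapsto HF - F$ for $F$ ranging over $\fc$), hence a compact Lie group. The inclusion $\fc \subseteq T(\rr^q)^{Q}$ is immediate from the definitions. For the nontrivial inclusion $T(\rr^q)^Q \subseteq \fc$, my plan is a separation argument combined with Reynolds averaging. Suppose for contradiction that some $F \in T(\rr^q)^Q \setminus \fc$ has arity $n$. Since $\fc \cap (\rr^q)^{\otimes n}$ is a finite-dimensional subspace of $(\rr^q)^{\otimes n}$, there exists $G \in (\rr^q)^{\otimes n}$ orthogonal to it with $\langle F, G\rangle \neq 0$. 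Averaging $G$ over the compact group $Q$ yields a $Q$-invariant tensor $\overline{G}$ satisfying $\langle F, \overline{G}\rangle = \langle F, G\rangle \neq 0$ (since $F$ is $Q$-fixed), and the contradiction would close if I could show $\overline{G} \in \fc$.

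The main obstacle is exactly this final step ``$\overline{G} \in \fc$,'' which requires genuine invariant-theoretic input beyond elementary separation. My plan is to exploit the reductivity of $O(q)$ and the semisimplicity of its action on each $(\rr^q)^{\otimes n}$: I would decompose this space into $Q$-isotypic components and use the closure properties of $\fc$ (graded subalgebra, contraction-closed, containing $I$) to argue that $\fc \cap (\rr^q)^{\otimes n}$ must contain every $Q$-invariant isotypic component --- essentially because products and contractions with powers of $I$ generate all the natural $Q$-invariant constructions consistent with $\fc$. This representation-theoretic step is the technical heart of Schrijver's original proof and the place where my plan relies most heavily on importing standard machinery from the invariant theory of the orthogonal group rather than on the gadget-based techniques developed in this paper.
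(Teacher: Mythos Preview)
The paper does not supply its own proof of this theorem; it is quoted as Schrijver's result (Corollary~1e of \cite{schrijver_tensor_2008}), with only the remark that Schrijver's ``nondegenerate'' hypothesis may be weakened to $I \in \fc$. So there is no in-paper argument to compare against; I can only assess your proposal on its own terms.

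Your forward direction is fine. Your backward direction correctly identifies the natural candidate $Q = \stab(\fc)$ and the correct target inclusion $T(\rr^q)^Q \subseteq \fc$. But the argument has a genuine gap at exactly the point you flag. The Reynolds-averaged tensor $\overline{G}$ is $Q$-invariant by construction, so $\overline{G} \in T(\rr^q)^Q$; asserting $\overline{G} \in \fc$ is therefore logically equivalent to the inclusion you are trying to prove. Your plan to ``decompose into $Q$-isotypic components and argue that $\fc \cap (\rr^q)^{\otimes n}$ contains every $Q$-invariant isotypic component'' is a restatement of the goal, not a method for reaching it: the $Q$-invariant isotypic component of $(\rr^q)^{\otimes n}$ \emph{is} $(T(\rr^q)^Q) \cap (\rr^q)^{\otimes n}$. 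Nothing in the closure properties (graded, $\otimes$-closed, contraction-closed, $I \in \fc$) is invoked in a way that breaks this circularity.

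What is actually needed --- and what Schrijver supplies --- is an argument that uses those closure properties in an essential, structural way, typically via algebraic-group methods (Zariski closure of $\stab(\fc)$, the fact that contraction-closed graded subalgebras correspond to ideals in a coordinate ring, etc.) rather than compact-group averaging alone. Your proposal recognises that this is ``the technical heart'' but does not provide or even sketch the mechanism; as written, it is an outline that stops short of the one nontrivial step.
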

Schrijver's statement of \autoref{thm:duality} requires that
$\fc$ be ``nondegenerate'', but it can be seen from the proof that
it suffices to assume that $\fc$ contains $I$ (cf. \cite[Theorem 4]{regts_rank_2012}).

\begin{definition}[$M(\fc),M(T(\rr^q))$]
    Let $\fc \subset T(\rr^q)$ be a signature set. Define 
    \[
        M(\fc) := \bigcup_{F \in \fc}\left(\bigcup_{\substack{m,d \geq 0 \\ m+d=\arity(F)}} 
        F^{m,d}\right)
    \]
    to be the set of all flattenings of all tensors in $\fc$. In particular,
    \[
        M(T(\rr^q)) = \bigcup_{m,d \geq 0} \rr^{q^m\times q^d}.
    \]
\end{definition}
For a subset $P \subset M(T(\rr^q))$, define $\tcwd{P} \subset M(T(\rr^q))$ to be the set generated
by $P$ under $\circ, \otimes, \top$, and $\rr$-linear combinations of matrices with matching
dimensions.

\begin{definition}[$S_\sigma, S$]
    For permutation $\sigma \in S_n$, define the $2n$-ary \emph{braid} signature $S_\sigma \in \mathcal{W}$ 
    by $(S_\sigma)_{\vx\vy} = 1$ iff $x_i = y_{\sigma(i)}$ for every $i \in [n]$, and 
    $(S_\sigma)_{\vx\vy} = 0$ otherwise.

    Define the 4-ary `swap' signature $S := S_{(01)}$. See \autoref{fig:braid}
\end{definition}
\begin{figure}[ht!]
    \centering
    \begin{tikzpicture}[scale=.78]
    \tikzstyle{every node}=[font=\small]
    \GraphInit[vstyle=Classic]
    \SetUpEdge[style=-]
    \SetVertexMath

    \def\ysh{0.5}
    \def\xsh{4}
    \def\slen{2}

    \node at (\slen/2, 5*\ysh) {$S^{4,4}_{(1 4 3 2)}$};
    \foreach \ll/\rr in {4/1,3/4,2/3,1/2} {
        \draw[thin, color=gray] (0,\ll*\ysh) .. controls
        +(\slen/3,0) and +(-\slen/3,0)..
        (\slen,\rr*\ysh);
    }

    \node at (\xsh+\slen/2, 5*\ysh) {$S^{4,4}_{(1 2 4 3)}$};
    \foreach \ll/\rr in {4/3,3/1,2/4,1/2} {
        \draw[thin, color=gray] (\xsh,\ll*\ysh) .. controls
        +(\slen/3,0) and +(-\slen/3,0)..
        (\xsh+\slen,\rr*\ysh);
    }

    \node at (2*\xsh+\slen/2, 5*\ysh) {$S^{2,2}$};
    \foreach \ll/\rr in {1/4,4/1} {
        \draw[thin, color=gray] (2*\xsh,\ll*\ysh) .. controls
        +(\slen/2,0) and +(-\slen/2,0)..
        (2*\xsh+\slen,\rr*\ysh);
    }

\end{tikzpicture}
    \caption{Braid gadgets and their signature matrices.}
    \label{fig:braid}
\end{figure}
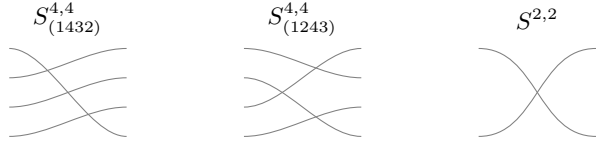
Every permutation is a product of adjacent transpositions, so
$S_\sigma^{n,n} \in \tcwdn{I,S^{2,2}}$ \cite[Lemma 3]{young2022equality}.
The following is a central object of study in the works of
Man\v{c}inska, Roberson, Cai and Young \cite{planar, cai_planar_2023, young2022equality}. 
\begin{definition}
\label{def:tcwd}
For $C \subset M(T(\rr^q))$, define $C(m,d) := C \cap \rr^{q^m\times q^d}$.
$C$ is a \emph{tensor category with duals} (TCWD) if it satisfies the following properties:
\begin{enumerate}[label=(\roman*)]
    \item For fixed $m$ and $d$, $C(m,d)$ is a vector space over $\rr$,
    \item $C$ is closed under $\circ, \otimes, \top$,
    \item $I \in C(1,1)$,
    \item $I^{0,2} \in C(0,2)$.
\end{enumerate}
$C$ is a \emph{symmetric} tensor category with duals if it also satisfies
\begin{enumerate}[label=(\roman*),start=5]
    \item $S^{2,2} \in C(2,2)$.
\end{enumerate}
\end{definition}

We next translate \autoref{thm:duality} into the language of TCWDs by flattening
$\fc$. 
\begin{lemma}
    \label{lem:tcwd}
    A subset $\fc \subset T(\rr^q)$ is a contraction-closed graded subalgebra of $T(\rr^q)$ containing
    $I$ if and only if $M(\fc)$ is a symmetric tensor category with duals.
\end{lemma}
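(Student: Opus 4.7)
My plan is to verify each property in each direction, using the correspondence between the gadget operations $\circ, \otimes, \top, S$ on signature matrices and the algebraic operations (tensor product, contraction, input permutation) on the underlying tensors. Since a tensor $F$ is uniquely determined by any one of its flattenings $F^{m,d}$, membership of a flattening in $M(\fc)$ is equivalent to membership of the underlying tensor in $\fc$.

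For the forward direction, properties (i), (iii), (iv) are immediate: (i) from the linearity of flattening, and (iii), (iv) from the $(1,1)$- and $(0,2)$-flattenings of $I \in \fc$ (viewed as $=_2$). Property (ii) splits into three pieces: $\circ$ corresponds to pairwise contractions of $F \otimes G$ (handled by $\fc$'s closure under $\otimes$ and contraction), $\otimes$ corresponds to tensor product with inputs interleaved rather than concatenated, and $\top$ corresponds to reversing the cyclic input order of $F$; property (v) recognizes $S$ as $=_2 \otimes =_2$ with two inputs swapped. Thus the $\otimes$ and $\top$ parts of (ii), together with (v), all require closure of $\fc$ under input permutations, which is not directly part of the hypothesis. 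I would handle this by invoking \autoref{thm:duality}: the hypothesis gives $\fc = T(\rr^q)^Q$ for some $Q \subseteq O(q)$, and since the diagonal action $F \mapsto H^{\otimes n} F$ commutes with permutations of tensor coordinates, any input-permuted version of a $Q$-invariant tensor remains $Q$-invariant and hence lies in $\fc$.

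For the backward direction, the graded vector space structure and $I \in \fc$ follow from (i) and (iii) by reading off the signature vector from the $(n,0)$-flattening, and $\otimes$-closure follows from (ii) via the identity $F^{n,0} \otimes G^{n',0} = (F \otimes G)^{n+n',0}$ (no input reordering is needed for pure-left flattenings, so the image in $M(\fc)$ directly witnesses $F \otimes G \in \fc$). The principal construction is contraction-closure: given $F \in \fc$ of arity $n$ and two inputs to contract, I would first use tensor products $I^{\otimes a} \otimes S^{2,2} \otimes I^{\otimes b}$ (in $M(\fc)$ by (ii), (iii), (v)) and their $\circ$-composites to build a permutation matrix $P$ in $M(\fc)$ that moves the two chosen inputs into the final positions of a suitable right-heavy flattening $F^{m,d}$; then composing $F^{m,d} \circ P$ with $I^{2,0}$ (available from (iv) via $\top$-closure) effects the pairwise contraction and lands the $(m,0)$-flattening of the contracted tensor in $M(\fc)$, proving the contracted tensor lies in $\fc$. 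The main obstacle throughout is the role of input permutations in the forward direction, which \autoref{thm:duality} neatly bypasses by identifying $\fc$ with the invariant tensors $T(\rr^q)^Q$ of a subgroup of $O(q)$.
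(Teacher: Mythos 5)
Your proposal is correct, and the backward direction matches the paper's approach: build the permutation matrix $S_\sigma^{n,n}$ from $I$ and $S^{2,2}$, use it to reorder the inputs of $F^{n,0}$ so that the two targeted inputs are adjacent, then cap them with $I^{0,2}$ (or its transpose) to effect the contraction, landing a flattening of the contracted tensor in $M(\fc)$. The genuine divergence is in the forward direction, where you obtain closure of $\fc$ under input permutations (needed for the $\otimes$ and $\top$ parts of condition (ii) and for condition (v)) by invoking \autoref{thm:duality}: the hypothesis gives $\fc = T(\rr^q)^Q$ for some $Q \subseteq O(q)$, and since $H^{\otimes n}$ commutes with permuting tensor coordinates, $T(\rr^q)^Q$ is permutation-closed. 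This is logically valid --- \autoref{thm:duality} is an external result of Schrijver, so no circularity arises --- but it is much stronger machinery than the lemma needs. The paper derives permutation-closure directly from the stated hypotheses via an elementary observation of Schrijver: to apply $\sigma \in S_n$ to the inputs of $n$-ary $F \in \fc$, form $F \otimes I^{\otimes n} \in \fc$ and contract the $i$th input of $F$ against one input of the $\sigma(i)$th copy of $I$, using only closure under $\otimes$, closure under contraction, and $I \in \fc$. In the same spirit, the paper obtains $S$ by contracting $I^{\otimes 4}$ directly rather than by permuting $I \otimes I$. Keeping \autoref{lem:tcwd} independent of \autoref{thm:duality} means the two can be combined cleanly in \autoref{thm:intertwiner_gadget}; your route buys a shorter forward direction at the cost of resting the translation lemma on the very duality theorem it is meant to interface with.
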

\begin{proof}
    ($\Longrightarrow$): Let $\fc$ be a contraction-closed graded subalgebra of $T(\rr^q)$ containing $I$. 
    If $F,G \in \fc$ then the tensor/signature $F \otimes G$ has signature vector $f \otimes g$.
    Also, as observed by Schrijver \cite{schrijver_tensor_2008}, 
    for any $n$-ary $F \in \fc$, we can apply any permutation $\sigma \in S_n$ to the inputs
    of $F$ by constructing $F \otimes I^{\otimes n}$, then contracting the $i$th input of $F$ with
    one input of the $\sigma(i)$th copy of $I$.

    First, $I,I^{0,2} \in M(\fc)$ because $I \in \fc$. We can construct $S \in \fc$ by contracting 
    inputs to the
    first and third copies of $I$ and second and fourth copies of $I$ in $I^{\otimes 4} \in \fc$,
    so $S^{2,2} \in M(\fc)$. Item (i) of \autoref{def:tcwd} holds because $\fc$ is graded.
    Finally, we show that $M(\fc)$ satisfies item (ii). See \autoref{fig:ops_contraction} (a),(b),(c),
    respectively.
    \begin{itemize}
        \item $\circ$: Let $F^{m,d}, G^{d,k} \in M(\fc)$. Construct the signature $K \in \fc$
        by starting with $F \otimes G \in \fc$ and
        contracting the $d$ lowest-indexed inputs of $F$ (the right/column inputs in 
        $F^{m,d}$)
       with the $d$ highest-indexed inputs of $G$, in reverse order. Then $F^{m,d}G^{d,k} = K^{m,k} \in
        M(\fc)$.
        \item $\otimes$: Let $F^{m,d}, G^{m',d'} \in M(\fc)$. Construct the signature $K \in \fc$ by starting
        with $F \otimes G \in \fc$ and reordering the inputs of $K$ as follows: the first $m$ inputs
        to $F$, then all inputs to $G$, then the final $d$ inputs to $F$.
        Then $F^{m,d} \otimes G^{m',d'} = K^{m+m',d'+d} \in M(\fc)$.
        \item $\top$: Let $F^{m,d} \in M(\fc)$. Construct the signature $K \in \fc$ by reversing the
            input order of $F \in \fc$. Then $(F^{m,d})^\top = K^{d,m} \in M(\fc)$.
    \end{itemize}
    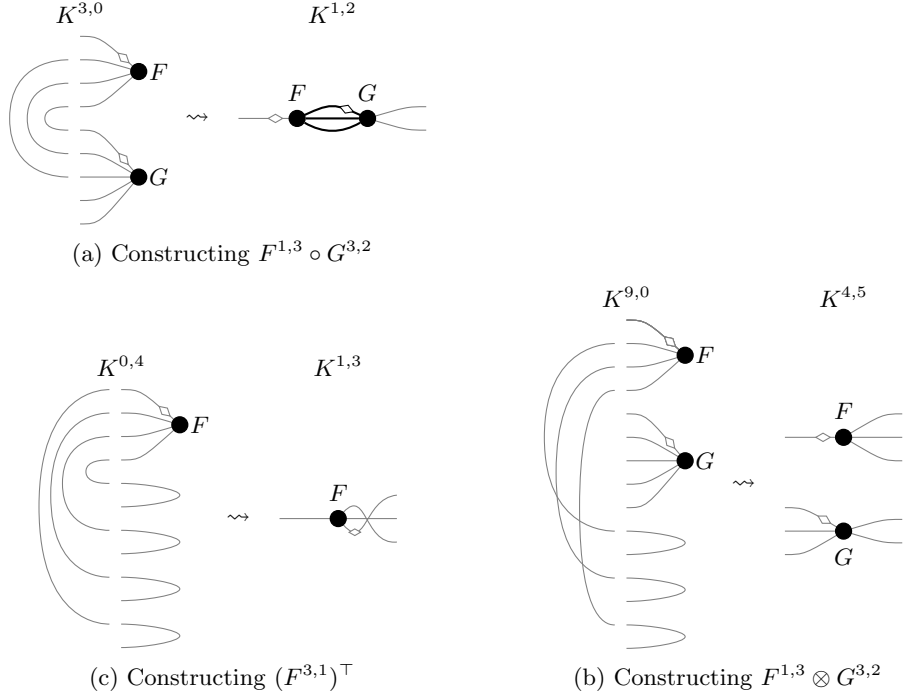
\begin{figure}[ht!]
        \begin{center}
        \begin{subfigure}{0.4\textwidth}
            \begin{subfigure}{1\textwidth}
                \centering
                \begin{tikzpicture}[scale=.78]
    \tikzstyle{every node}=[font=\small]
    \GraphInit[vstyle=Classic]
    \SetUpEdge[style=-]
    \SetVertexMath

    \def\ysh{0.4}
    \def\vx{0}
    \def\xf{2.7}
    \def\xg{3.9}
    \def\wlen{1}
    \def\wgap{0.2}

    \draw[thin, color=gray] (\vx-\wlen, 4*\ysh) .. controls (\vx-\wlen+\wlen/3,4*\ysh) .. (\vx,2*\ysh) node[draw, fill=white, kite, kite vertex angles = 120, minimum size = 3pt, inner sep = 1pt, pos=0.85, sloped] {};
    \foreach \y in {0,1,2,3} {
        \draw[thin, color=gray] (\vx-\wlen, \y*\ysh) .. controls (\vx-\wlen+\wlen/3,\y*\ysh) .. (\vx,2*\ysh);
    }
    \draw[thin, color=gray] (\vx-\wlen, 8*\ysh) .. controls (\vx-\wlen+\wlen/3,8*\ysh) .. (\vx,6.5*\ysh) node[draw, fill=white, kite, kite vertex angles = 120, minimum size = 3pt, inner sep = 1pt, pos=0.85, sloped] {};
    \foreach \y in {5,6,7} {
        \draw[thin, color=gray] (\vx-\wlen, \y*\ysh) .. controls (\vx-\wlen+\wlen/3,\y*\ysh) .. (\vx,6.5*\ysh); 
    }

    \Vertex[x=\vx,y=2*\ysh,L={G},Ldist=-0.1cm]{g}
    \Vertex[x=\vx,y=6.5*\ysh,L={F},Ldist=-0.1cm]{f}

    \draw[thin, color=gray] (\vx-\wlen-\wgap,4*\ysh) .. controls
    (\vx-\wlen-\wgap-0.1,4*\ysh) and (\vx-\wlen-\wgap-0.4,4*\ysh) ..
    (\vx-\wlen-\wgap-0.4,4.5*\ysh) .. controls
    (\vx-\wlen-\wgap-0.4,5*\ysh) and (\vx-\wlen-\wgap-0.1,5*\ysh) ..
    (\vx-\wlen-\wgap,5*\ysh);

    \draw[thin, color=gray] (\vx-\wlen-\wgap,3*\ysh) .. controls
    (\vx-\wlen-\wgap-0.1,3*\ysh) and (\vx-\wlen-\wgap-0.7,3*\ysh) ..
    (\vx-\wlen-\wgap-0.7,4.5*\ysh) .. controls
    (\vx-\wlen-\wgap-0.7,6*\ysh) and (\vx-\wlen-\wgap-0.1,6*\ysh) ..
    (\vx-\wlen-\wgap,6*\ysh);

    \draw[thin, color=gray] (\vx-\wlen-\wgap,2*\ysh) .. controls
    (\vx-\wlen-\wgap-0.1,2*\ysh) and (\vx-\wlen-\wgap-1,2*\ysh) ..
    (\vx-\wlen-\wgap-1,4.5*\ysh) .. controls
    (\vx-\wlen-\wgap-1,7*\ysh) and (\vx-\wlen-\wgap-0.1,7*\ysh) ..
    (\vx-\wlen-\wgap,7*\ysh);

    \node at (\vx+1,4.5*\ysh) {$\rightsquigarrow$};

    \node at ({(\xf+\xg)/2},9*\ysh) {$K^{1,2}$};
    \node at (\vx-\wlen,9.*\ysh) {$K^{3,0}$};

    \draw[thin, color=gray] (\xf-\wlen,4.5*\ysh) -- (\xf,4.5*\ysh) node[draw, fill=white, kite, kite vertex angles = 120, minimum size = 3pt, inner sep = 1pt, pos=0.63, sloped] {};
    \draw[thin, color=gray] (\xg,4.5*\ysh) .. controls (\xg+2*\wlen/3,5*\ysh) .. (\xg+\wlen,5*\ysh);
    \draw[thin, color=gray] (\xg,4.5*\ysh) .. controls (\xg+2*\wlen/3,4*\ysh) .. (\xg+\wlen,4*\ysh);
    \Vertex[x=\xf,y=4.5*\ysh,L={F},Lpos=90]{ff}
    \Vertex[x=\xg,y=4.5*\ysh,L={G},Lpos=90]{gg}

    \draw[thick] (\xf,4.5*\ysh) .. controls ({(\xf+\xg)/2},5.2*\ysh) .. (\xg,4.5*\ysh) node[draw, thin, 
    fill=white, kite, kite vertex angles = 120, minimum size = 2pt, inner sep = 1pt, pos=0.75,sloped] at (0,0) {};
    \Edge[style={bend right}](ff)(gg)
    \Edge(ff)(gg)
\end{tikzpicture}
                \caption*{(a) Constructing $F^{1,3}\circ G^{3,2}$}
            \end{subfigure}

            \vspace{1cm}
            \begin{subfigure}{1\textwidth}
                \centering
                \begin{tikzpicture}[scale=.78]
    \tikzstyle{every node}=[font=\small]
    \GraphInit[vstyle=Classic]
    \SetUpEdge[style=-]
    \SetVertexMath

    \def\ysh{0.4}
    \def\vx{0}
    \def\xf{2.7}
    \def\wlen{1}
    \def\wgap{0.2}

    \draw[thin, color=gray] (\vx-\wlen, 5.5*\ysh) .. controls (\vx-\wlen+\wlen/3,5.5*\ysh) .. (\vx,4*\ysh) node[draw, fill=white, kite, kite vertex angles = 120, minimum size = 3pt, inner sep = 1pt, pos=0.85, sloped] {};
    \foreach \y in {2.5,3.5,4.5} {
        \draw[thin, color=gray] (\vx-\wlen, \y*\ysh) .. controls (\vx-\wlen+\wlen/3,\y*\ysh) .. (\vx,4*\ysh);
    }
    \foreach \y in {1.5,-0.5,-2.5,-4.5} {
        \draw[thin, color=gray] (\vx-\wlen,\y*\ysh) .. controls
        +(0.4,0) and +(0,0.1) ..
        (\vx,{(\y-0.5)*\ysh}) .. controls
        +(0,-0.1) and +(0.4,0) ..
        (\vx-\wlen,{(\y-1)*\ysh});
    }

    \Vertex[x=\vx,y=4*\ysh,L={F},Ldist=-0.1cm]{f}

    \foreach \high/\low/\m in {2.5/1.5/0.4,3.5/-0.5/0.8,4.5/-2.5/1.0,5.5/-4.5/1.2} {
        \draw[thin, color=gray] (\vx-\wlen-\wgap,\high*\ysh) .. controls
        (\vx-\wlen-\wgap-0.1,\high*\ysh) and (\vx-\wlen-\wgap-\m,\high*\ysh) ..
        (\vx-\wlen-\wgap-\m,{(\high+\low)/2*\ysh}) .. controls
        (\vx-\wlen-\wgap-\m,\low*\ysh) and (\vx-\wlen-\wgap-0.1,\low*\ysh) ..
        (\vx-\wlen-\wgap,\low*\ysh);
    }
    \node at (\vx+1,0) {$\rightsquigarrow$};

    \node at (\xf,6.5*\ysh) {$K^{1,3}$};
    \node at (\vx-\wlen,6.5*\ysh) {$K^{0,4}$};

    \draw[thin, color=gray] (\xf-\wlen,0) -- (\xf,0);
    \draw[thin, color=gray] (\xf+\wlen,0) -- (\xf,0);
    \draw[thin, color=gray] (\xf,0) .. controls
    (\xf+\wlen/2,1.7*\ysh) and (\xf+\wlen-0.6,-\ysh-0.06) .. 
    (\xf+\wlen,-\ysh);
    \draw[thin, color=gray] (\xf,0) .. controls
    (\xf+\wlen/2,-1.7*\ysh) and (\xf+\wlen-0.6,\ysh-0.06) .. 
    (\xf+\wlen,\ysh)
    node[draw, fill=white, kite, kite vertex angles = 120, minimum size = 2pt, inner sep = 0.8pt, pos=0.25, sloped] {};

    \Vertex[x=\xf,y=0,L={F},Lpos=90]{f}
\end{tikzpicture}
                \caption*{(c) Constructing $(F^{3,1})^\top$}
            \end{subfigure}
        \end{subfigure}
        \begin{subfigure}{0.4\textwidth}
            \centering
            \begin{tikzpicture}[scale=.78]
    \tikzstyle{every node}=[font=\small]
    \GraphInit[vstyle=Classic]
    \SetUpEdge[style=-]
    \SetVertexMath

    \def\ysh{0.4}
    \def\vx{0}
    \def\xf{2.7}
    \def\wlen{1}
    \def\wgap{0.2}

    \draw[thin, color=gray] (\vx-\wlen, 3*\ysh) .. controls (\vx-\wlen+\wlen/3,3*\ysh) .. (\vx,1*\ysh) node[draw, fill=white, kite, kite vertex angles = 120, minimum size = 3pt, inner sep = 1pt, pos=0.85, sloped] {};
    \foreach \y in {-1,...,2} {
    \draw[thin, color=gray] (\vx-\wlen, \y*\ysh) .. controls (\vx-\wlen+\wlen/3,\y*\ysh) .. (\vx,1*\ysh);
    \draw[thin, color=gray] (\vx-\wlen, 7*\ysh) .. controls (\vx-\wlen+\wlen/3,7*\ysh) .. (\vx,5.5*\ysh) node[draw, fill=white, kite, kite vertex angles = 120, minimum size = 3pt, inner sep = 1pt, pos=0.85, sloped] {};
    }
    \foreach \y in {4,...,6} {
        \draw[thin, color=gray] (\vx-\wlen, \y*\ysh) .. controls (\vx-\wlen+\wlen/3,\y*\ysh) .. (\vx,5.5*\ysh);
    }
    \foreach \y in {-2,-4,-6} {
        \draw[thin, color=gray] (\vx-\wlen,\y*\ysh) .. controls
        +(0.4,0) and +(0,0.1) ..
        (\vx,{(\y-0.5)*\ysh}) .. controls
        +(0,-0.1) and +(0.4,0) ..
        (\vx-\wlen,{(\y-1)*\ysh});
    }

    \Vertex[x=\vx,y=1*\ysh,L={G},Ldist=-0.1cm]{g}
    \Vertex[x=\vx,y=5.5*\ysh,L={F},Ldist=-0.1cm]{f}

    \foreach \high/\low/\m in {4/-6/0.6,5/-4/1.0,6/-2/1.2} {
        \draw[thin, color=gray] (\vx-\wlen-\wgap,\high*\ysh) .. controls
        (\vx-\wlen-\wgap-0.1,\high*\ysh) and (\vx-\wlen-\wgap-\m,\high*\ysh) ..
        (\vx-\wlen-\wgap-\m,{(\high+\low)/2*\ysh}) .. controls
        (\vx-\wlen-\wgap-\m,\low*\ysh) and (\vx-\wlen-\wgap-0.1,\low*\ysh) ..
        (\vx-\wlen-\wgap,\low*\ysh);
    }
    \node at (\vx+1,0) {$\rightsquigarrow$};

    \node at (\xf,8*\ysh) {$K^{4,5}$};
    \node at (\vx-\wlen,8*\ysh) {$K^{9,0}$};

    \draw[thin, color=gray] (\xf-\wlen,2*\ysh) -- (\xf,2*\ysh) node[draw, fill=white, kite, kite vertex angles = 120, minimum size = 3pt, inner sep = 1pt, pos=0.65, sloped] {};
    \foreach \y in {1,2,3} {
        \draw[thin, color=gray] (\xf,2*\ysh) .. controls (\xf+2*\wlen/3,\y*\ysh) .. (\xf+\wlen,\y*\ysh);
    }

    \draw[thin, color=gray] (\xf,-2*\ysh) .. controls (\xf-2*\wlen/3,-1*\ysh) .. (\xf-\wlen,-1*\ysh) node[draw, fill=white, kite, kite vertex angles = 120, minimum size = 3pt, inner sep = 1pt, pos=0.2, sloped] {};
    \foreach \y in {-2,-3} {
        \draw[thin, color=gray] (\xf,-2*\ysh) .. controls (\xf-2*\wlen/3,\y*\ysh) .. (\xf-\wlen,\y*\ysh);
    }
    \foreach \y in {-1.5,-2.5} {
        \draw[thin, color=gray] (\xf,-2*\ysh) .. controls (\xf+2*\wlen/3,\y*\ysh) .. (\xf+\wlen,\y*\ysh);
    }
    \Vertex[x=\xf,y=2*\ysh,L={F},Lpos=90]{ff}
    \Vertex[x=\xf,y=-2*\ysh,L={G},Lpos=270]{gg}
\end{tikzpicture}
            \caption*{(b) Constructing $F^{1,3}\otimes G^{3,2}$}
        \end{subfigure}
        \end{center}
        \caption{Realizing signature matrix $\circ,\otimes,\top$ using contractions,
            represented by a wire between two dangling edges. Signature inputs are in
        counterclockwise order, with a diamond indicating the first input.}
        \label{fig:ops_contraction}
    \end{figure} 
    ($\Longleftarrow$): Let $M(\fc)$ be a symmetric TCWD, so $S,I \in \fc$ and
    $\fc$ is a subalgebra of $T(\rr^q)$ because $M(\fc)$ is closed under $\otimes$ 
    (and $f \otimes g = (F \otimes G)^{n,0}$)
    and is graded by part (i) of \autoref{def:tcwd}. 
    Recall that $S_\sigma$ is constructible from $I$ and $S$, so $S_{\sigma}^{n,n} \in M(\fc)$. 
    Let $F \in \fc$ have arity $n$.
    To contract inputs $i$ and $j$ of $F$, define $\sigma \in S_n$ to move inputs $i$
    and $j$ to positions 1 and 2, and shift all other inputs down while maintaining their order.
    Then connect inputs $i$ and $j$ using $I^{0,2}$: 
    \begin{equation}
        (I^{0,2} \otimes I^{\otimes n-2}) \circ S_{\sigma}^{n,n} \circ f
        \in M(\fc)
        \label{eq:contraction}
    \end{equation}
    is the signature vector of the $ij$-contraction of $F$. See \autoref{fig:contraction}.
    \begin{figure}[ht!]
        \centering
        \begin{tikzpicture}[scale=.78]
    \tikzstyle{every node}=[font=\small]
    \GraphInit[vstyle=Classic]
    \SetUpEdge[style=-]
    \SetVertexMath

    \def\ysh{0.5}
    \def\vx{0}
    \def\xsh{4}
    \def\wlen{1.3}
    \def\slen{2}
    \def\wgap{0.4}

    \foreach \y in {1,...,5} {
        \draw[thin, color=gray] (\vx-\wlen, \y*\ysh) .. controls (\vx-\wlen+\wlen/3,\y*\ysh) .. (\vx,3*\ysh);
    }

    \Vertex[x=\vx,y=3*\ysh,L={F},Ldist=-0.1cm]{f}

    \foreach \ll/\rr in {5/4,4/1,3/5,2/3,1/2} {
        \draw[thin, color=gray] (\vx-\wlen-\slen-\wgap,\ll*\ysh) .. controls
        +(\slen/3,0) and +(-\slen/3,0)..
        (\vx-\wlen-\wgap,\rr*\ysh);
    }

    \draw[thin, color=gray] (\vx-\wlen-\slen-2*\wgap,5*\ysh) .. controls
    +(-0.4,0) and +(0,0.2) ..
    (\vx-2*\wlen-\slen-2*\wgap,4.5*\ysh) .. controls
    +(0,-0.2) and +(-0.4,0) ..
    (\vx-\wlen-\slen-2*\wgap,4*\ysh);
    
    \foreach \y in {1,2,3} {
        \draw[thin, color=gray] (\vx-\wlen-\slen-2*\wgap,\y*\ysh) -- (\vx-2*\wlen-\slen-2*\wgap,\y*\ysh);
    }

    \node at (\vx+1.5,3*\ysh) {$=$};
    \node at (\vx-3*\wlen/2-2*\wgap-\slen,6*\ysh) {$I^{0,2} \otimes I^{\otimes 3}$};
    \node at (\vx-\wlen-\wgap-\slen/2,6*\ysh) {$S_{\sigma}^{5,5}$};
    \node at (\vx-\wlen/2,6*\ysh) {$f$};

    \begin{scope}[xshift=\xsh cm]
        \foreach \y in {2,3,5} {
            \draw[thin, color=gray] (\vx-\wlen, \y*\ysh) .. controls (\vx-\wlen+\wlen/3,\y*\ysh) .. (\vx,3*\ysh);
        }
        \Vertex[x=\vx,y=3*\ysh,L={F},Ldist=-0.1cm]{f}
        \Loop[dist=1.2cm, dir=WE, style={thick,-}](f)
    \end{scope}
\end{tikzpicture}
        \caption{Contracting the 2nd and 5th inputs of $F$ as in \eqref{eq:contraction}.}
        \label{fig:contraction}
    \end{figure}
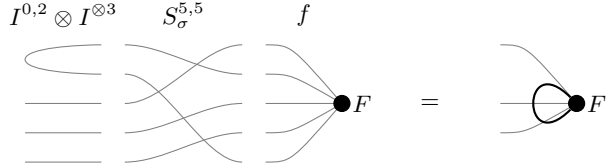
\end{proof}

Define an action of $O(q)$ on $M(T(\rr^q))$ by 
\[
    HF^{m,d} := H^{\otimes m} F^{m,d} (H^\top)^{\otimes d}.
\]
Using this action, define the matrix version of \autoref{def:invariance}: for $Q \subset O(q)$ and
$\fc \subset T(\rr^q)$,
\begin{align*}
    &M(T(\rr^q))^Q := \{F^{m,d} \in M(T(\rr^q)) \mid H^{\otimes m} F^{m,d} (H^\top)^{\otimes d} = F^{m,d} \text{ for every } H \in Q\} \text{ and}\\
    &\stab(M(\fc)) := \{H \in O(q) \mid H^{\otimes m}F^{m,d} (H^\top)^{\otimes d} = F^{m,d} \text{ for every } F^{m,d} \in M(\fc)\}. 
\end{align*}
The next proposition shows that the matrix and signature versions of these definitions agree.
\begin{proposition}
    \label{prop:invariant_tcwd}
    ~
    \begin{enumerate}
        \item For any subgroup $Q \subset O(q)$, $M(T(\rr^q))^Q$ is a symmetric tensor category with duals.
        \item For any signature $F$ and orthogonal $H$,
            $(HF)^{m,d} = H^{\otimes m} F^{m,d} (H^\top)^{\otimes d}$ for any $m,d$.
        \item $\stab(\fc) = \stab(M(\fc))$. 
    \end{enumerate}
\end{proposition}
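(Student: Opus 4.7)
The plan is to prove the three parts in the order (2), (3), (1), since (2) supplies the key identity that makes both (3) and (1) almost immediate.

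For (2), I would unpack \autoref{def:sig_matrix} and the definition of $HF$. Writing out
$(H^{\otimes m} F^{m,d} (H^\top)^{\otimes d})_{\vx,\vy}
= \sum_{\vx',\vy'} \prod_i H_{x_i,x'_i} \prod_j (H^\top)_{y'_j,y_j} F(x'_0,\ldots,x'_{m-1},y'_{d-1},\ldots,y'_0),$
and expanding $(HF)^{m,d}_{\vx,\vy} = (HF)(x_0,\ldots,x_{m-1},y_{d-1},\ldots,y_0)$ via $(HF)^{n,0} = H^{\otimes n}f$, both sides become the same iterated sum. The only bookkeeping point is that the $d$ column indices are reversed in the flattening, but since $(H^\top)_{y'_j,y_j} = H_{y_j,y'_j}$, the two expressions match term-by-term. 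This step is entirely routine; its only subtlety is keeping track of the reversed index order, which is the main source of potential confusion throughout the paper.

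Part (3) is then immediate: by (2), for each $F$ of arity $n$ and each choice of $m+d=n$, the condition $H^{\otimes m}F^{m,d}(H^\top)^{\otimes d} = F^{m,d}$ is literally the same as $HF = F$. Taking $m=n, d=0$ shows that if $H$ stabilizes every flattening it stabilizes $F$; taking any $m,d$ shows the converse. Quantifying over $F \in \fc$ gives both inclusions.

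For (1), I would verify the five axioms of \autoref{def:tcwd} one by one for $M(T(\rr^q))^Q$. Closure under $\rr$-linear combinations (for fixed $m,d$) is trivial since the $Q$-action is linear. Closure under $\circ$ uses the orthogonality identity $(H^\top)^{\otimes d} H^{\otimes d} = (H^\top H)^{\otimes d} = I^{\otimes d}$ to telescope the middle factors; closure under $\otimes$ follows from $(H^{\otimes m_1} \otimes H^{\otimes m_2}) = H^{\otimes m_1+m_2}$ and similarly on the right; closure under $\top$ follows from $(H^{\otimes m} A (H^\top)^{\otimes d})^\top = H^{\otimes d} A^\top (H^\top)^{\otimes m}$. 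For the specific elements, $HIH^\top = I$ gives $I \in C(1,1)$; a direct calculation $I^{0,2}(H^\top)^{\otimes 2}_\vy = (HH^\top)_{y_0,y_1} = [y_0 = y_1]$ gives $I^{0,2} \in C(0,2)$; and for $S^{2,2}$, the action factors as $(HH^\top)_{x_0,y_1}(HH^\top)_{x_1,y_0}$, again equal to $S^{2,2}_{\vx,\vy}$ by orthogonality.

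No step presents a genuine obstacle; the hardest part is simply the index gymnastics in (2), which is why I would do it carefully once and then invoke it for everything else. In particular, (1) should really be read as the statement that the orthogonal-group action respects all the category operations, which is precisely what orthogonality $H^\top H = I$ is designed to ensure.
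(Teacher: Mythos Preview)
Your approach is correct and, for parts (2) and (3), more direct than the paper's. The paper proves the five axioms for part (1) first (using the easy $\supseteq$ direction of \autoref{thm:fft} to obtain $I^{0,2}$ rather than your explicit calculation), and only then establishes part (2) by an indirect route: it forms the block-antidiagonal orthogonal matrix $B = \begin{bsmallmatrix}0 & H^\top\\ H & 0\end{bsmallmatrix}$ on $[2q]$, notes that $f \oplus g \in M(T(\rr^{2q}))^{\langle B\rangle}$, and then uses the just-proved TCWD structure to ``pivot'' dangling edges with $I$ and $I^{0,2}$, concluding $(F\oplus G)^{m,d}\in M(T(\rr^{2q}))^{\langle B\rangle}$ and reading off the identity via \autoref{cor:off_diag_block}. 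The paper then returns to part (1) and derives the symmetry axiom $S^{2,2}\in M(T(\rr^q))^Q$ from \autoref{thm:fft} together with part (3), rather than by your direct $(HH^\top)$ computation. The paper itself remarks that your style of proof of (2) is the well-known one and works for any invertible $H$; it chose the block-matrix detour deliberately, to parallel the quantum-orthogonal argument in \cite{cai_planar_2023} and to foreground the role of $I^{0,2}$ in dualizing. So your argument is shorter and more self-contained, while the paper's buys a structural analogy used later in \autoref{rem:duality} and \autoref{sec:variations}.
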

\begin{proof}
    If $F^{m,d},G^{d,k} \in M(T(\rr^q))^Q$ and $H \in Q$ then
    \[
        H^{\otimes m} (F^{m,d} G^{d,k}) (H^\top)^{\otimes k} = 
        (H^{\otimes m} F^{m,d} (H^\top)^{\otimes d}) (H^{\otimes d} G^{d,k} (H^\top)^{\otimes k})
        = F^{m,d} G^{d,k},
    \]
    so $F^{m,d}G^{d,k} \in M(T(\rr^q))^Q$. Closure under the other operations $+, \otimes, \top$
    follows similarly, so $M(T(\rr^q))^Q$ satisfies items (i) and (ii) of \autoref{def:tcwd}.
    Since $I \in \mathcal{W}$, \autoref{thm:fft} gives $I \in T(\rr^q)^Q$, or
    $I^{2,0} \in M(T(\rr^q))^Q$. Then $I^{0,2} = (I^{2,0})^\top \in M(T(\rr^q))^Q$,
    and $I = I^{1,1} \in M(T(\rr^q))^Q$ because $HIH^\top = I$ for $H \in Q \subset
    O(q)$, giving items (iii) and (iv). Hence $M(T(\rr^q))^Q$ is a TCWD. 

    We now prove part 2.
    Let $G = HF$. Define the orthogonal block matrix 
    \begin{equation}
        \label{eq:b}
        B := \begin{bmatrix} 0 & H^\top \\ H & 0 \end{bmatrix}
    \end{equation}
    on $V(F) \sqcup V(G) = [2q]$.
    Since $H^{\otimes n} f = g$ (so also $(H^{\top})^{\otimes n}g = f$),
    \autoref{cor:off_diag_block} with $m:=n$ and $d:=0$ gives
    $f \oplus g = (F \oplus G)^{n,0} \in M(T(\rr^{2q}))^{\langle B\rangle}$, which, by the previous paragraph, is a TCWD.
    Now, as in \cite[Lemma 3]{cai_planar_2023}, we use $I$ and $I^{0,2}$ to `pivot' any number of
    inputs of $f \oplus g$ from left to right 
    (which preserves the cyclic dangling edge order) to show 
    $(F \oplus G)^{m,d} \in M(T(\rr^{2q}))^{\langle B \rangle}$ for any $m+d=n$. See \autoref{fig:pivot}.
    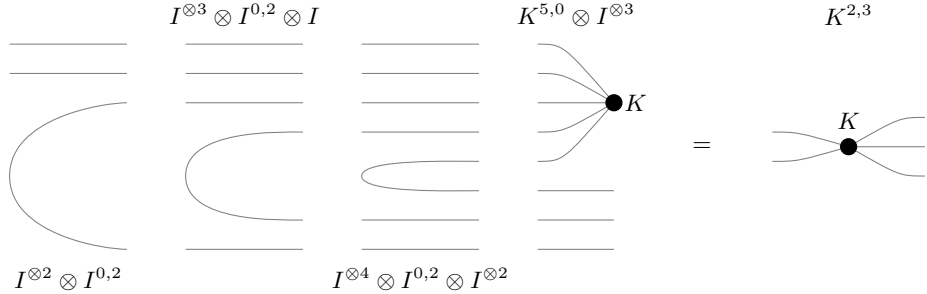
\begin{figure}[ht!]
        \centering
        \begin{tikzpicture}[scale=.78]
    \tikzstyle{every node}=[font=\small]
    \GraphInit[vstyle=Classic]
    \SetUpEdge[style=-]
    \SetVertexMath

    \def\ysh{0.5}
    \def\vx{0}
    \def\xsh{4}
    \def\wlen{1.3}
    \def\slen{2}
    \def\wgap{1}

    \foreach \y in {4,5,6,7,8} {
        \draw[thin, color=gray] (\vx-\wlen, \y*\ysh) .. controls (\vx-\wlen+\wlen/3,\y*\ysh) .. (\vx,6*\ysh);
    }
    \foreach \y in {1,2,3} {
        \draw[thin, color=gray] (\vx-\wlen,\y*\ysh) -- (\vx,\y*\ysh);
    }

    \Vertex[x=\vx,y=6*\ysh,L={K},Ldist=-0.1cm]{f}

    \foreach \y in {1,2,5,6,7,8} {
        \draw[thin, color=gray] (\vx-\wlen-\slen-\wgap,\y*\ysh) -- (\vx-\wlen-\wgap,\y*\ysh);
    }
    \draw[thin, color=gray] (\vx-\wlen-\wgap,4*\ysh) .. controls
    +(-0.4,0) and +(0,0.3) ..
    (\vx-\wlen-\slen-\wgap,3.5*\ysh) .. controls
    +(0,-0.3) and +(-0.4,0) ..
    (\vx-\wlen-\wgap,3*\ysh);

    \foreach \y in {1,6,7,8} {
        \draw[thin, color=gray] (\vx-\wlen-2*\slen-2*\wgap,\y*\ysh) -- (\vx-\slen-\wlen-2*\wgap,\y*\ysh);
    }
    \draw[thin, color=gray] (\vx-\slen-\wlen-2*\wgap,5*\ysh) .. controls
    +(-0.4,0) and +(0,0.8) ..
    (\vx-\wlen-2*\slen-2*\wgap,3.5*\ysh) .. controls
    +(0,-0.8) and +(-0.4,0) ..
    (\vx-\slen-\wlen-2*\wgap,2*\ysh);
    
    \foreach \y in {7,8} {
        \draw[thin, color=gray] (\vx-\wlen-3*\slen-3*\wgap,\y*\ysh) -- (\vx-2*\slen-\wlen-3*\wgap,\y*\ysh);
    }
    \draw[thin, color=gray] (\vx-2*\slen-\wlen-3*\wgap,6*\ysh) .. controls
    +(-0.4,0) and +(0,1) ..
    (\vx-\wlen-3*\slen-3*\wgap,3.5*\ysh) .. controls
    +(0,-1) and +(-0.4,0) ..
    (\vx-2*\slen-\wlen-3*\wgap,1*\ysh);
    
    \node at (\vx+1.5,4.5*\ysh) {$=$};

    \begin{scope}[xshift=\xsh cm]
        \foreach \y in {3.5,4.5,5.5} {
            \draw[thin, color=gray] (\vx,4.5*\ysh) .. controls (\vx+2*\wlen/3,\y*\ysh) .. (\vx+\wlen,\y*\ysh);
        }
        \foreach \y in {4,5} {
            \draw[thin, color=gray] (\vx,4.5*\ysh) .. controls (\vx-2*\wlen/3,\y*\ysh) .. (\vx-\wlen,\y*\ysh);
        }
        \Vertex[x=\vx,y=4.5*\ysh,L={K},Lpos=90]{ff}
    \end{scope}

    \node[font=\small] at (\vx-\wlen/2,9*\ysh) {$K^{5,0} \otimes I^{\otimes 3}$};
    \node[font=\small] at (\vx-\wgap-\slen/2-\wlen,0*\ysh) {$I^{\otimes 4} \otimes I^{0,2} \otimes I^{\otimes 2}$};
    \node[font=\small] at (\vx-2*\wgap-3*\slen/2-\wlen,9*\ysh) {$I^{\otimes 3} \otimes I^{0,2} \otimes I$};
    \node[font=\small] at (\vx-3*\wgap-5*\slen/2-\wlen,0*\ysh) {$I^{\otimes 2} \otimes I^{0,2}$};

    \node[font=\small] at (\vx+\xsh,9*\ysh) {$K^{2,3}$};
\end{tikzpicture}
        \caption{Pivoting three dangling edges of $K = F \oplus G$ from left to right using 
        $I$ and $I^{0,2}$.}
        \label{fig:pivot}
    \end{figure}
    Thus $B^{\otimes m}(F \oplus G)^{m,d} = (F \oplus G)^{m,d} B^{\otimes d}$, so applying the
    other direction of \autoref{cor:off_diag_block} gives
    $H^{\otimes m} F^{m,d} = G^{m,d} H^{\otimes d} = (HF)^{m,d} H^{\otimes d}$, and part 2 follows.

    Part 3 follows directly from part 2 and the definitions. 
    Finally, $S \in T(\rr^q)^Q$ by \autoref{thm:fft} (as $S \in \mathcal{W}$),
    so $Q \subset \stab(S) = \stab(M(\{S\}))$ by part 3. Then
    $S^{2,2} \in M(T(\rr^q))^Q$, giving item (v) of \autoref{def:tcwd} and proving part 1.
\end{proof}
That part 2 of \autoref{prop:invariant_tcwd} holds for any matrix $H$
is a well-known fact (see \autoref{fig:pivot_h}). We gave the above
proof to mirror the proof of the similar \cite[Lemma 6]{cai_planar_2023} for quantum
orthogonal matrices, and to highlight the role of $I^{0,2}$ 
in the invariance of a TCWD under edge pivoting.

\begin{definition}[$\qk(m,d), \overline{\fc}$]
    \label{def:quantum_gadget}
    A \emph{$(m,d)$-quantum $\fc$-gadget} is a formal (finite) $\rr$-linear combination of gadgets in
    $\gk(m,d)$. Let $\qk(m,d)$ be the collection of all $(m,d)$-quantum $\fc$-gadgets,
    and $\qk = \bigcup_{m,d}\qk(m,d)$.

    Extend the signature matrix function $M$ linearly to $\qk$. Then define the \emph{quantum gadget
    closure} $\ofc$ of $\fc$ as the
    set of signatures obtainable from $\fc$ by quantum gadget construction:
    \[
        \ofc = \bigsqcup_{F^{m,d} \in M(\qk)}F.
    \]
\end{definition}
If $\fc$ and $\gc$ are similar, then $\ofc$ and
$\ogc$ are similar, with the signature of $\vk \in \qk$ corresponding to the signature of
$\vk_{\fc\to\gc} \in \mathfrak{Q}_{\gc}$.

Since $I,S \in \mathcal{W} \subset \ofc$ for any $\fc$, and $\gk$ is closed under gadget 
$\circ,\otimes$, and $\top$, $M(\qk) = M(\ofc)$ is a symmetric TCWD. Specifically,
the next proposition, which is similar to \cite[Theorem 3]{young2022equality}, shows that $M(\qk)$ is the symmetric TCWD generated by $M(\fc)$.
\begin{proposition}
    $M(\qk) = \tcwd{\big\{I,I^{0,2},S^{2,2}\big\} \cup \big\{f \mid F \in \fc\big\}}$.
    \label{prop:mqf}
\end{proposition}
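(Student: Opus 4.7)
The plan is to prove the two inclusions separately, exploiting the correspondence between the TCWD operations $\circ,\otimes,\top$ on signature matrices and their combinatorial counterparts on gadgets.

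For the inclusion $\tcwd{\{I, I^{0,2}, S^{2,2}\} \cup \{f \mid F \in \fc\}} \subseteq M(\qk)$, I would first verify that each generator already lies in $M(\qk)$. The matrices $I$, $I^{0,2}$, and $S^{2,2}$ are signature matrices of vertexless gadgets (a wire, a cap, and a crossing), and these belong to $\gk$ for any $\fc$; each $f = F^{n,0}$ is the signature vector of the single-vertex gadget assigning $F \in \fc$ with all $n$ dangling edges drawn on the left. Since the gadget operations induce the corresponding matrix operations, $M(\qk)$ is closed under $\circ, \otimes, \top$, and $\rr$-linear combinations are built in by the definition of quantum gadgets. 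Hence $M(\qk)$ contains the generating set together with its TCWD closure.

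For the reverse inclusion, by linearity it suffices to show $M(\vk) \in \tcwd{\cdot}$ for an ordinary $\vk \in \gk(m,d)$. I would label the vertices $v_1,\ldots,v_k$ of $\vk$ with arities $n_1,\ldots,n_k$, set $N := \sum_i n_i$, and form the ``disassembled'' gadget $\vk_0$ by placing the $k$ single-vertex subgadgets side-by-side with all dangling edges on the left. Its signature vector is $f_{v_1}\otimes\cdots\otimes f_{v_k}$, which lies in $\tcwd{\cdot}$ by construction. I would then recover $\vk$ from $\vk_0$ in three explicit steps: (i) compose with a braid signature matrix $S_\sigma^{N,N}$ for an appropriate $\sigma \in S_N$ to group the dangling ends into pairs corresponding to internal edges of $\vk$ and to arrange the remaining ends into the correct cyclic order, using that every $S_\sigma^{N,N}$ lies in $\tcwdn{I, S^{2,2}}$; (ii) contract the grouped pairs by composing with matrices of the form $I^{\otimes a} \otimes I^{0,2} \otimes I^{\otimes b}$; (iii) pivot the remaining $m+d$ dangling ends into their correct left/right partition via the technique illustrated in \autoref{fig:pivot}, together with $\top$. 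Wires in $\vk$ contribute factors of $I$, and each vertexless loop of $\vk$ contributes a scalar $q = I^{0,2}\circ I^{2,0}$, all of which lie in $\tcwd{\cdot}$.

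The main technical obstacle is the bookkeeping in step (i): the permutation $\sigma$ and the ensuing sequence of contractions and pivots must be arranged so that the final gadget's $m+d$ dangling ends appear in exactly the cyclic order prescribed by $\vk$. Because the signature matrix $M(\vk)$ depends on the underlying graph and signature assignments only through this cyclic order, verifying that the construction preserves it is the only non-routine part; the algebraic content is otherwise a direct translation between gadget manipulations and TCWD operations, and very close to \cite[Theorem 3]{young2022equality} once one accounts for the absence of $\eq$ in $\fc$.
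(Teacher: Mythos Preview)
Your proposal is correct and follows essentially the same approach as the paper: show the generators lie in $M(\qk)$ and appeal to closure for $\supseteq$, then for $\subseteq$ disassemble an arbitrary gadget into the tensor product $\bigotimes_i f_{v_i}$ and rebuild it using braids $S_\sigma$ together with contractions via $I^{0,2}$ and pivots via $I^{2,0}=(I^{0,2})^\top$. The only cosmetic difference is that the paper pivots all dangling edges to the left first (reducing to $d=0$) before disassembling, whereas you pivot at the end; you are also more explicit about wires and vertexless loops, which the paper leaves implicit.
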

\begin{proof}
    $I, S\in \mathcal{W} \subset \ofc$, so $I^{1,1},I^{0,2},S^{2,2} \in M(\qk)$.
    The $\supseteq$ direction follows. For the $\subseteq$ direction, it suffices to show
    that $M(\gk) \subseteq \tcwdn{\big\{I,I^{0,2},S^{2,2}\big\} \cup \big\{f \mid F \in \fc\big\}}$.
    Let $\vk \in \gk(m,d)$ be a gadget. By using $I^{2,0} = (I^{0,2})^\top$ to pivot dangling edges to the
    left as in the proof of \autoref{prop:invariant_tcwd}, we may assume $d = 0$.
    Suppose $\vk$ has $p$ vertices, assigned signatures $F_1,\ldots,F_{p}$.
    Break all internal edges of $\vk$ and orient the resulting dangling edges to the left 
    to create a gadget with signature
    \begin{equation}
        \bigotimes_{i=1}^{p} f_i \in \tcwdn{f \mid F \in \fc}.
        \label{eq:biggadget}
    \end{equation}
    Now we reconstruct $\vk$ by reconnecting (contracting) the broken internal edges. 
    Using the formula \eqref{eq:contraction} for arbitrary contraction, we see that
    $\vk \in \tcwd{\big\{I,I^{0,2},S^{2,2}\big\} \cup \big\{f \mid F \in \fc\big\}}$.
\end{proof}

\begin{proposition}
    \label{prop:stabfc}
    $\stab(\fc) = \stab(M(\qk)) = \stab(\overline{\fc})$.
\end{proposition}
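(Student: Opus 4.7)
The plan is to prove the two equalities separately, reducing everything to the symmetric TCWD machinery already established in \autoref{prop:invariant_tcwd} and \autoref{prop:mqf}. The outer equality $\stab(M(\qk)) = \stab(\overline{\fc})$ will be essentially immediate: by \autoref{def:quantum_gadget}, $M(\qk) = M(\overline{\fc})$, and by part 3 of \autoref{prop:invariant_tcwd} (applied with $\overline{\fc}$ in place of $\fc$), $\stab(\overline{\fc}) = \stab(M(\overline{\fc}))$. So the real work is in the first equality.

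For $\stab(\fc) \supseteq \stab(M(\qk))$, I would use the fact that each $F \in \fc$ sits inside $\overline{\fc}$ as the signature of the one-vertex gadget assigned $F$, so $M(\fc) \subseteq M(\qk)$. Then any $H$ fixing everything in $M(\qk)$ in particular fixes every $F^{n,0} = f$ for $F \in \fc$, so by part 2 of \autoref{prop:invariant_tcwd} it fixes $F$ itself, i.e., $H \in \stab(\fc)$.

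The more substantive containment is $\stab(\fc) \subseteq \stab(M(\qk))$. Here I would let $H \in \stab(\fc)$ and consider the subgroup $Q = \langle H \rangle \subseteq O(q)$. By part 1 of \autoref{prop:invariant_tcwd}, the set $M(T(\rr^q))^Q$ is a symmetric tensor category with duals, and hence automatically contains $I$, $I^{0,2}$, and $S^{2,2}$. Moreover, by part 2 of \autoref{prop:invariant_tcwd} applied in the stabilizer direction, $H \in \stab(\fc)$ implies that $f = F^{n,0}$ is fixed by $Q$ for every $F \in \fc$, so $\{f : F \in \fc\} \subseteq M(T(\rr^q))^Q$. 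Since TCWDs are by definition closed under $+, \circ, \otimes, \top$, the TCWD generated by $\{I, I^{0,2}, S^{2,2}\} \cup \{f : F \in \fc\}$ is contained in $M(T(\rr^q))^Q$. But by \autoref{prop:mqf} that generated TCWD is exactly $M(\qk)$, so $M(\qk) \subseteq M(T(\rr^q))^Q$, which is precisely $H \in \stab(M(\qk))$.

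I do not expect any serious obstacle here: the whole proposition is essentially a bookkeeping corollary of \autoref{prop:invariant_tcwd} and \autoref{prop:mqf}. The one subtle point to state carefully is that one direction uses $\fc \subseteq \overline{\fc}$ (as signatures of one-vertex gadgets) while the other uses the generation statement of \autoref{prop:mqf}; both rely on the fact that the universal ingredients $I, I^{0,2}, S^{2,2}$ are $O(q)$-invariant and so contribute nothing to shrinking the stabilizer.
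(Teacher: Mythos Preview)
Your proposal is correct and follows essentially the same approach as the paper: both use part 3 of \autoref{prop:invariant_tcwd} for the outer equality, the inclusion of $\fc$'s signature vectors in $M(\qk)$ for one containment, and part 1 of \autoref{prop:invariant_tcwd} together with the generation statement \autoref{prop:mqf} for the other. The only cosmetic difference is that the paper works directly with $M(T(\rr^q))^{\stab(\fc)}$ rather than with $M(T(\rr^q))^{\langle H\rangle}$ for a fixed $H$, which amounts to the same thing.
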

\begin{proof}
    By part 3 of \autoref{prop:invariant_tcwd}, $\stab(M(\qk)) = \stab(M(\ofc)) = \stab(\overline{\fc})$.
    Since $M(\qk)$ contains (the vector forms of) $\fc$, we have
    $\stab(M(\qk)) \subseteq \stab(\fc)$. By \autoref{prop:mqf}, 
    $M(\qk)$ is the symmetric TCWD generated by $\fc$ (in particular, is the smallest symmetric
    TCWD containing $\fc$).
    By part 1 of \autoref{prop:invariant_tcwd}, $M(T(\rr^q))^{\stab(\fc)}$ is a symmetric TCWD
    containing $\fc$, so we must have $M(\qk) \subseteq M(T(\rr^q))^{\stab(\fc)}$, which implies
    $\stab(\fc) \subseteq \stab(M(\qk))$.
\end{proof}

Now we come to the main result of this section. We follow Regts' proof
for the special case of edge coloring models \cite[Theorem 6.8]{regts}.
\begin{theorem}
    \label{thm:intertwiner_gadget}
    Let $\fc$ be a set of real-valued signatures on domain $[q]$. Then
    \[
        T(\rr^q)^{\stab(\fc)} = \ofc.
    \]
    Equivalently, by flattening both sides and applying part 3 of \autoref{prop:invariant_tcwd},
    we have, for any $m,d$,
    \[
        \left(\rr^{q^m\times q^d}\right)^{\stab(\fc)} = M(\qk(m,d)).
    \]
\end{theorem}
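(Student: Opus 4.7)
The plan is to prove both inclusions separately and extract essentially all the work from Schrijver's duality (\autoref{thm:duality}), which has been set up in exactly the right form by the preceding \autoref{lem:tcwd} and \autoref{prop:mqf}.

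For the easy direction $\ofc \subseteq T(\rr^q)^{\stab(\fc)}$, I would just unfold definitions. By \autoref{prop:stabfc}, $\stab(\ofc) = \stab(\fc)$, and by definition of the stabilizer every $H \in \stab(\ofc)$ fixes every signature in $\ofc$, so $\ofc \subseteq T(\rr^q)^{\stab(\ofc)} = T(\rr^q)^{\stab(\fc)}$.

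For the harder direction $T(\rr^q)^{\stab(\fc)} \subseteq \ofc$, the idea is to produce a subgroup $Q' \subseteq O(q)$ such that $\ofc = T(\rr^q)^{Q'}$, and then sandwich $Q'$ between $\stab(\fc)$ and itself to finish. First, I would verify that $\ofc$ satisfies the hypotheses of \autoref{thm:duality}: it contains $I$ (the signature of a single wire), and by \autoref{prop:mqf} the flattening $M(\ofc) = M(\qk)$ is a symmetric tensor category with duals, so by \autoref{lem:tcwd} the set $\ofc$ is a contraction-closed graded subalgebra of $T(\rr^q)$ containing $I$. \autoref{thm:duality} then supplies a subgroup $Q' \subseteq O(q)$ with $\ofc = T(\rr^q)^{Q'}$. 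By the definition of the stabilizer,
\[
Q' \subseteq \stab\bigl(T(\rr^q)^{Q'}\bigr) = \stab(\ofc) = \stab(\fc),
\]
where the last equality is \autoref{prop:stabfc}. Since $Q \mapsto T(\rr^q)^Q$ reverses inclusions, this yields $T(\rr^q)^{\stab(\fc)} \subseteq T(\rr^q)^{Q'} = \ofc$, as required. Flattening and applying part~3 of \autoref{prop:invariant_tcwd} gives the equivalent matrix statement.

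There is no real obstacle beyond bookkeeping, because all of the substantive work—both the construction of $Q'$ from an abstract subalgebra and the verification that $M(\qk)$ is the symmetric TCWD generated by $\fc$—is already packaged into \autoref{thm:duality}, \autoref{lem:tcwd}, \autoref{prop:mqf}, and \autoref{prop:stabfc}. The only thing one has to be careful about is invoking each of these in the right order, and in particular using \autoref{prop:stabfc} (rather than a direct computation) to identify $\stab(\ofc)$ with $\stab(\fc)$ so that the Galois-type sandwich $Q' \subseteq \stab(\fc)$ is available.
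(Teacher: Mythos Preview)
Your proposal is correct and follows essentially the same argument as the paper: use \autoref{prop:mqf} and \autoref{lem:tcwd} to verify that $\ofc$ meets the hypotheses of \autoref{thm:duality}, extract a subgroup $Q'$ (the paper calls it $G$) with $\ofc = T(\rr^q)^{Q'}$, and then run the Galois-type sandwich using \autoref{prop:stabfc}. The paper simply compresses your two inclusions into the single chain $T(\rr^q)^{\stab(\fc)} = T(\rr^q)^{\stab(\ofc)} \subseteq T(\rr^q)^{Q'} = \ofc \subseteq T(\rr^q)^{\stab(\ofc)}$, but the content is identical.
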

\begin{proof}
    By \autoref{prop:mqf}, $M(\overline{\fc}) = M(\qk)$ is a TCWD and
    $S \in \overline{\fc}$, so $\overline{\fc}$ is a contraction-closed
    graded subalgebra of $T(\rr^q)$ containing $I$ by \autoref{lem:tcwd}. Thus, by \autoref{thm:duality},
    there is a subgroup $G \subseteq O_q(\rr)$ such that $\overline{\fc} = T(\rr^q)^G$.
    This gives $G \subseteq \stab(\ofc)$, so $T(\rr^q)^{\stab(\ofc)} \subseteq T(\rr^q)^G$, and 
    also $T(\rr^q)^G = \overline{\fc} \subseteq T(\rr^q)^{\stab(\overline{\fc})}$. Therefore,
    applying \autoref{prop:stabfc},
    \[
        T(\rr^q)^{\stab(\fc)} = T(\rr^q)^{\stab(\ofc)} = T(\rr^q)^G = \ofc.
        \qedhere
    \]
\end{proof}
A direct consequence is the FFT for $O(q)$. Schrijver \cite{schrijver_tensor_2008} gives another 
proof based on \autoref{thm:duality}.
\begin{proof}[Proof of \autoref{thm:fft}]
    Putting $\fc = \varnothing$ (the empty set) in \autoref{thm:intertwiner_gadget} gives
    \[
        T(\rr^q)^{O(q)} = T(\rr^q)^{\stab(\varnothing)} = \overline{\varnothing} = \langle \mathcal{W} \rangle_+,
    \]
    as $\varnothing$-gadgets must have no vertices.
\end{proof}

\begin{remark}
    \label{rem:duality}
    The reasons we translated from 
    Schrijver and Regts's \cite{schrijver_tensor_2008, regts_rank_2012} language of
    contraction-closed graded subalgebras to
    Man\v{c}inska, Roberson, Cai and Young
    \cite{planar, cai_planar_2023, young2022equality}'s language of symmetric TCWDs are twofold. 
    The first is to highlight the similarities
    between this section's results and the results of the latter group. 
    Enforcing that
    the TCWDs contain $M(\eq)$ (corresponding to quantum gadgets in the context of $\csp$), 
    replacing $O(q)$ with $S_q$ (the symmetric group of permutation matrices) or $S^+_q$ 
    (the \emph{quantum symmetric group}), 
    and dropping the symmetry condition on the TCWDs 
    in the latter case (corresponding to quantum gadget planarity), we find 
    \autoref{thm:duality} analogous to the
    \emph{Tannaka-Krein duality} used by \cite{young2022equality} and 
    \cite{planar, cai_planar_2023}, respectively, to prove results analogous to 
    \autoref{thm:intertwiner_gadget}. In fact, the results of this section show that
    \autoref{thm:duality} is equivalent to a well-known classical version of Tannaka-Krein
    duality \cite[Theorem 1.3]{banica_liberation_2009}.

    The second reason is that, just as the main result of \cite{cai_planar_2023} is a planar, quantum
    version
    of the main result of \cite{young2022equality}, our \autoref{thm:result} should have a planar, 
    quantum
    version (see \autoref{con:quantum} below). Arbitrary contractions do not respect planarity, but,
    by simply removing the symmetry condition (i.e. removing $S^{2,2}$ in \autoref{prop:mqf}), 
    we easily enforce planarity in the language of TCWDs.
\end{remark}

The following lemma is this section's main contribution to proving \autoref{thm:result}, and is the
only nonconstructive step in the proof.
\begin{lemma}
    If $\fc$ and $\gc$ are Holant-indistinguishable, then there is an
    $H \in \stab(\fc \oplus \gc)$ with $H|_{V(\fc),V(\gc)} \neq 0$ or $H|_{V(\gc),V(\fc)} \neq 0$.
    \label{lem:offdiagblock}
\end{lemma}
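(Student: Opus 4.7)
The plan is to argue by contrapositive: suppose every $H \in \stab(\fc \oplus \gc)$ is block-diagonal with respect to the partition $V(\fc) \sqcup V(\gc)$, and derive a contradiction with Holant-indistinguishability. (The edge case where $\fc$ contains no nonzero signature of positive arity is trivial, since then $\stab(\fc \oplus \gc) = O(2q)$.) The first step is to realize the block-diagonality of all $H$ as the invariance of a single tensor, namely the binary signature $P_\fc \in (\rr^{2q})^{\otimes 2}$ defined by $P_\fc(x,y) = 1$ if $x = y \in V(\fc)$ and $0$ otherwise. As a matrix, $P_\fc$ is the orthogonal projection onto $\rr^{V(\fc)}$, so an orthogonal $H$ satisfies $H P_\fc H^\top = P_\fc$ exactly when $H$ is block-diagonal. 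Thus $P_\fc$ is fixed by $\stab(\fc \oplus \gc)$ under the tensor action, so by \autoref{thm:intertwiner_gadget} we obtain $P_\fc \in \overline{\fc \oplus \gc}$; i.e.\ there is a quantum $(\fc \oplus \gc)$-gadget $\vk = \sum_i a_i \vk_i$ with signature $P_\fc$.

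Next I will extract from $\vk$ a pair of corresponding quantum $\fc$- and $\gc$-gadgets $\vk_\fc$ and $\vk_\gc$ whose signatures are the binary identity $I$ and the zero tensor, respectively. For each $\vk_i$, write $(\vk_i)_\fc$ for the $\fc$-gadget obtained by replacing every vertex signature $F \oplus G$ with $F$, and similarly $(\vk_i)_\gc$. The direct-sum structure of $\fc \oplus \gc$ forces every connected component of $\vk_i$ containing a dangling edge, when its dangling edges take $V(\fc)$-values, to assign only $V(\fc)$-values to its internal edges, so that component contributes exactly as in $(\vk_i)_\fc$. Each remaining connected component of $\vk_i$ is a closed $(\fc \oplus \gc)$-grid $C$ whose scalar Holant evaluates to $\holant_{C_\fc}(\fc) + \holant_{C_\gc}(\gc) = 2\holant_{C_\fc}(\fc)$ by Holant-indistinguishability. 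Combining, the restriction of $\vk_i$'s signature to the $V(\fc)^2$ block equals $2^{m_i}$ times the signature of $(\vk_i)_\fc$, where $m_i$ is the number of extra components of $\vk_i$. Setting $\vk_\fc := \sum_i a_i 2^{m_i} (\vk_i)_\fc$ and $\vk_\gc := \sum_i a_i 2^{m_i} (\vk_i)_\gc$ therefore gives corresponding quantum gadgets (under $\fc \leftrightsquigarrow \gc$) with signatures $P_\fc|_{V(\fc)^2} = I$ and $P_\fc|_{V(\gc)^2} = 0$.

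To finish, pick a nonzero $F \in \fc$ of arity $n \geq 1$ and let $\Omega$ be the $\fc$-grid consisting of two vertices both assigned $F$, joined by $n$ parallel edges, so that $\holant_\Omega(\fc) = \|F\|^2 > 0$. Inserting the quantum gadget $\vk_\fc$ into any one edge of $\Omega$ preserves its Holant value (since $\vk_\fc$ has signature $I$), while the corresponding insertion of $\vk_\gc$ into $\Omega_{\fc \to \gc}$ gives Holant $0$. Applying Holant-indistinguishability linearly extended to this pair of quantum grids then yields $\|F\|^2 = 0$, a contradiction. I expect the main obstacle to be the extraction step above: tracking the combinatorial decomposition of each (possibly disconnected) $\vk_i$ and correctly threading Holant-indistinguishability through the scalar Holants of its extra components to produce the $2^{m_i}$ factor. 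The choice of the binary projection $P_\fc$ is also essential over the superficially simpler unary indicator $u_\fc$, since $Hu_\fc = u_\fc$ for all $H \in \stab(\fc \oplus \gc)$ is strictly weaker than every such $H$ being block-diagonal.
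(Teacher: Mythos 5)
Your proof is correct and follows essentially the same route as the paper: apply \autoref{thm:intertwiner_gadget} to realize a block-diagonal matrix that is fixed by every $H \in \stab(\fc \oplus \gc)$ as a quantum $(\fc \oplus \gc)$-gadget, split that gadget into its $\fc$ and $\gc$ parts, and contradict Holant-indistinguishability. The paper uses $A = I \oplus 2I$ and takes traces directly (which gives $q \neq 2q$ unconditionally); your choice $P_\fc = I \oplus 0$ is equally valid, but inserting $\vk_\fc$ and $\vk_\gc$ into a vertexless loop (i.e.\ taking traces, giving $q \neq 0$) would have avoided the need to find a nonzero $F$ of positive arity and the attendant edge case.
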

\begin{proof}
    Suppose every $H \in \stab(\fc \oplus \gc)$ has $H|_{V(\fc),V(\gc)} = H|_{V(\gc),V(\fc)} = 0$
    (i.e. is block diagonal). Then the block diagonal
    matrix $A = \begin{bmatrix} I & 0 \\ 0 & 2I \end{bmatrix}$
    satisfies $H A = A H$, hence $HAH^T = A$, for every $H \in \stab(\fc \oplus \gc)$. Thus,
    by \autoref{thm:intertwiner_gadget} with $m=d=1$, $A$ is realizable as the signature matrix 
    of a binary quantum gadget
    -- that is, there exist binary ($\fc \oplus \gc$)-gadgets $\vk^1, \ldots \vk^p$ and
    $c_1,\ldots,c_p \in \rr$ such that
    \begin{equation}
        \label{eq:a_quantum_gadget}
        A = \sum_{i=1}^p c_i M(\vk^i).
    \end{equation}
    Any connected component of a gadget $\vk^i$ disconnected from the component(s) of $\vk^i$ containing the
    two dangling edges contributes only an overall multiplicative factor; by absorbing this factor
    into $c_i$, we may assume each $\vk^i$ has no components without a dangling edge. By definition
    of $\oplus$, inputting an
    $x \in V(\fc)$ along a dangling edge of $\vk^i$ forces any edge assignment with nonzero value to 
    assign
    an element of $V(\fc)$ to every edge in that dangling edge's connected component. So, for any $x,y \in V(\fc)$,
    we have $M(\vk^i)_{x,y} = M(\vk^i_{(\fc \oplus \gc) \to \fc})_{x,y}$. 
    Similar reasoning applies to
    $\gc$, so the $V(\fc),V(\fc)$ and $V(\gc),V(\gc)$ blocks of \eqref{eq:a_quantum_gadget} are
    \begin{equation}
        I = \sum_{i=1}^p c_i M(\vk^i_{(\fc \oplus \gc) \to \fc})
        \text{ and }
        2I = \sum_{i=1}^p c_i M(\vk^i_{(\fc \oplus \gc) \to \gc}),
        \label{eq:zero_and_i}
    \end{equation}
    respectively.
    Let $\Omega^i$ be the $\fc$-grid resulting from connecting the two dangling edges of 
    $\vk^i_{(\fc \oplus \gc) \to \fc}$. Then
    taking the trace of the equations in \eqref{eq:zero_and_i} gives
    \[
        \sum_{i=1}^p c_i \holant_{\Omega^i}(\fc) = q \neq 2q = 
        \sum_{i=1}^p c_i \holant_{\Omega^i_{\fc\to\gc}}(\gc),
    \]
    so there is some $i$ for which $\holant_{\Omega^i}(\fc) \neq \holant_{\Omega^i_{\fc\to\gc}}(\gc)$.
\end{proof}

\section{Domain Induction: The Proof of \autoref{thm:result}}
The following definition and its applications below borrow a simple but powerful idea of Shao and
Cai \cite[Section 8.2]{shao}: isolating all vertices of an
$\fc \cup \{F\}$-grid $\Omega$ assigned $F$, the rest of $\Omega$ is an $\fc$-gadget. 
\begin{definition}[Subgadget, $\overline{\vk}$]
    Let $\vj$ be a gadget. A \emph{subgadget} $\vk \subset \vj$ induced by a subset $U \subset V(\vj)$ 
    of vertices of $\vj$ is a gadget composed of the vertices in $U$ and 
    \emph{all} of their incident edges: internal edges of $\vj$ incident to 
    exactly one vertex in $U$ become new dangling edges of $\vk$.
    For any $\vk \subset \vj$, there is a
    unique (up to left/right dangling edge pivoting) $\overline{\vk} \subset\vj$, induced by $V(\vj) \setminus U$, called the 
    \emph{complement} of $\vk$, such that, upon reconnecting the new dangling edges of $\vk$ and
    $\overline{\vk}$, we recover $\vj$.

    We often take $\vj$ to be a signature grid (0-ary gadget) $\Omega$, in which case
    $\Omega = \left\langle \vk,\overline{\vk} \right\rangle$.
\end{definition}

Say $\fc$ is \emph{quantum-gadget-closed} if $\fc = \ofc$. The following proposition, a
nonplanar, orthogonal version of \cite[Lemmas 31 and 32]{cai_planar_2023}, lets us assume
$\fc$ and $\gc$ are quantum-gadget-closed when proving \autoref{thm:result}.
\begin{lemma}
    \label{prop:closure}
    For any signature sets $\fc$ and $\gc$,
    \begin{enumerate}
        \item $\fc$ and $\gc$ are Holant-indistinguishable iff $\overline{\fc}$ and $\overline{\gc}$ are
            Holant-indistinguishable.
        \item For any orthogonal $H$, $H\fc = \gc$ iff $H\overline{\fc} = \overline{\gc}$
    (in particular, $H \ofc = \overline{H\fc}$).
    \end{enumerate}
\end{lemma}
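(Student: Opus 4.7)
My plan is to prove the two parts separately, in both cases leveraging the description of $\ofc$ as the signatures of quantum $\fc$-gadgets with correspondence to $\ogc$ induced by $\vk \mapsto \vk_{\fc\to\gc}$.

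For Part 1, I would handle ($\Leftarrow$) first by noting that every $F \in \fc$ is the signature of the trivial single-vertex $\fc$-gadget in $\qk$, so $\fc$ embeds into $\ofc$ compatibly with correspondence, and any $\fc$-grid is also an $\ofc$-grid with the same Holant value. For ($\Rightarrow$), given an $\ofc$-grid $\Omega$, I would replace each vertex $v$, assigned $F_v \in \ofc$ realized by a quantum gadget $\vk_v = \sum_{i} c_{v,i}\vk_{v,i}$ with $\vk_{v,i} \in \gk$, by the corresponding quantum gadget. Multilinearity of $\holant_\Omega$ in its vertex signatures expands this into
\[
    \holant_\Omega(\ofc) = \sum_{(i_v)_v}\left(\prod_v c_{v,i_v}\right)\,\holant_{\Omega((i_v))}(\fc),
\]
for ordinary $\fc$-grids $\Omega((i_v))$, and the analogous expansion of $\holant_{\Omega_{\ofc\to\ogc}}(\ogc)$ uses the same coefficients on the grids $\Omega((i_v))_{\fc\to\gc}$. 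Holant-indistinguishability of $\fc,\gc$ then identifies summands termwise.

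For Part 2, direction ($\Leftarrow$) follows by restricting the equality $H\ofc = \ogc$ to the elements of $\ofc$ arising from single-vertex gadgets, which are exactly $\fc$. For ($\Rightarrow$), the key sub-claim is: if $\vk \in \gk(m,d)$ has signature $F$, then $\vk_{\fc\to\gc}$ has signature $HF$; linear extension then handles quantum gadgets and immediately gives $H\ofc = \ogc$. To prove the sub-claim, for any $(m+d)$-ary signature $L$ realized by a single-vertex gadget $\vl$, I would apply \autoref{cor:holanttheorem} to $\Omega = \langle \vk,\vl\rangle$ under $H$, yielding $\langle F, L\rangle = \langle F', HL\rangle$, where $F'$ denotes the signature of $\vk_{\fc\to\gc}$. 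Combining this with the orthogonality identity $\langle F,L\rangle = \langle HF, HL\rangle$ and letting $L$ range over all $(m+d)$-ary signatures (so $HL$ does too), non-degeneracy of $\langle\cdot,\cdot\rangle$ forces $F' = HF$.

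\textbf{Main obstacle.} The only non-routine piece is the Part 2 sub-claim. Once set up, the Orthogonal Holant Theorem plus non-degeneracy dispose of it cleanly; an alternative, more elementary but longer route inducts on the construction of $\vk$ from vertices, wires, and vertexless loops via $\circ,\otimes,\top$, using that $H^{\otimes n}$ interacts naturally with these operations and that wires/loops are $O(q)$-fixed by \autoref{thm:fft}. The rest of both parts is straightforward bookkeeping of the implicit correspondence between $\ofc$ and $\ogc$ induced by $\vk\mapsto \vk_{\fc\to\gc}$.
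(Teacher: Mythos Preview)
Your proposal is correct. Part 1 and the ($\Leftarrow$) direction of Part 2 match the paper's proof exactly.

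For Part 2 ($\Rightarrow$) you take a genuinely different and more elementary route than the paper. The paper works in the direct sum $\fc \oplus \gc$: it observes that the block-antidiagonal matrix $B = \begin{bmatrix} 0 & H^\top \\ H & 0\end{bmatrix}$ lies in $\stab(\fc \oplus \gc)$, invokes \autoref{prop:stabfc} to conclude $B \in \stab(\overline{\fc \oplus \gc})$, and then reads off $g' = H^{\otimes n}f'$ from the block form of $B^{\otimes n}K^{n,0} = K^{n,0}$ for $K$ the signature of $\vk_{\fc\to\fc\oplus\gc}$ (after first absorbing dangling-edge-free components into coefficients so that the relevant blocks of $K$ really are $f'$ and $g'$). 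Your argument instead enlarges $\fc$ by an arbitrary test signature $L$, applies \autoref{cor:holanttheorem} directly to the grid $\langle \vk,\vl\rangle$ over $\fc\cup\{L\}$, and gets $\langle F,L\rangle = \langle F',HL\rangle$; orthogonality of $H^{\otimes n}$ and non-degeneracy of the inner product then force $F' = HF$. Your approach avoids the $\oplus$-machinery, the appeal to \autoref{prop:stabfc}, and the connected-component reduction; the paper's approach, while heavier here, reuses the $B$-trick and block calculus that appear elsewhere (e.g.\ in \autoref{prop:invariant_tcwd} and in the main proof of \autoref{thm:result}), so it fits the paper's narrative. Your mentioned alternative (induct on the $\circ,\otimes,\top$ construction of $\vk$ via \autoref{prop:mqf}, using part 2 of \autoref{prop:invariant_tcwd} and the $O(q)$-invariance of $\mathcal{W}$) is also valid and is essentially a third proof.
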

\begin{proof}
    1: Any $\fc$-grid or $\gc$-grid is also a $\ofc$-grid or $\ogc$-grid, respectively, giving
    the $(\Longleftarrow)$ direction. For $(\Longrightarrow)$, we can express any $\ofc$-grid $\Omega$
    as a quantum $\fc$-grid by, for every vertex $v$ in $\Omega$ assigned a signature $F^v \in
    \ofc \setminus \fc$, replacing the subgadget of $\Omega$ induced by $v$ by the quantum
    $\fc$-gadget with signature $F^v$, then linearly expanding to obtain a quantum $\fc$-grid.
    Do the same for $\Omega_{\ofc\to\ogc}$. 
    By assumption, the resulting corresponding quantum $\fc$ 
    and $\gc$-grids have the same values, so $\holant_\Omega = \holant_{\Omega_{\ofc\to\ogc}}$.

    2: $(\Longleftarrow)$ is direct from $\fc \subset \ofc$ and $\gc \subset \ogc$. For $(\Longrightarrow)$, suppose
    $H\fc = \gc$ for orthogonal $H$. Consider $\fc \oplus \gc$ and the block matrix 
    $B$ in \eqref{eq:b}. As in the proof of \autoref{prop:invariant_tcwd},
    $B \in \stab(\fc \oplus \gc)$. Then, by \autoref{prop:stabfc}, $B \in 
    \stab(\overline{\fc \oplus \gc})$. Now consider $n$-ary 
    $\ofc \ni F' \leftrightsquigarrow G' \in \ogc$,
    where $F'$ is the signature of $\vk \in \qk(n,0)$, and $G'$ is the signature of $\vk_{\fc\to\gc}$.
    We may assume no gadget in $\vk$ has a component without dangling edges
    by absorbing the factors from such components (which must be equal for
    $\vk$ and $\vk_{\fc\to\gc}$ by \autoref{cor:holanttheorem}) into the gadget's coefficient,
    so, by definition of $\oplus$, $M(\vk_{\fc\to\fc\oplus\gc})|_{V(\fc)^n} = M(\vk) = f'$ and
    $M(\vk_{\fc\to\fc\oplus\gc})|_{V(\gc)^n} = M(\vk_{\fc\to\gc}) = g'$ (other blocks are not necessarily
    0 if $\vk$ is still disconnected). Then, setting
    $K^{n,0} := M(\vk_{\fc\to\fc\oplus\gc}) \in M(\overline{\fc \oplus \gc})$ and considering 
    the equation $K^{n,0} = B^{\otimes n}K^{n,0}$ in block form as in \eqref{eq:hmkmd}, we have
    \[
        \begin{bmatrix} 
            f' \\ * \\ \vdots \\ * \\ g'
        \end{bmatrix} =
        \begin{bmatrix} 
            0 & 0 & \ldots & 0 &(H^\top)^{\otimes n}\\
            0 & 0 & \ldots & * & 0\\
            \vdots & \vdots & \iddots & \vdots & \vdots \\
            0 & * & \ldots & 0 & 0\\
            H^{\otimes n} & 0 & \ldots & 0 & 0
        \end{bmatrix}
        \begin{bmatrix} 
            f' \\ * \\ \vdots \\ * \\ g'
        \end{bmatrix}
        =
        \begin{bmatrix} 
            (H^\top)^{\otimes n} g' \\ * \\ \vdots \\ * \\ H^{\otimes n} f'
        \end{bmatrix}.
    \]
    In particular, $g' = H^{\otimes n}f'$. Therefore $H\ofc = \ogc$.
\end{proof}

In the context of \#CSP, $\eq$ enables the gadget-construction of entrywise 
products of arbitrary signatures. This is a basis for Vandermonde interpolation, a powerful 
technique in the study of counting problems.
In the general Holant setting, we cannot construct the entrywise product of arbitrary signatures,
but, for binary signatures whose matrix forms are diagonal, composition is equivalent to entrywise
product. We use the following form of Vandermonde interpolation, which follows directly from the similar
\cite[Lemma 2.3]{grohe_homomorphism_2022}.
\begin{proposition}
    \label{prop:vandermonde}
    Let $V \subset \rr^q$ be a vector space closed under entrywise product and containing the
    all-ones vector. For $v \in V$ and $a \in \rr$, define the indicator vector $v^a \in \{0,1\}^q$ by
    $(v^a)_x = 1$ if $v_x = a$ and $(v^a)_x = 0$ otherwise. Then $v^a \in V$ for every
    $v \in V$ and $a \in \rr$.
\end{proposition}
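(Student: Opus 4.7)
The plan is to exploit Lagrange interpolation applied to the (finite) set of distinct entries of $v$. First I would observe that, since $V$ is a vector space closed under entrywise product and contains the all-ones vector $\mathbf{1}$, every entrywise polynomial in $v$ lies in $V$: for any real polynomial $p(t) = \sum_{j=0}^{k} c_j t^j$, the vector $p(v) := \sum_{j=0}^{k} c_j v^{\odot j}$ (where $v^{\odot j}$ denotes the $j$-fold entrywise product of $v$ with itself, and $v^{\odot 0} := \mathbf{1}$) lies in $V$, and $(p(v))_x = p(v_x)$ for every $x \in [q]$.

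Next I would let $\{a_1, \ldots, a_k\} \subset \rr$ be the finite set of distinct values appearing among the entries of $v$. If $a$ is not among them, then $v^a = 0 \in V$ trivially, so assume $a = a_i$ for some $i$. Define the Lagrange interpolation polynomial
\[
p(t) = \prod_{j \neq i} \frac{t - a_j}{a_i - a_j},
\]
which satisfies $p(a_i) = 1$ and $p(a_j) = 0$ for $j \neq i$. Then $(p(v))_x = p(v_x) = 1$ if $v_x = a_i$ and $0$ otherwise, so $p(v) = v^{a_i} = v^a$. By the first paragraph, $p(v) \in V$, which completes the proof.

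The argument is entirely routine once one notices that the finiteness of the set of entry values of $v$ makes the Lagrange approach available; there is no real obstacle. The only thing to double-check is that the polynomial constructed makes sense when $k = 1$ (in which case $p(t) \equiv 1$ and $v^{a_1} = \mathbf{1} \in V$, as the only value of $v$ is $a_1$), and that scalar multiplication by $1/(a_i - a_j)$ is allowed, which it is because $V$ is an $\rr$-vector space.
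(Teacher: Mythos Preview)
Your proof is correct and is exactly the standard Lagrange/Vandermonde interpolation argument the paper has in mind; the paper itself does not spell out a proof but simply cites \cite[Lemma~2.3]{grohe_homomorphism_2022}, whose argument is essentially the one you wrote.
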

\autoref{prop:vandermonde} will give us the following matrices, whose diagonals we identify
with $\rr^q$:
For $Z \subset [q]$, define the diagonal matrix (binary signature)
$\mathds{1}_Z \in \{0,1\}^{q \times q}$ by $(\mathds{1}_Z)_{x,x} = 1$ if $x \in Z$, and $(\mathds{1}_Z)_{x,x} = 0$ otherwise.
\begin{lemma}
    \label{lem:restriction}
    Let $\fc$ and $\gc$, on domain $[q]$, be Holant-indistinguishable, and let $Z \subset [q]$. If $\fc$ and
    $\gc$ contain corresponding copies of $\mathds{1}_Z$, then $\fc|_Z$ and $\gc|_Z$ are 
    Holant-indistinguishable.
\end{lemma}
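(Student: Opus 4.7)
The plan is to use the projector $\mathds{1}_Z \in \fc$ (and its corresponding copy in $\gc$) to lift any $\fc|_Z$-grid to an $\fc$-grid with the same Holant value, then invoke the Holant-indistinguishability of $\fc$ and $\gc$ directly. Since $\mathds{1}_Z$ is a diagonal $0$--$1$ matrix whose $(x,x)$-entry is $1$ if and only if $x \in Z$, placing a $\mathds{1}_Z$-vertex on an edge of a signature grid on domain $[q]$ forces that edge, in any nonzero term of the Holant sum, to take a value in $Z$.

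Concretely, given any $\fc|_Z$-grid $\Omega$ (viewed on domain $Z$), I would construct a corresponding $\fc$-grid $\Omega'$ on domain $[q]$ as follows: reassign each vertex $v$ of $\Omega$ carrying $F|_Z$ to the corresponding signature $F \in \fc$, and subdivide each edge of $\Omega$ by inserting a new degree-$2$ vertex assigned $\mathds{1}_Z$. Writing out $\holant_{\Omega'}(\fc)$ as a sum over $\sigma' : E(\Omega') \to [q]$, each inserted $\mathds{1}_Z$-factor vanishes unless the two half-edges of the original $\Omega$-edge receive the same value, and that value lies in $Z$. The nonvanishing terms are therefore in bijection with edge assignments $\sigma : E(\Omega) \to Z$, and their product of vertex contributions equals $\prod_v (F|_Z)_{v}(\sigma|_{\delta(v)})$. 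Hence $\holant_{\Omega'}(\fc) = \holant_{\Omega}(\fc|_Z)$.

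Applying the same construction to the $\gc|_Z$-grid $\Omega_{\fc|_Z \to \gc|_Z}$, and observing that by hypothesis the copy of $\mathds{1}_Z$ in $\fc$ corresponds to the copy of $\mathds{1}_Z$ in $\gc$, one sees that $\Omega'_{\fc \to \gc}$ is exactly this lift, yielding $\holant_{\Omega'_{\fc\to\gc}}(\gc) = \holant_{\Omega_{\fc|_Z \to \gc|_Z}}(\gc|_Z)$. The Holant-indistinguishability hypothesis gives $\holant_{\Omega'}(\fc) = \holant_{\Omega'_{\fc\to\gc}}(\gc)$, and chaining the three equalities completes the proof. There is no real obstacle here; the only point requiring care is the bookkeeping that identifies $\mathds{1}_Z$ both as an element of $\fc$ (a signature on $[q]$) and as a projector onto the subdomain $Z$ over which $\fc|_Z$-grids are indexed, and the compatibility of this identification under the $\fc \leftrightsquigarrow \gc$ correspondence.
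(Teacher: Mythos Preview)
Your proposal is correct and follows essentially the same approach as the paper's proof: both lift an $\fc|_Z$-grid $\Omega$ to an $\fc$-grid $\Omega'$ by replacing each $F|_Z$ with $F$ and inserting a degree-$2$ vertex assigned $\mathds{1}_Z$ on every edge, then observe that nonzero terms force all edge values into $Z$ and that the construction commutes with $\fc\to\gc$ because $\mathds{1}_Z \leftrightsquigarrow \mathds{1}_Z$. The chain of equalities you describe is exactly the one the paper writes down.
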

\begin{proof}
    Let $\Omega$ be a $\fc|_Z$-grid.
    Construct a $\fc$-grid $\Omega'$ from $\Omega$ by replacing every signature
    $F|_Z \in \fc|_Z$ with the corresponding $F \in \fc$, then replacing every edge with a degree-2 vertex
    assigned 
    $\mathds{1}_Z \in \fc$. Since $\mathds{1}_Z$ is 0 when given any input not in $Z$,
    any edge assignment sending an edge outside of $Z$ contributes 0 to
    $\holant_{\Omega'}(\fc)$. Furthermore, on inputs from $Z$, $\mathds{1}_Z$ acts identically to an 
    edge (as $I$) in a $\holant(\fc|_Z)$ grid.
    Similar reasoning applies to $\gc$ and $\gc|_Z$, so
    \[
        \holant_{\Omega}(\fc|_Z) = \holant_{\Omega'}(\fc) = \holant_{\Omega'_{\fc\to\gc}}(\gc) 
        =\holant_{\Omega_{\fc|_Z\to\gc|_Z}}(\gc|_Z). \qedhere
    \]
\end{proof}

\begin{lemma}
    \label{lem:induction}
    Let $\fc$ and $\gc$ be signature sets on domain $[q]$, and suppose \autoref{thm:result} holds for
    all $\fc'$, $\gc'$ on domain smaller than $q$. If $\fc$ and $\gc$ are Holant-indistinguishable
    and contain corresponding copies of a diagonal matrix (binary signature)
    $D \not\in \spn(I)$, then $\fc$ and $\gc$ are ortho-equivalent. 
\end{lemma}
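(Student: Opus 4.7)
My plan is to use $D$ to partition the domain into smaller pieces, invoke the inductive hypothesis on each piece, and assemble the resulting block orthogonal matrices into an orthogonal transformation on $[q]$.

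\textbf{Extracting indicators.} Since $D \notin \spn(I)$, the diagonal of $D$ takes at least two distinct values, giving level sets $Z_1, \ldots, Z_k$ ($k \ge 2$) that partition $[q]$ with each $|Z_i| < q$. Applying \autoref{prop:vandermonde} to the subspace of $\rr^q$ spanned by the diagonals of $I$ and $D$, each indicator $\mathds{1}_{Z_i}$ equals a polynomial in the diagonal of $D$. Because $D$ is diagonal, composition of diagonal matrices coincides with entrywise product of diagonals, so every power $D^j$ is gadget-constructible by chaining $j$ copies of $D$; hence $\mathds{1}_{Z_i} \in \ofc \cap \ogc$ via corresponding quantum $\fc$- and $\gc$-gadgets.

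\textbf{Restriction and induction.} Adjoin the $\mathds{1}_{Z_i}$'s to $\fc$ and $\gc$; the gadget-replacement argument of \autoref{prop:closure} keeps them Holant-indistinguishable. Then \autoref{lem:restriction} gives that $\fc|_{Z_i}$ and $\gc|_{Z_i}$ are Holant-indistinguishable for each $i$. Since $|Z_i| < q$, the inductive hypothesis produces an orthogonal $H_i$ on $Z_i$ with $H_i\,\fc|_{Z_i} = \gc|_{Z_i}$, so $H := \bigoplus_i H_i \in O(q)$ is orthogonal. By \autoref{prop:transform}, $H\fc$ and $\gc$ are Holant-indistinguishable with $(H\fc)|_{Z_i} = \gc|_{Z_i}$ for every $i$. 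It therefore suffices, after replacing $\fc$ by $H\fc$, to prove the following ``bootstrap'' statement: whenever $\fc$ and $\gc$ are Holant-indistinguishable, $\mathds{1}_{Z_i} \in \fc \cap \gc$ correspond, and $\fc|_{Z_i} = \gc|_{Z_i}$ for all $i$, one must have $\fc = \gc$.

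\textbf{Main obstacle: bootstrapping from pure blocks.} The final equality is where the real work lies. For corresponding $F \leftrightsquigarrow G$ of arity $n$ and block tuple $\vr \in [k]^n$, attaching $\mathds{1}_{Z_{r_j}}$ to the $j$th input of $F$ (resp.\ $G$) yields corresponding $\fc$- and $\gc$-gadgets whose signatures $F_\vr \in \ofc$ and $G_\vr \in \ogc$ are supported on $\prod_j Z_{r_j}$ and agree there with $F|_\vr$ and $G|_\vr$. Pure $\vr$ is immediate; for mixed $\vr$, Holant-indistinguishability forces $\langle F_\vr, T\rangle = \langle G_\vr, T\rangle$ whenever the test gadget $T$ is built entirely from the $\mathds{1}_{Z_j}$'s and the common restrictions $\fc|_{Z_j} = \gc|_{Z_j}$, so that the $\fc$- and $\gc$-sides produce the same signature. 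The obstacle is showing that such ``matching'' tests span enough of $\rr^{\prod_j Z_{r_j}}$ to conclude $F_\vr = G_\vr$. I would attack this by applying \autoref{thm:intertwiner_gadget} to each $\fc|_{Z_{r_j}}$ to identify the available unary test vectors as the $\stab(\fc|_{Z_{r_j}})$-invariants in $\rr^{Z_{r_j}}$, and, if these tensor-product tests do not alone suffice, invoking \autoref{lem:offdiagblock} a second time in this reduced setting to extract the residual intertwiner between $F_\vr$ and $G_\vr$ that closes the argument.
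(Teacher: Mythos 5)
The proposal correctly opens along the same lines as the paper (Vandermonde interpolation to obtain the indicators $\mathds{1}_{Z_i}$, restriction via \autoref{lem:restriction}, blockwise induction), but at the decisive point it diverges, and the divergence creates a genuine gap.

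You find the blockwise orthogonal transforms $H_i$ \emph{in parallel}, assemble $H = \bigoplus_i H_i$, replace $\fc$ by $H\fc$, and then try to close with a ``bootstrap'': Holant-indistinguishability plus $\mathds{1}_{Z_i} \in \fc \cap \gc$ plus $\fc|_{Z_i} = \gc|_{Z_i}$ for all $i$ should give $\fc = \gc$. You flag this as the main obstacle, and indeed the bootstrap statement is \emph{false}. Take $q = 2$, $D = \mathrm{diag}(1,2)$, $Z_1 = \{0\}$, $Z_2 = \{1\}$, and symmetric binary $F,G$ with $F_{00} = G_{00} = 1$, $F_{11} = G_{11} = 1$, $F_{01} = a \neq 0$, $G_{01} = -a$. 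With $H = \mathrm{diag}(1,-1) \in O(2)$ one has $HFH^\top = G$ while $H$ fixes $D$, $\mathds{1}_{Z_1}$, $\mathds{1}_{Z_2}$; so $\fc = \{D, F, \mathds{1}_{Z_1}, \mathds{1}_{Z_2}\}$ and $\gc = H\fc$ are Holant-indistinguishable by \autoref{cor:holanttheorem}, have identical pure blocks (the $\fc|_{Z_i}$ and $\gc|_{Z_i}$ are equal scalars), contain the indicators, but $\fc \neq \gc$. In your scheme the per-block inductions can (and here essentially must, up to an unresolved sign) return $H_1 = H_2 = 1$, so $H = I$ and you are stuck. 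The point is that restricting to pure blocks discards exactly the cross-block information that determines how to align the mixed entries $F_{01}$.

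The paper avoids this by treating the two subdomains \emph{sequentially with an intermediate closure}, not in parallel. It adjoins only the $X$-pins and transforms $\gc$ by $H_X^{-1}\oplus I_Y$; the \autoref{claim:equiv} argument for indistinguishability of the enlarged sets only ever sees $\vb \in X^p$, so it needs only the pure $X$-block identity. It then takes the quantum gadget closure $\overline{\fc'}$ \emph{before} touching $Y$, and runs the second induction on $\overline{\fc'}|_Y$ vs.\ $\overline{\gc'}|_Y$ rather than $\fc|_Y$ vs.\ $\gc|_Y$. Crucially $\overline{\fc'}|_Y$ contains signatures obtained by pinning some inputs of $F \in \fc$ into $X$ and restricting the rest to $Y$, and these encode the mixed blocks. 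In the counterexample, $\overline{\fc'}|_{\{1\}}$ contains the scalar $F_{01}$, so the second induction is forced to pick $H_Y = G_{01}/F_{01} = -1$, recovering the correct $\mathrm{diag}(1,-1)$. Only after both pin families are present (with identical copies in both closed sets) does the paper run the $\bigotimes_i \Delta_{x_i}$ test for arbitrary mixed $\vx$. Your closing suggestions (invariants of $\stab(\fc|_{Z_{r_j}})$, a second use of \autoref{lem:offdiagblock}) are too underspecified to evaluate, but the counterexample shows that no amount of algebra can rescue the bootstrap as stated; the cross-block information has to be reinjected, which is exactly what the closure-then-restrict step accomplishes.
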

\begin{proof}
    By \autoref{prop:closure}, we may replace $\fc$ and $\gc$ with $\ofc$ and $\ogc$ to assume $\fc$
    and $\gc$ are quantum-gadget-closed.
    Since $D \not\in \spn(I)$, there exist $x,y \in [q]$ such that $D_{x,x} \neq D_{y,y}$,
    so the sets 
    \[
        X = \{z \in [q]: D_{z,z} = D_{x,x}\} \text{ and } Y = [q] \setminus X
    \]
    are are nontrivial partition of $[q]$. Since $\fc$ is quantum-gadget-closed, it contains 
    $I \in \mathcal{W}$. Consider the subalgebra $\langle D,I\rangle_{+,\circ} \subset \ofc = \fc$.
    Since $D$ and $I$ are diagonal, composition $\circ$ is equivalent to entrywise multiplication
    in $\langle D,I\rangle_{+,\circ}$. Therefore, by \autoref{prop:vandermonde} (identifying the
    matrix diagonals with $\rr^q$),
    we have $\mathds{1}_X, \mathds{1}_Y \in\langle D,I\rangle_{+,\circ} \subset \fc$.
    Applying the same interpolation in $\gc$, we obtain corresponding copies of 
    $\mathds{1}_X, \mathds{1}_Y \in \gc$. 
    Now, applying \autoref{lem:restriction} with $Z := X$, we conclude
    $\fc|_X$ and $\gc|_X$ are Holant-indistinguishable. Furthermore, $|X| < q$, so, by assumption,
    there is an orthogonal matrix $H_X \in \rr^{X \times X}$ satisfying 
    \begin{equation}
        H_X \fc|_X = \gc|_X.
        \label{eq:hx}
    \end{equation}
    For $b \in [q]$, let $\Delta_b \in \rr^q$ be the unary \emph{pinning signature} defined by $\Delta_b(x) = 
    \begin{cases} 1 & b = x \\ 0 & b \neq x \end{cases}$ for $x \in [q]$. 
    Call signatures in $\{\Delta_b \mid b \in X\}$ \emph{$X$-pins}.
    Let $I_Y$ be the identity operator on $\rr^{Y\times Y}$ and define
    \begin{equation}
        \fc' = \fc \sqcup \{\Delta_b \mid b \in X\} \text{ and } \gc' = \big((H_X^{-1} \oplus I_Y)\gc\big) \sqcup
        \{\Delta_b \mid b \in X\}.
        \label{eq:fcprime}
    \end{equation}
    \begin{claim}
        $\fc'$ and $\gc'$ are Holant-indistinguishable.
        \label{claim:equiv}
    \end{claim}
    To see this, let $\Omega$ be a connected $\fc'$-grid. If $\Omega$ contains no 
    $X$-pins then it is an $\fc$-grid, and 
    $\Omega_{\fc'\to\gc'}$ is the corresponding $(H_X^{-1} \oplus I_Y)\gc$-grid, so by assumption and \autoref{cor:holanttheorem}, $\holant_\Omega = \holant_{\Omega_{\fc'\to\gc'}}$.
    If $\Omega$ contains two adjacent $X$-pins, then, since $\Omega$ is connected, its underlying 
    multigraph
    consists only of these two vertices. $X$-pins in $\fc'$ correspond to identical $X$-pins in $\gc'$, so
    again $\holant_\Omega = \holant_{\Omega_{\fc'\to\gc'}}$. Otherwise, $\Omega$ contains $p$ 
    pairwise non-adjacent vertices assigned $X$-pins.
    Let $\vk \subset \Omega$ be a subgadget induced by these $p$ vertices, so
    $\vk$'s signature is $\bigotimes_{i=1}^p \Delta_{b_i}$ for some $b_1,\ldots,b_p \in X$.
    Since $\vk$ includes \emph{all}
    the vertices in $\Omega$ assigned $X$-pins, its complement $\overline{\vk}$ is an $\fc$-gadget.
    See \autoref{fig:pin_gadget}. 
    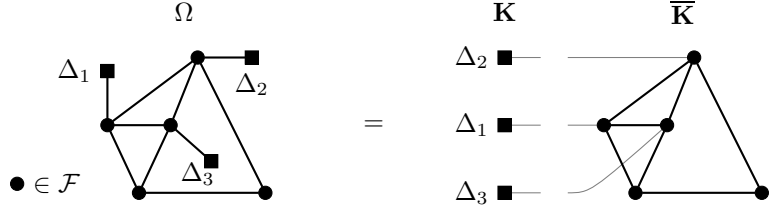
\begin{figure}[ht!]
        \centering
        \begin{tikzpicture}[scale=0.6]
    \GraphInit[vstyle=Classic]
    \SetUpEdge[style=-]
    \SetVertexMath

    \def\wlen{0.8}
    \def\wgap{0.3}
    \def\xsh{11}

    \tikzset{VertexStyle/.style = {shape=circle, fill=black, minimum size=5pt, inner sep=1pt, draw}}

    \draw[thin, color=gray] (\xsh-2*\wlen-2*\wgap,3) -- (\xsh-\wlen-2*\wgap,3);
    \draw[thin, color=gray] (\xsh-2*\wlen-2*\wgap,1.5) -- (\xsh-\wlen-2*\wgap,1.5);
    \draw[thin, color=gray] (\xsh-2*\wlen-2*\wgap,0) -- (\xsh-\wlen-2*\wgap,0);

    \draw[thin, color=gray] (\xsh-\wlen,3) -- (\xsh+2,3);
    \draw[thin, color=gray] (\xsh-\wlen,1.5) -- (\xsh,1.5);
    \draw[thin, color=gray] (\xsh-\wlen,0) .. controls +(0.5,0) .. (\xsh+1.4,1.5);

    \foreach \xx/\ii in {0/0,\xsh/1} {
        \Vertex[x=2+\xx,y=3,NoLabel]{a\ii}
        \Vertex[x=0+\xx,y=1.5,NoLabel]{b\ii}
        \Vertex[x=3.5+\xx,y=0,NoLabel]{c\ii}
        \Vertex[x=0.7+\xx,y=0,NoLabel]{d\ii}
        \Vertex[x=1.4+\xx,y=1.5,NoLabel]{e\ii}

        \Edges(d\ii,e\ii,b\ii)
        \Edges(e\ii,a\ii,c\ii)
        \Edges(d\ii,b\ii,a\ii)
        \Edge(d\ii)(c\ii)
    }
 
    \Vertex[x=-2,y=0.2,L={\in \fc}]{ax8}

    \tikzset{VertexStyle/.style = {shape=rectangle, fill=black, minimum size=5pt, inner sep=1pt, draw}}
    \Vertex[x=0,y=2.7,L={\Delta_1},Lpos=180]{p1}
    \Vertex[x=3.2,y=3,L={\Delta_2},Lpos=270]{p2}
    \Vertex[x=2.3,y=0.7,L={\Delta_3},Lpos=210,Ldist=-0.3cm]{p3}
    \Edge(b0)(p1)
    \Edge(a0)(p2)
    \Edge(e0)(p3)

    \Vertex[x=\xsh-2*\wlen-2*\wgap,y=3,L={\Delta_2},Lpos=180]{p2}
    \Vertex[x=\xsh-2*\wlen-2*\wgap,y=1.5,L={\Delta_1},Lpos=180]{p1}
    \Vertex[x=\xsh-2*\wlen-2*\wgap,y=0,L={\Delta_3},Lpos=180]{p3}

    \node at (1.7,4) {$\Omega$};
    \node at (\xsh-2*\wlen-2*\wgap,4) {$\vk$};
    \node at (\xsh+1.75,4) {$\overline{\vk}$};

    \node at (\xsh-5.1,1.5) {$=$};
\end{tikzpicture}
        \caption{An $\fc'$-grid $\Omega$ and its pin-induced subgadget $\vk$ and complement $\overline{\vk}$}
        \label{fig:pin_gadget}
    \end{figure}
    Hence the signature $F$ of $\overline{\vk}$ is in $\fc$, as $\fc$ is quantum-gadget-closed.
    From $\Omega = \langle \vk,\overline{\vk} \rangle$, we obtain
    \[
        \holant_\Omega = \left\langle \bigotimes_{i=1}^p \Delta_{b_i}, F \right\rangle = F_{\vb},
    \]
    where $\vb = b_1b_2 \ldots b_p$.
    A similar calculation, using the fact that $\gc$ is also quantum-gadget-closed, 
    shows $\holant_{\Omega_{\fc'\to\gc'}} = ((H^{-1}_X \oplus I_Y)G)_{\vb}$ for
    $\gc \ni G \leftrightsquigarrow F$. Then, since $\vb \in X^p$, we have, using \eqref{eq:hx},
    \[
        \holant_\Omega = F_{\vb} = (F|_X)_{\vb} = (H_X^{-1} G|_X)_{\vb} = 
        (((H^{-1}_X \oplus I_Y)G)|_X)_{\vb}
        = ((H^{-1}_X \oplus I_Y)G)_{\vb}
        = \holant_{\Omega_{\fc'\to\gc'}}
    \]
    (where the fourth equality uses \autoref{cor:on_diag_block} with $K := G$ and
    $H := H_X^{-1} \oplus I_Y$).
    So $\fc'$ and $\gc'$ are Holant-indistinguishable, completing the proof of \autoref{claim:equiv}.

    By \autoref{prop:closure}, $\overline{\fc'}$ and $\overline{\gc'}$ are also Holant-indistinguishable
    and, by \autoref{prop:transform} and \autoref{prop:union}, 
    it suffices to show $\overline{\fc'}$ and $\overline{\gc'}$ are
    ortho-equivalent to complete the proof. 
    Note that $\overline{\fc'}$ 
    and $\overline{\gc'}$ still contain $\mathds{1}_Y$ (as $\mathds{1}_Y$, being 0 on $X$, is unaffected
    by the transform $H_X^{-1} \oplus I_Y$), so we may
    apply \autoref{lem:restriction} to conclude
    $\big(\overline{\fc'}\big)\big|_Y$ and $\big(\overline{\gc'}\big)\big|_Y$ are Holant-indistinguishable. 
    Again, since $|Y| < q$, there is an orthogonal
    $H_Y \in \rr^{Y \times Y}$ such that $H_Y\big(\overline{\fc'}\big)\big|_Y = \big(\overline{\gc'}\big)\big|_Y$. Define
    \begin{equation}
        \fc'' = \overline{\fc'} \sqcup \{\Delta_b \mid b \in Y\} \text{ and }
        \gc'' = \big((I_X \oplus H_Y^{-1}) \overline{\gc'} \big) \sqcup \{\Delta_b \mid b \in Y\}.
    \label{eq:fcprimeprime}
    \end{equation}
    By \autoref{prop:transform} and \autoref{prop:union}, it suffices to show that $\fc''$ and $\gc''$
    are ortho-equivalent. 
    As $\overline{\fc'}$ and $\overline{\gc'}$ are, like $\fc$ and $\gc$, quantum-gadget-closed and 
    Holant-indistinguishable, we repeat the proof of \autoref{claim:equiv}, with 
    with \eqref{eq:fcprimeprime} in place of \eqref{eq:fcprime}, to show that
    $\fc''$ and $\gc''$ are Holant-indistinguishable.
    Observe that, by definition, $\fc''$ and $\gc''$ contain corresponding
    copies of all $Y$-pins $\{\Delta_b \mid b \in Y\}$.
    Furthermore, $\fc'' \supset \overline{\fc'} \supset \fc' \supset \{\Delta_b \mid b \in X\}$
    and 
    \[
        \gc'' \supset (I_X \oplus H_Y^{-1})\overline{\gc'} \supset (I_X \oplus H_Y^{-1})\gc' \supset (I_X \oplus H_Y^{-1})\{\Delta_b \mid b \in X\} = \{\Delta_b \mid b \in X\},
    \]
    where the final equality holds because the $X$-pins are zero on $Y$, so are unaffected by the
    transform $(I_X \oplus H_Y^{-1})$. Thus $\fc''$ and $\gc''$ contain corresponding copies of all
    pins $\Delta_b$ for $b \in [q]$. 
    We claim that this implies that $\fc'' = \gc''$.
    To see this, consider any $\fc'' \ni F \leftrightsquigarrow G \in \gc''$
    of common arity $n$. For any $\vx \in [q]^n$, since $\fc''$ and $\gc''$ are Holant-indistinguishable, 
    we have
    \[
        F_{\vx} = \left\langle F,\bigotimes_{i=1}^n \Delta_{x_i} \right\rangle 
        = \left\langle G,\bigotimes_{i=1}^n \Delta_{x_i} \right\rangle = G_{\vx}.
    \]
    Thus $F = G$ for every $\fc'' \ni F \leftrightsquigarrow G \in \gc''$, so $\fc'' = \gc''$.
\end{proof}

The final step is to realize the diagonal matrix $D$ in the statement of \autoref{lem:induction}, and
apply induction.
\begin{proof}[Proof of \autoref{thm:result}]
    $(ii) \implies (i)$ is \autoref{cor:holanttheorem} (which also follows directly from part 2 of
    \autoref{prop:closure}: viewing $\Omega$ as a 0-ary $\fc$-gadget, we have 
    $\holant_{\Omega} = M(\Omega) = H^{\otimes 0} M(\Omega) = M(\Omega_{\fc\to\gc}) = 
    \holant_{\Omega_{\fc\to\gc}}$).
    We show $(i) \implies (ii)$. Let $\fc,\gc$
    be Holant-indistinguishable.
    We proceed by induction on the domain size $q$. If $q = 1$ then the only orthogonal matrices
    in $\rr^{q \times q}$ are $\pm I$ and every signature $F \in \rr^{[1]^n}$, regardless of arity $n$, 
    has a single entry, which we denote by $F_0 \in \rr$. Let $F \leftrightsquigarrow G$ have
    even arity $n$. Contracting $\frac{n}{2}$ pairs of inputs of both $F$ and $G$, 
    we obtain corresponding signature grids with values $F_0$ and $G_0$, respectively, so $F_0 = G_0$, hence
    $F = G$. For $F \leftrightsquigarrow G$ of odd arity $n$, 
    \[
        F_0^2 = \|F\|^2 = \left\langle F,F\right\rangle = \left\langle G,G\right\rangle = \|G\|^2 = G_0^2,
    \]
    so $F_0 = \pm G_0$. Let $F \leftrightsquigarrow G$ have odd arity $n$
    and $F' \leftrightsquigarrow G'$ have odd arity $n'$, all nonzero. Then
    $F^{\otimes n'} \otimes (F')^{\otimes n}$
    has even arity $2nn'$; contracting its $nn'$ pairs of inputs gives a $\fc$-grid
    with value $F_0^{n'}(F'_0)^n$. The corresponding $\gc$-grid has value
    $G_0^{n'}(G'_0)^n$. Let $G_0 = (-1)^aF_0$ and $G'_0 = (-1)^{a'} F'_0$ for $a,a'\in \{0,1\}$. Then
    \[
        F_0^{n'} (F'_0)^n = G_0^{n'}(G'_0)^n = (-1)^{an' + a'n}F_0^{n'}(F'_0)^n
        = (-1)^{a + a'}F_0^{n'}(F'_0)^n
    \]
    (where in the final equality we used that $n,n'$ are odd), so $a = a'$. Thus there is a common
    $a \in \{0,1\}$ such that $((-1)^a)^n F = G$ for every $n$-ary $F \leftrightsquigarrow G$,
    so $(-I)^a \fc = \gc$.

    Now assume $q > 1$. By \autoref{prop:closure}, we may assume $\fc$ and $\gc$ are 
    quantum-gadget-closed. By \autoref{lem:offdiagblock}, there is an $H \in \stab(\fc\oplus\gc)$
    with either $H|_{\fc,\gc} \neq 0$ or $H|_{\gc,\fc} \neq 0$. Assume WLOG that $H|_{\gc,\fc} \neq 0$.
    Let $H|_{\gc,\fc} = U^\top D V$ be the singular value decomposition of $H|_{\gc,\fc}$, with
    $U,V$ orthogonal and $D \neq 0$ diagonal, all real. By \autoref{prop:transform}, we may replace $\fc$
    with $V \fc$ and $\gc$ with $U \gc$.
    This has the effect of replacing $\fc \oplus \gc$ with
    $(V\fc)\oplus (U \gc) = (V \oplus U)(\fc \oplus \gc)$ (by \eqref{eq:oplus_action}),
    which replaces
    $\stab(\fc\oplus\gc)$ with $(V \oplus U) \circ \stab(\fc\oplus\gc) \circ 
    (V \oplus U)^{-1} = (V \oplus U) \circ \stab(\fc\oplus\gc) \circ 
    (V \oplus U)^\top$. In particular, $H$ is replaced with
    \[
        \begin{bmatrix} V & 0 \\ 0 & U \end{bmatrix} 
        \begin{bmatrix} H|_{\fc,\fc} & H|_{\fc,\gc} \\ U^\top DV & H|_{\gc,\gc} \end{bmatrix}
        \begin{bmatrix} V^\top & 0 \\ 0 & U^\top \end{bmatrix} =
        \begin{bmatrix} VH|_{\fc,\fc}V^\top & VH|_{\fc,\gc}U^\top \\ U(U^\top DV) V^\top & U H|_{\gc,\gc} U^\top \end{bmatrix}
        = \begin{bmatrix} * & * \\ D & * \end{bmatrix}.
    \]
    To summarize, after transforming $\fc$ by $V$ and $\gc$ by $U$, we have
    $H = \begin{bmatrix} * & * \\ D & * \end{bmatrix} \in \stab(\fc \oplus \gc)$ for nonzero diagonal
    $D$. We consider two cases for $D$: either $D \in \spn(I)$ or $D \not\in \spn(I)$.
    First, suppose $D \in \spn(I)$, so $D = cI$ for $c \neq 0$. 
    Let $F \leftrightsquigarrow G$ be nonzero $n$-ary signatures with $n \geq 2$
    (note that $F = 0 \iff G = 0$ because
    $\|F\|^2 = \langle F,F\rangle = \langle G,G\rangle = \|G\|^2$).
    By part 3 of \autoref{prop:invariant_tcwd}, we have 
    \begin{equation}
        H^{\otimes n-1}(F\oplus G)^{n-1,1} = (F\oplus G)^{n-1,1} H.
        \label{eq:hfg}
    \end{equation}
    Now, by \autoref{prop:block} with $K := F \oplus G$ and \eqref{eq:oplus_index},
    we can write \eqref{eq:hfg} as the block matrix equation
    \begin{equation}
        \label{eq:block_hfg}
        \begin{bmatrix} 
            * & * & \ldots & * \\
            \vdots & \vdots & \iddots & \vdots \\
            * & * & \ldots & *\\
            D^{\otimes n-1} & * & \ldots & *
        \end{bmatrix}
        \begin{bmatrix} 
            F^{n-1,1} & 0\\
            0 & 0\\
            \vdots & \vdots\\
            0 & 0\\
            0 & G^{n-1,1}
        \end{bmatrix}
        =
        \begin{bmatrix} 
            F^{n-1,1} & 0\\
            0 & 0\\
            \vdots & \vdots\\
            0 & 0\\
            0 & G^{n-1,1}
        \end{bmatrix}
        \begin{bmatrix}
            * & * \\
            D & *
        \end{bmatrix}.
    \end{equation}
    The bottom-left block of \eqref{eq:block_hfg} is 
    $D^{\otimes n-1} F^{n-1,1} = G^{n-1,1}D$; using $D = cI$, this is equivalent to
    \begin{equation}
        \label{eq:cfg}
        c^{n-2} F = G.
    \end{equation}
    Then
    \begin{equation}
        \label{eq:fnorm}
        \|F\|^2 = \langle F,F\rangle = \langle G,G\rangle = c^{2(n-2)}\|F\|^2.
    \end{equation}
    As $\fc$ and $\gc$ are quantum-gadget-closed, there are some
    $\fc \ni F \leftrightsquigarrow G \in \gc$ with arity $n \geq 3$, 
    so \eqref{eq:fnorm} gives $c = \pm 1$.
    Now applying \eqref{eq:cfg} to any $n$-ary pair $F \leftrightsquigarrow G$ with $n \geq 2$
    gives $c^nF = c^{n-2}F = G$, so $(cI) F = G$, with $cI \in O(q)$.
    For unary ($n=1$) $F$ and $G$, since $\fc$ and $\gc$ are quantum-gadget-closed, they contain the 
    ternary signatures $F^{\otimes 3}$ and $G^{\otimes 3}$, respectively. So,
    by the previous reasoning, $(cI) F^{\otimes 3} = G^{\otimes 3}$, or equivalently
    $(cF)^{\otimes 3} = G^{\otimes 3}$, which implies $cF = G$, as $F$ and $G$ are 
    real-valued. Combining the non-unary and unary cases gives $(cI) \fc = \gc$.

    Otherwise, $D \not\in \spn(I)$. 
    We will show $\fc \cup \{D\}$ and $\gc \cup \{D\}$ are Holant-indistinguishable,
    then apply \autoref{lem:induction}. 
    The proof is illustrated in \autoref{fig:grid_trace}.
    \begin{figure}[ht!]
        \centering
        \input{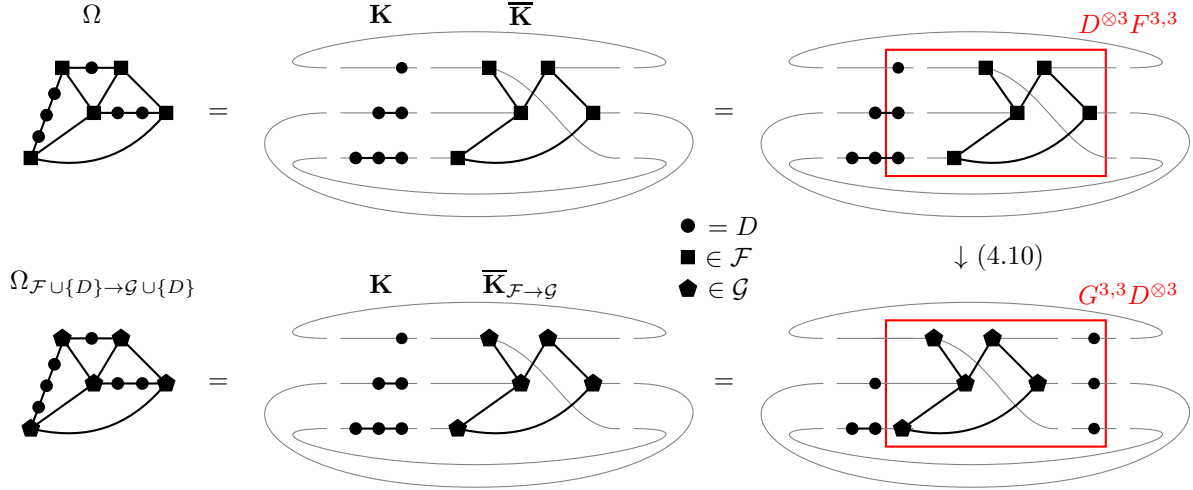}
        \caption{The Holant-value-preserving transformation from $\Omega$ to
        $\Omega_{\fc\cup\{D\}\to\gc\cup\{D\}}$ in the $D \not\in \spn(I)$ case.
        The transition from the bottom right grid to the bottom center grid wraps the
        three $D$ vertices on the right around to the left along their respective wires.}
        \label{fig:grid_trace}
    \end{figure}
    Consider a $\fc \cup \{D\}$-grid $\Omega$ with at least one vertex assigned $D$ (if $\Omega$ has
    no such vertex then we are done, as $\fc$ and $\gc$ are Holant-indistinguishable). Let $\vk \subset
    \Omega$ be the subgadget induced by all vertices assigned $D$.
    Any connected component of $\vk$ is either a cycle or 
    a binary path gadget with signature $D^m$ for some $m$. 
    The multiplicative factors from corresponding $D$-cycles in $\Omega$
    and $\Omega_{\fc\cup\{D\}\to\gc\cup\{D\}}$ cancel, so, disregarding its cycle components,
    $\vk$ consists of $p$ disconnected path gadgets for some $p$. By rearranging the dangling edges
    of $\vk$ and $\overline{\vk}$,
    we may assume $\vk \in \mathfrak{G}_{\{D\}}(p,p)$ with
    $M(\vk) = \bigotimes_{i=1}^p D^{m_i}$ for $m_1,\ldots,m_p \geq 1$, and furthermore that
    $\overline{\vk} \in \mathfrak{G}_{\fc}(p,p)$, and that
    connecting the $i$th left input and $i$th right input of $\vk \circ \overline{\vk}$,
    for $i \in [p]$, reconstructs $\Omega$ (see \autoref{fig:grid_trace}).
    Since $\overline{\vk}$ is an $\fc$-gadget and $\fc$ is
    quantum-gadget-closed, $\overline{\vk}$ has signature $F$ for some $F \in \fc$. Then
    \begin{equation}
        \label{eq:trace_f}
        \holant_\Omega = \tr\left(M(\vk) M(\overline{\vk})\right)
        = \tr\left(\left(\bigotimes_{i=1}^p D^{m_i}\right) F^{p,p}\right).
    \end{equation}
    With $G \leftrightsquigarrow F$, we similarly have
    \begin{equation}
        \label{eq:trace_g}
        \holant_{\Omega_{\fc\cup\{D\}\to\gc\cup\{D\}}}
        = \tr\left(\left(\bigotimes_{i=1}^p D^{m_i}\right) G^{p,p}\right).
    \end{equation}
    As in \eqref{eq:hfg}, part 3 of \autoref{prop:invariant_tcwd} gives
    \[
        H^{\otimes p}(F \oplus G)^{p,p} = (F \oplus G)^{p,p}H^{\otimes p},
    \]
    which has block form
    \[
        \begin{bmatrix} 
            * & * & \ldots & *\\
            \vdots & \vdots & \iddots & \vdots\\
            * & * & \ldots & *\\
            D^{\otimes p} & * & \ldots & *
        \end{bmatrix}
        \begin{bmatrix} 
            F^{p,p} & 0 & \ldots & 0 & 0\\
            0 & 0 &\ldots & 0 & 0\\
            \vdots & \vdots & \ddots & \vdots & \vdots\\
            0 & 0 &\ldots & 0 & 0\\
            0 & 0 & \ldots & 0 & G^{p,p}
        \end{bmatrix}
        =
         \begin{bmatrix} 
            F^{p,p} & 0 & \ldots & 0 & 0\\
            0 & 0 &\ldots & 0 & 0\\
            \vdots & \vdots & \ddots & \vdots & \vdots\\
            0 & 0 &\ldots & 0 & 0\\
            0 & 0 & \ldots & 0 & G^{p,p}
        \end{bmatrix}           
        \begin{bmatrix} 
            * & * & \ldots & * \\
            \vdots & \vdots & \iddots & \vdots \\
            * & * & \ldots & *\\
            D^{\otimes p} & * & \ldots & *
        \end{bmatrix}.
    \]
    The bottom left block of this equation is
    \begin{equation}
        D^{\otimes p} F^{p,p} = G^{p,p} D^{\otimes p}.
        \label{eq:p_block}
    \end{equation}
    Now \eqref{eq:trace_f}, \eqref{eq:p_block}, and \eqref{eq:trace_g} give
    \begin{align*}
        \holant_{\Omega}
        &= \tr\left(\left(\bigotimes_{i=1}^p D^{m_i}\right) F^{p,p}\right) \\
        &= \tr\left(\left(\bigotimes_{i=1}^p D^{m_i-1}\right) D^{\otimes p}  F^{p,p}\right) \\
        &= \tr\left(\left(\bigotimes_{i=1}^p D^{m_i-1}\right) G^{p,p} D^{\otimes p}\right) \\
        &= \tr\left(D^{\otimes p}\left(\bigotimes_{i=1}^p D^{m_i-1}\right) G^{p,p} \right) \\
        &= \tr\left(\left(\bigotimes_{i=1}^p D^{m_i}\right) G^{p,p} \right) \\
        &= \holant_{\Omega_{\fc\cup\{D\}\to\gc\cup\{D\}}}.
    \end{align*}
    Thus $\fc \cup \{D\}$ and $\gc \cup \{D\}$ are Holant-indistinguishable. This fact, along with the
    induction hypothesis, lets us apply \autoref{lem:induction} to conclude that $\fc \cup \{D\}$
    and $\gc \cup \{D\}$ are ortho-equivalent. Therefore, by \autoref{prop:union}, $\fc$ and $\gc$
    are ortho-equivalent.
\end{proof} 

\section{Consequences of \autoref{thm:result}}
\label{sec:corollaries}
In this section, we exploit the expressiveness of the Holant framework to show that \autoref{thm:result}
encompasses a variety of existing results, and derive a few novel consequences.
\subsection{Counting CSP and graph homomorphisms}
For signature set $\fc$, define the \emph{counting constraint satisfaction problem} $\csp(\fc)$
with \emph{constraint function} set $\fc$
to be the problem $\holant(\fc \mid \eq)$. Vertices assigned signatures in $\fc$ and $\eq$ are 
\emph{constraints} and \emph{variables}, respectively, and a $(\fc\mid\eq)$-grid $\Omega$ is a 
constraint-variable incidence graph, where a variable appears in all of its incident constraints. Then
$\holant_\Omega(\fc\mid\eq)$ is the sum over all variable assignments of the product of the constraint
evaluations.
Like Holant, $\csp$ is a well-studied
problem in counting complexity, with dichotomy theorems classifying $\csp(\fc)$ as either tractable
or \#P-hard proved for increasingly broad classes of constraint function sets 
\cite{bulatov_2013, dyer_richerby, cai-chen-lu, cai-chen-complexity}. 

By inserting a dummy degree-2 constraint vertex assigned $(=_2) \in \eq$ between adjacent variable vertices and combining adjacent constraint
vertices assigned $=_a$ and $=_b$ into a single constraint vertex assigned $=_{a+b-2}$, we see that
$\holant(\fc \cup \eq)$ is equivalent to $\holant(\fc \mid \eq)$. 
Say $\fc$ and $\gc$ are \emph{isomorphic} if there exists a permutation
matrix $H$ satisfying $H\fc = \gc$ -- in other words, $\fc$ and $\gc$
are the same up to relabeling of their domains. Using a standard Vandermonde
interpolation argument, Xia \cite{xia} shows that $H$ satisfies $H \eq = \eq$ if and only
$H$ is a permutation matrix. 
Say that $\fc$ and $\gc$ are $\csp$-indistinguishable if $\fc\cup\eq$ and $\gc\cup\eq$ are 
Holant-indistinguishable (in other words, every $\csp$ instance has the same value whether we use 
constraint functions from $\fc$ or from $\gc$).
Applying \autoref{thm:result} to $\fc \cup \eq$ and $\gc \cup
\eq$, we obtain the main result of Young \cite{young2022equality} for real-valued constraint functions.
\begin{corollary}
    Let $\fc$ and $\gc$ be sets of real-valued constraint functions.
    Then $\fc$ and $\gc$ are isomorphic if and only if $\fc$ and $\gc$ are $\csp$-indistinguishable.
    \label{cor:isomorphism}
\end{corollary}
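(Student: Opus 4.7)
The plan is to apply \autoref{thm:result} directly to the augmented sets $\fc\cup\eq$ and $\gc\cup\eq$. By definition, $\csp$-indistinguishability of $\fc$ and $\gc$ is exactly Holant-indistinguishability of $\fc\cup\eq$ and $\gc\cup\eq$, where the bijection defining similarity matches $F\leftrightsquigarrow G$ on $\fc$ and $=_n\leftrightsquigarrow=_n$ on the equalities. \autoref{thm:result} then produces a real orthogonal matrix $H$ with $H(\fc\cup\eq)=\gc\cup\eq$ respecting this correspondence; in particular $HF=G$ for each $\fc\ni F\leftrightsquigarrow G\in\gc$ and $H(=_n)=(=_n)$ for every $n\geq 1$.

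The next step is to pin $H$ down to a permutation matrix. Here I would invoke the standard Vandermonde interpolation argument of Xia \cite{xia} mentioned in the preamble to this subsection: the only invertible (a fortiori, the only orthogonal) matrices satisfying $H\eq=\eq$ are permutation matrices. Our $H$ satisfies this, so $H$ is a permutation matrix, and the remaining identity $H\fc=\gc$ exhibits $\fc$ and $\gc$ as identical after relabeling the domain $[q]$, which is the definition of isomorphism.

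The converse direction is immediate: if $H\fc=\gc$ for some permutation matrix $H$, then $H$ fixes each $=_n$ pointwise (permutations obviously preserve equality signatures), so $H(\fc\cup\eq)=\gc\cup\eq$, and the implication $(ii)\Rightarrow(i)$ of \autoref{thm:result} (equivalently \autoref{cor:holanttheorem}) yields Holant-indistinguishability of the two unions, i.e., $\csp$-indistinguishability of $\fc$ and $\gc$.

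The only real obstacle --- a mild bookkeeping point --- is verifying that the correspondence between $\fc\cup\eq$ and $\gc\cup\eq$ is faithfully tracked by \autoref{thm:result}, so that the extracted $H$ genuinely stabilizes each $=_n$ individually rather than merely mapping $\eq$ to itself as a set. This is built into the definition of ortho-equivalence between similar sets, since corresponding signatures are required to have matching arities and the orthogonal transformation acts on the pair-by-pair correspondence. Once this point is noted, the corollary is essentially a two-line consequence of \autoref{thm:result} plus Xia's characterization of the stabilizer of $\eq$.
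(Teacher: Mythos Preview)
Your proposal is correct and follows essentially the same route as the paper: apply \autoref{thm:result} to $\fc\cup\eq$ and $\gc\cup\eq$, then invoke Xia's Vandermonde argument that any orthogonal $H$ fixing $\eq$ must be a permutation matrix. Your explicit bookkeeping about the correspondence $=_n\leftrightsquigarrow=_n$ is slightly more careful than the paper's prose, but the argument is the same.
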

As discussed in \autoref{sec:holant}, 
$\holant(A_X \mid \eq) \equiv \csp(A_X)$ counts the number of homomorphisms to graph $X$. Therefore
\autoref{cor:isomorphism} is a generalization of the classical theorem of Lovász \cite{lovasz_operations}
that two graphs are isomorphic if and only if they admit the same number of homomorphisms from every
graph.

Let $\eq_2 \subset \eq$ be the set of equality signatures of even arity. 
Schrijver \cite{schrijver_tensor_2008} 
shows\footnote{The First Fundamental Theorem for $S_q^{\pm} \subset O(q)$ (the group of signed
    permutation matrices) states that $T(\rr^q)^{S_q^{\pm}} = \overline{\eq_2}$
    (cf. \autoref{thm:fft}, and recall that $\langle \mathcal{W} \rangle_+ = \overline{\varnothing}$).
    It follows as in the proof of \autoref{thm:intertwiner_gadget} that $\stab(\eq_2) = S_q^{\pm}$.
    The fact that $\stab(\eq) = S_q \subset O(q)$ (the group of permutation matrices) similarly follows
    from the First Fundamental Theorem for $S_q$, which states that
    $T(\rr^q)^{S_q} = \overline{\eq}$.
}
that $H$
satisfies $H\eq_2 = \eq_2$ if and only if $H$ is a signed permutation matrix (a matrix with entries
in $\{0,\pm 1\}$ and exactly one nonzero entry in each row and column). As above,
$\holant(\fc \cup \eq_2)$ is equivalent to $\holant(\fc \mid \eq_2)$ (critically, if $=_a \in \eq_2$
and $=_b \in \eq_2$, then $=_{a+b-2} \in \eq_2$). Then, defining $\csp^2(\fc) := \holant(\fc \mid \eq_2)$ as $\csp(\fc)$
restricted to instances in which every variable appears an even number of times \cite{cai2015holant,huang_2016_dichotomy}, we have
\begin{corollary}
    Let $\fc$ and $\gc$ be sets of real-valued constraint functions.
    Then there is a signed permutation matrix $P$ satisfying $\gc = P\fc$ if and only if
    $\fc$ and $\gc$ are $\csp^2$-indistinguishable.
\end{corollary}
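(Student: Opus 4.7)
The plan is to follow the template of \autoref{cor:isomorphism} almost verbatim, substituting $\eq_2$ for $\eq$ and invoking Schrijver's characterization (cited just above) of signed permutation matrices as exactly those orthogonal $H$ with $H\eq_2 = \eq_2$. First I would verify the Holant equivalence $\csp^2(\fc) \equiv \holant(\fc \cup \eq_2)$ by the same gadget manipulations used for $\csp$: inserting a dummy vertex assigned $(=_2) = I \in \eq_2$ on any edge connecting two $\fc$-vertices of a $(\fc \cup \eq_2)$-grid, and conversely merging adjacent $\eq_2$-vertices assigned $(=_a)$ and $(=_b)$ in a $(\fc \mid \eq_2)$-grid into a single vertex assigned $(=_{a+b-2}) \in \eq_2$ (where $a+b-2$ is even because $a,b$ are). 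Neither transformation affects Holant values, so $\fc$ and $\gc$ are $\csp^2$-indistinguishable if and only if $\fc \cup \eq_2$ and $\gc \cup \eq_2$ are Holant-indistinguishable under the natural similarity that sends each $(=_n) \in \eq_2$ to its own copy.

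For the $(\Leftarrow)$ direction, \autoref{thm:result} applied to $\fc \cup \eq_2$ and $\gc \cup \eq_2$ produces an orthogonal $H$ with $H(\fc \cup \eq_2) = \gc \cup \eq_2$ respecting correspondences. Since the similarity matches each $(=_n)$ to itself, this forces $H(=_n) = (=_n)$ for every even $n$, i.e.\ $H\eq_2 = \eq_2$. By Schrijver's characterization $H$ must be a signed permutation matrix $P$, and then $P\fc = \gc$. For the $(\Rightarrow)$ direction, any signed permutation $P$ satisfies $P\eq_2 = \eq_2$, hence $P(\fc \cup \eq_2) = \gc \cup \eq_2$ as similar sets; since $P$ is orthogonal, \autoref{cor:holanttheorem} gives Holant-indistinguishability of the unions, which descends to $\csp^2$-indistinguishability of $\fc$ and $\gc$.

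There is no real obstacle, since both ingredients -- the Holant-to-$\csp^2$ reduction and the $\stab(\eq_2) = S_q^{\pm}$ fact -- are either standard or cited explicitly. The only care needed is bookkeeping of the similarity: one must use the correspondence that pairs each $(=_n)$ with the identical $(=_n)$ on the other side, so that the orthogonal matrix output by \autoref{thm:result} stabilizes $\eq_2$ element-wise (and thus lies in $\stab(\eq_2)$) rather than merely permuting it as a set.
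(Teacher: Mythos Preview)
Your proposal is correct and follows essentially the same approach as the paper: reduce $\csp^2$-indistinguishability to Holant-indistinguishability of $\fc \cup \eq_2$ and $\gc \cup \eq_2$ via the merge/split gadget manipulations (noting that $a+b-2$ stays even), apply \autoref{thm:result}, and use Schrijver's characterization $\stab(\eq_2) = S_q^{\pm}$. Your explicit attention to the correspondence pairing each $(=_n)$ with itself is exactly the right bookkeeping to ensure the resulting $H$ lies in $\stab(\eq_2)$.
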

In particular, since (unweighted) graph adjacency matrices $A_X$ and $A_Y$ have entries in $\{0,1\}$,
we have $P^{\otimes 2}(A_X)^{2,0} = (A_Y)^{2,0} \implies 
    (P')^{\otimes 2}(A_X)^{2,0} = (A_Y)^{2,0}$,
where $P'$ is the permutation matrix created by flipping every $-1$ entry of $P$ to $1$. Therefore
we have the following sharpening of Lovász's theorem.
\begin{corollary}
    Graphs $X$ and $Y$ are isomorphic if and only if they admit the same number of homomorphisms
    from those graphs in which all vertices have even degree.
\end{corollary}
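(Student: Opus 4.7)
The plan is to derive this corollary from the immediately preceding corollary about $\csp^2$-indistinguishability applied to the singleton sets $\fc = \{A_X\}$ and $\gc = \{A_Y\}$, and then invoke the $0/1$ sharpening argument already sketched in the excerpt. The $(\Longrightarrow)$ direction is immediate: any graph isomorphism $X \to Y$ induces a bijection $\hom(K,X) \to \hom(K,Y)$ for every graph $K$, even-degree or otherwise.

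For $(\Longleftarrow)$, the main step is to translate the hypothesis into $\csp^2$-indistinguishability of $\{A_X\}$ and $\{A_Y\}$. I would set up a correspondence between $(A_X \mid \eq_2)$-grids $\Omega$ and graphs $K$ whose vertex degrees are all even and at least $2$: given such a $K$, build $\Omega$ by placing an $=_{\deg(v)} \in \eq_2$ at each $v \in V(K)$ and a degree-$2$ constraint vertex assigned $A_X$ on each edge, so that the argument of \autoref{fig:hom} specializes to $\holant_\Omega(A_X \mid \eq_2) = \hom(K,X)$. Conversely, every $(A_X \mid \eq_2)$-grid arises this way, with $K$ having all even degrees $\geq 2$ since every vertex assigned $\eq_2$ has arity $\geq 2$. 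Since the stated hypothesis gives equal homomorphism counts from all even-degree $K$ (a superset of those with minimum degree $\geq 2$), it yields $\holant_\Omega(A_X \mid \eq_2) = \holant_{\Omega_{\{A_X\}\to\{A_Y\}}}(A_Y \mid \eq_2)$ for every $(A_X \mid \eq_2)$-grid $\Omega$; that is, $\{A_X\}$ and $\{A_Y\}$ are $\csp^2$-indistinguishable.

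The preceding corollary then produces a signed permutation matrix $P$ with $P\{A_X\} = \{A_Y\}$, i.e.\ $P A_X P^\top = A_Y$ in matrix form. Writing $P = D\Pi$ with $\Pi$ a permutation matrix and $D$ a diagonal $\pm 1$ matrix, the observation already made right before the corollary statement applies verbatim: since both $A_X, A_Y \in \{0,1\}^{q \times q}$, the equation $D(\Pi A_X \Pi^\top) D = A_Y$ forces $D_{ii} D_{jj} = 1$ wherever $(\Pi A_X \Pi^\top)_{ij} = 1$, so $\Pi A_X \Pi^\top = A_Y$ and $\Pi$ exhibits $X \cong Y$.

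I do not foresee any real obstacle here; all the hard work sits in the preceding corollary, which itself rests on \autoref{thm:result}. The only genuinely new content is the (routine) bookkeeping identification of $(A_X \mid \eq_2)$-grids with the class of graphs appearing in the statement, and the mild observation that the stated hypothesis is strictly stronger than what the translation requires (since graphs with isolated vertices are allowed in the hypothesis but not needed for the translation).
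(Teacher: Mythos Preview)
Your proposal is correct and follows the same approach the paper uses: apply the preceding $\csp^2$ corollary to $\{A_X\}$ and $\{A_Y\}$, then strip the signs from the resulting signed permutation matrix using that $A_X,A_Y$ are $0/1$-valued. You have simply spelled out the grid-to-graph translation and the $P=D\Pi$ argument that the paper leaves implicit in the single sentence preceding the corollary. Two very minor points you glossed over: you should note that the hypothesis applied to a single isolated vertex gives $|V(X)|=|V(Y)|$, which is needed for $\{A_X\}$ and $\{A_Y\}$ to be \emph{similar} (a standing assumption for the corollary you invoke), and an $(A_X\mid\eq_2)$-grid may contain vertexless loops, each contributing a factor $|V(X)|=|V(Y)|$ that cancels; neither affects the argument.
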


\subsection{Simultaneous matrix similarity}
Let $\fc$ and $\gc$ be sets of binary signatures, thought of as matrices. Any connected $\fc$-grid
$\Omega$
is a cycle. Breaking an edge of the cycle, we obtain a binary path gadget with signature matrix
$\prod_{i=1}^c F_i$, where, depending on its orientation, each $F_i \in \fc$ or $F_i^\top \in \fc$.
Connecting the path's two dangling edges, we reform $\Omega$, which thus has Holant value
$\tr\left(\prod_{i=1}^c F_i\right)$. Let $\Gamma_{\fc}$ be the set of all finite products of
matrices in $\fc$ and $\fc^\top := \{F^\top \mid F \in \fc\}$. Define $\Gamma_{\gc}$ similarly and,
for a word $w \in \Gamma_{\fc}$, construct $w_{\fc\to\gc} \in \Gamma_{\gc}$ by replacing every character
$F$ or $F^\top$ in $w$ by the corresponding $G$ or $G^\top$, respectively. 
For orthogonal $H$, we have $H\fc = \gc \iff HF^{1,1} = G^{1,1}H$ for every $F \leftrightsquigarrow G$
(by part 2 of \autoref{prop:invariant_tcwd}), so, in this
setting, \autoref{thm:result} is equivalent to the following real-valued case of a
classical theorem from representation theory, due to Specht \cite{specht} and Wiegmann \cite{wiegmann}.
Grohe, Rattan, and Seppelt \cite{grohe_homomorphism_2022} also give a combinatorial proof.
\begin{corollary}
    Let $\fc, \gc \subset \rr^{q\times q}$. Then there is an $H \in O(q)$ such that $HF = GH$ for every
    $\fc \ni F \leftrightsquigarrow G \in \gc$ if and only if
    $\tr(w) = \tr(w_{\fc\to\gc})$ for every $w \in \Gamma_{\fc}$.
    \label{cor:specht}
\end{corollary}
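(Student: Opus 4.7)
The plan is to apply \autoref{thm:result} to $\fc$ and $\gc$ viewed as sets of binary signatures, identifying each $F \in \fc$ with its $(1,1)$-flattening $F^{1,1} \in \rr^{q \times q}$. It suffices to verify two equivalences: first, that the trace-equality hypothesis of the corollary is equivalent to Holant-indistinguishability of $\fc$ and $\gc$; and second, that the orthogonal equivalence $H\fc = \gc$ produced by \autoref{thm:result} is exactly simultaneous orthogonal similarity $HF = GH$.

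For the first equivalence, every vertex in an $\fc$-grid $\Omega$ has degree two, so each connected component of $\Omega$ is either a vertexless loop or a cycle (possibly degenerate, e.g.\ a single vertex with a self-loop). A vertexless loop contributes a factor $q$ both to $\holant_\Omega(\fc)$ and to $\holant_{\Omega_{\fc\to\gc}}(\gc)$. As sketched in the paragraph preceding this corollary, a cycle component $C$ contributes $\tr(w_C)$, where $w_C \in \Gamma_\fc$ is obtained by reading the signatures around $C$ in order and transposing those whose input orientation is opposite to the chosen traversal direction; the corresponding cycle in $\Omega_{\fc\to\gc}$ contributes $\tr((w_C)_{\fc\to\gc})$. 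Since the Holant value is the product of the contributions of the connected components, matching Holant values over all $\fc$-grids reduces to matching $\tr(w_C)$ for every such cycle-word $w_C$. Conversely, any word $w \in \Gamma_\fc$ is realizable as $w_C$ for some single-cycle grid: take a cycle of the same length as $w$, assign its vertices the letters of $w$ in order, and orient each vertex so that its two incident edges match the transposition pattern of $w$. Hence trace-equality on all of $\Gamma_\fc$ is equivalent to Holant-indistinguishability of $\fc$ and $\gc$.

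For the second equivalence, part 2 of \autoref{prop:invariant_tcwd} gives $(HF)^{1,1} = H F^{1,1} H^\top$ for any orthogonal $H$, so the condition $H\fc = \gc$ amounts to $H F^{1,1} H^\top = G^{1,1}$, equivalently $HF = GH$, for every $F \leftrightsquigarrow G$. Plugging both equivalences into \autoref{thm:result} yields \autoref{cor:specht}. I anticipate no substantive obstacle; the only mild point deserving care is the realizability of an arbitrary word in $\Gamma_\fc$ (including mixed patterns of transposed and untransposed letters) as a single-cycle Holant instance, which is handled by the explicit cycle construction above.
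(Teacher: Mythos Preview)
Your proposal is correct and follows exactly the paper's approach: the paragraph immediately preceding the corollary already establishes that connected $\fc$-grids are cycles with Holant value $\tr(w)$ for $w \in \Gamma_\fc$, and that $H\fc = \gc$ is equivalent to $HF = GH$ via part 2 of \autoref{prop:invariant_tcwd}, so the corollary is a direct restatement of \autoref{thm:result}. Your added care about realizability of arbitrary words and the handling of vertexless loops fills in details the paper leaves implicit, but the argument is the same.
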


Suppose $\fc = \{A_X\}$ and $\gc = \{A_Y\}$ for graphs $X$ and $Y$.
Transform an
$A_X$-grid $\Omega$ to a $(A_X \mid \eq)$-grid $\Omega'$ by inserting a dummy degree-2 vertex assigned 
$I = (=_2) \in \eq$ between every consecutive pair of vertices in the cycle. Recall from
\autoref{sec:holant} that
$\holant_{\Omega'}(A_X \mid \eq)$ counts the number of homomorphisms from graph $K$ to $X$, where $K$ is the graph obtained from $\Omega'$ by ignoring the vertices assigned $A_X$. Here
$K$ is a cycle, so we have the following well-known result, an alternate formulation
of this case of \autoref{cor:specht}.
\begin{corollary}
    Let $X$ and $Y$ be graphs. Then there is an orthogonal matrix $H$ satisfying $H A_X = A_Y H$
    if and only if $X$ and $Y$ admit the same number of homomorphisms from all cycles.
    \label{cor:cycle}
\end{corollary}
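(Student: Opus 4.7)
The plan is to derive this as an immediate translation of \autoref{cor:specht} applied to the one-element sets $\fc = \{A_X\}$ and $\gc = \{A_Y\}$. The first step is to observe that, since adjacency matrices are symmetric, $\fc^\top = \fc$, so every word $w \in \Gamma_\fc$ reduces to a single power $A_X^k$ for some $k \geq 1$, and the correspondence $w \mapsto w_{\fc \to \gc}$ simply sends $A_X^k$ to $A_Y^k$ (and likewise on the $\gc$ side).

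The second step is the standard identity $\tr(A_X^k) = \hom(C_k, X)$, where $C_k$ denotes the $k$-cycle viewed as a multigraph (so that $k = 1, 2$ are included, matching the Holant framework's allowance of loops and multi-edges). This follows by expanding
\[
    \tr(A_X^k) = \sum_{v_0, \dots, v_{k-1} \in V(X)} A_X(v_0, v_1)\, A_X(v_1, v_2) \cdots A_X(v_{k-1}, v_0),
\]
which enumerates closed walks of length $k$ in $X$, each of which corresponds to exactly one homomorphism $C_k \to X$. Alternatively, one can follow the paragraph immediately preceding the statement: an $A_X$-grid is a cycle, insert an $I = (=_2) \in \eq$ on every edge to turn it into an $(A_X \mid \eq)$-grid, and recall from \autoref{sec:holant} that the Holant value then equals the number of homomorphisms from the resulting cycle-shaped $K$ to $X$.

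Combining these observations, both directions are immediate. For $(\Longleftarrow)$, the hypothesis yields $\tr(A_X^k) = \hom(C_k, X) = \hom(C_k, Y) = \tr(A_Y^k)$ for every $k \geq 1$, hence $\tr(w) = \tr(w_{\fc \to \gc})$ for every $w \in \Gamma_\fc$, so \autoref{cor:specht} produces an orthogonal $H$ with $H A_X = A_Y H$. For $(\Longrightarrow)$, orthogonal conjugation preserves trace, so $H A_X = A_Y H$ gives $\tr(A_Y^k) = \tr((H A_X H^\top)^k) = \tr(A_X^k)$, and thus equal cycle-homomorphism counts for every $k$. There is no substantial obstacle: the content is a clean translation between the matrix-trace hypothesis of Specht--Wiegmann and the combinatorial language of cycle homomorphisms, with the only minor subtlety being the multigraph interpretation of $C_1$ and $C_2$.
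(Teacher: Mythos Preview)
Your proposal is correct and follows essentially the same route as the paper: specialize \autoref{cor:specht} to $\fc=\{A_X\}$, $\gc=\{A_Y\}$, use $A_X^\top=A_X$ so that words reduce to powers, and identify $\tr(A_X^k)$ with $\hom(C_k,X)$ (the paper phrases this last identification via the $(A_X\mid\eq)$-grid, which you also mention as the alternative). The only cosmetic difference is that for $(\Longrightarrow)$ you invoke trace-invariance under conjugation directly rather than appealing to \autoref{cor:specht}, which is perfectly fine.
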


A matrix $H$ is \emph{pseudo-stochastic} if all of its rows and columns sum to 1. Dell, Grohe, and
Rattan \cite{dell} proved that graphs $X$ and $Y$ admit the same number of homomorphisms from all
paths if and only if there is a pseudo-stochastic matrix $H$ such that $HA_X = A_YH$. 
Using \autoref{thm:result}, we combine this result with \autoref{cor:cycle}, which also
reproduces a combinatorial explanation 
for the connection between pseudo-stochastic matrices and homomorphisms from paths 
\cite{grohe_homomorphism_2022}.
\begin{corollary}
    Let $X$ and $Y$ be graphs. Then there is a pseudo-stochastic orthogonal matrix $H$ satisfying 
    $H A_X = A_Y H$
    if and only if $X$ and $Y$ admit the same number of homomorphisms from all cycles and paths.
\end{corollary}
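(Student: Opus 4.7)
The plan is to apply \autoref{thm:result} to the enlarged signature sets $\fc = \{A_X, \mathbf{1}\}$ and $\gc = \{A_Y, \mathbf{1}\}$, where $\mathbf{1} \in \rr^q$ is the all-ones unary signature defined by $\mathbf{1}(x) = 1$ for every $x \in [q]$. The guiding observation is that, for orthogonal $H$, the condition $H\mathbf{1} = \mathbf{1}$ is precisely pseudo-stochasticity: $H\mathbf{1} = \mathbf{1}$ says the column sums of $H$ equal $1$, and applying $H^\top$ to both sides gives $H^\top\mathbf{1} = \mathbf{1}$, so the row sums equal $1$ as well. Encoding $\mathbf{1}$ into both signature sets therefore lets \autoref{thm:result} produce an orthogonal $H$ that is automatically pseudo-stochastic.

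For the $(\Leftarrow)$ direction, I would first classify the connected $\fc$-grids. Since $A_X$ has arity $2$ and $\mathbf{1}$ has arity $1$, every such grid has maximum degree $2$ and hence is either a cycle of $A_X$-vertices, with Holant value $\tr(A_X^k) = \hom(C_k, X)$, or a path with $\mathbf{1}$-endpoints and $A_X$-interior, with Holant value $\mathbf{1}^\top A_X^k \mathbf{1} = \hom(P_{k+1}, X)$, where $k$ is the number of interior $A_X$-vertices. Holant values factor over connected components, so the hypothesis that $X$ and $Y$ agree on all cycle and path homomorphism counts immediately gives $\holant_\Omega(\fc) = \holant_{\Omega_{\fc\to\gc}}(\gc)$ for every $\fc$-grid $\Omega$. \autoref{thm:result} then produces an orthogonal $H$ with $H\fc = \gc$, which (via part 2 of \autoref{prop:invariant_tcwd}) decodes as $HA_X = A_YH$ together with $H\mathbf{1} = \mathbf{1}$, giving a pseudo-stochastic orthogonal $H$ as required.

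The $(\Rightarrow)$ direction is immediate from the Orthogonal Holant Theorem (\autoref{cor:holanttheorem}): given a pseudo-stochastic orthogonal $H$ with $HA_X = A_YH$, pseudo-stochasticity gives $H\mathbf{1} = \mathbf{1}$, so $H\fc = \gc$; hence $\fc$ and $\gc$ are Holant-indistinguishable, which specializes to equality of all cycle and path homomorphism counts by the grid classification above.

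I do not anticipate a serious obstacle: the argument is essentially bookkeeping once the encoding trick is identified, after which \autoref{thm:result} does the real work. The one mild subtlety worth flagging is that, for orthogonal $H$, fixing $\mathbf{1}$ on one side (column sums) automatically fixes it on the other (row sums), so a single unary signature suffices to capture full pseudo-stochasticity rather than needing to enforce column and row sums separately.
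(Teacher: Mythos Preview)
Your proposal is correct and takes essentially the same approach as the paper: both introduce the all-ones unary signature $=_1$ (your $\mathbf{1}$) alongside $A_X$ and $A_Y$, classify the resulting connected grids as cycles and paths, and then invoke \autoref{thm:result}. One trivial slip: $H\mathbf{1}=\mathbf{1}$ gives the \emph{row} sums of $H$ equal to $1$ (and $H^\top\mathbf{1}=\mathbf{1}$ gives the column sums), but since you correctly conclude both hold, this does not affect the argument.
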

\begin{proof}
    Consider $\holant(A_X \cup \{=_1\})$. Any $A_X\cup \{=_1\}$-grid is a disjoint union of 
    cycles composed of signatures assigned $A_X$ and
    paths with degree-2 internal vertices assigned $A_X$ and degree-1 endpoints
    assigned $=_1 \in \eq$. As discussed before
    \autoref{cor:cycle}, every cycle $A_X$-grid $\Omega$ has the same Holant value as
    $\Omega_{A_X \to A_Y}$ if and only if $X$ and $Y$ admit the same
    number of homomorphisms from every cycle. Similarly inserting dummy vertices assigned $=_2$ between
    every pair of $A_X$ vertices in a path component, we produce
    a $(A_X \mid \eq)$-grid whose Holant value equals the number of homomorphisms to $X$ from the
    underlying path.
    Thus $X$ and $Y$ admit the same number of homomorphisms from all cycles and all paths
    if and only if $A_X \cup \{=_1\}$ and $A_Y \cup \{=_1\}$ are Holant-indistinguishable.
    By \autoref{thm:result}, this is equivalent to the existence of an orthogonal $H$ satisfying 
    $H A_X = A_Y H$ and $H (=_1) = (=_1)$.
    The vector form of $=_1$ is the all-ones vector, so $H (=_1) = (=_1)$ if and only if the rows
    of $H$ sum to $1$ and (since $H(=_1) = (=_1) \iff H^\top(=_1) = (=_1)$) the columns of $H$
    sum to 1.
\end{proof}

\subsection{Odeco signature sets}
In this section, we derive a consequence of \autoref{thm:result} that is not a counting
indistinguishability theorem,
but a combinatorial characterization of signatures that are simultaneosly `diagonalizable'.
\begin{definition}[$\geneq$, odeco]
    Define the set of \emph{general equalities} (or \emph{weighted equalities}) on domain $[q]$ as
    $\geneq = \{=_n^{\va} \mid n \geq 0, \va \in \rr^{[q]}\}$, where $=_n^{\va}$ is the symmetric
    $n$-ary signature defined by
    \[
        (=_n^{\va})_{\vx} = \begin{cases} a_q & x_1 = \ldots = x_n = q \\ 0 & \text{otherwise}.
        \end{cases}
    \]
    A set $\fc$ of symmetric signatures is \emph{orthogonally decomposable}, or \emph{odeco}, if it is ortho-equivalent
    to a general equality set -- that is, there exists an $H \in O(q)$ such that $H\fc \subset \geneq$.
\end{definition}

The term ``odeco'' was coined by Robeva \cite{robeva} to refer to individual symmetric tensors
(signatures) which are ortho-equivalent to a general equality. A binary $\geneq$ signature has a
diagonal signature matrix, so the spectral theorem states that every (real) symmetric binary 
signature is odeco
(recall part 2 of \autoref{prop:invariant_tcwd}).
Any nonzero edge assignment for a connected $\geneq$-gadget $\vk$ must assign all edges, including 
dangling edges, the same domain element, so, if $\vk$ has arity $n$ and is composed of vertices assigned signatures with 
weights $\va^1,\ldots,\va^p$, then $\vk$ has signature $=_n^{\va^1 \bullet \ldots \bullet \va^p} \in \geneq$, where $\bullet$ denotes entrywise product.
In particular, if $\vk$ is a $\geneq$-grid $\Omega$, then
$\holant_\Omega = \sum_{i=1}^q (\va^1 \bullet \ldots \bullet \va^p)_i$. Thus, if 
$\fc$ is odeco, then the computational problem $\holant(\fc)$ is polynomial-time tractable via
orthogonal Holographic transformation.
\begin{definition}[$*$]
    For symmetric signatures $F_1,F_2 \in \fc$ of arity $n_1$ and $n_2$, respectively, 
    construct the $(n_1+n_2-2)$-ary signature $F_1 * F_2 \in \ofc$ from $F_1 \otimes F_2$
    by contracting an input of $F_1$ and an input of $F_2$.
\end{definition}
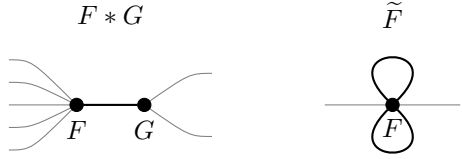
\begin{figure}[ht!]
    \centering
    \begin{tikzpicture}[scale=0.6]
    \GraphInit[vstyle=Classic]
    \SetUpEdge[style=-]
    \SetVertexMath
    \tikzset{VertexStyle/.style = {shape=circle, fill=black, minimum size=5pt, inner sep=1pt, draw}}

    \def\xsh{7}
    \def\wlen{1.5}

    \foreach \ii in {-2,-1,0,1,2} {
        \draw[thin, color=gray] (-\wlen,\ii*0.5) .. controls +(0.5,0) .. (0,0);
    }

    \foreach \ii in {-1,1} {
        \draw[thin, color=gray] (2*\wlen,\ii*0.7) .. controls +(-0.5,0) .. (\wlen,0);
    }

    \Vertex[x=0,y=0,L=F_1,Lpos=270]{u}
    \Vertex[x=\wlen,y=0,L=F_2,Lpos=270]{v}
    \Edge(u)(v)

    \draw[thin, color=gray] (\xsh-\wlen, 0) -- (\xsh+\wlen, 0);
    \Vertex[x=\xsh,y=0,L=F_1,Lpos=270,Ldist=-0.03cm]{w}
    \Loop[dist=1.8cm, dir=NO, style={thick,-}](w)
    \Loop[dist=1.8cm, dir=SO, style={thick,-}](w)

    \node at (\wlen/2,2) {$F_1 * F_2$};
    \node at (\xsh,2) {$\widetilde{F_1}$};

\end{tikzpicture}
    \caption{Illustrating (the gadgets with signatures) $F * G$ and $\widetilde{F}$ for 6-ary $F$ and 3-ary $G$.}
    \label{fig:star_tilde}
\end{figure}
See \autoref{fig:star_tilde}.
$F_1 * F_2$ doesn't depend on which inputs we connect, as $F_1$ and $F_2$ are symmetric.
For $\vx \in [q]^{n_1-1}$ and $\vy \in [q]^{n_2-1}$, we have (with vectors viewed as input lists)
\[
    (F_1 * F_2)(\vx,\vy) = \sum_{z \in [q]} F_1(\vx,z) F_2(\vy,z).
\]
\begin{proposition}
    \label{prop:odeco}
    For any $H \in O(q)$, we have
    $HF \in \geneq$ if and only if $H(F * F) \in \geneq$. 
\end{proposition}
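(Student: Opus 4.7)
The plan is first to establish the intertwining identity $H(F * F) = (HF) * (HF)$ for orthogonal $H$. This lets us substitute $F' := HF$ and reduce the biconditional to the simpler statement: \emph{a real symmetric signature $F'$ lies in $\geneq$ if and only if $F' * F'$ does.} The intertwining identity is a direct specialization of part 2 of \autoref{prop:invariant_tcwd}: since $*$ is realized by contracting one input pair via $I^{0,2}$, and $H$ is orthogonal (so $H^\top H = I$), the two $H^{\otimes n}$ actions on the two copies of $F$ collapse along the contracted index to a single $H^{\otimes (2n-2)}$ action on the surviving inputs.

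For the forward direction of the reduced claim, if $F' = {=_n^{\va}}$, then expanding $(F' * F')(\vx, \vy) = \sum_z F'(\vx, z) F'(\vy, z)$ forces $\vx$ and $\vy$ to both be constant at a common value $w$ (with $z = w$), so $F' * F' = {=_{2n-2}^{\va \bullet \va}} \in \geneq$ (with $\bullet$ the entrywise product).

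The reverse direction is the main content and where I expect the work to lie. Suppose $F' * F' \in \geneq$ and let $\vx \in [q]^{n-1}$ be non-constant. Then $\vx\vx \in [q]^{2n-2}$ is non-constant, so $0 = (F' * F')(\vx, \vx) = \sum_z F'(\vx, z)^2$, and the vanishing sum of squares forces $F'(\vx, z) = 0$ for every $z$. By symmetry of $F'$, any non-constant input $\vw \in [q]^n$ can be permuted to place two of its distinct entries into the first $n-1$ slots (possible whenever $n \geq 3$), rendering those $n-1$ coordinates non-constant and giving $F'(\vw) = 0$; hence $F'$ vanishes off the constant inputs and lies in $\geneq$. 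The main obstacle will be the arity-$2$ case, where a single-entry tuple is trivially constant and the permutation step breaks down; this must be handled separately, e.g., by invoking the spectral theorem to conclude every real symmetric binary signature is orthogonally diagonalizable and so odeco under its diagonalizing $H$.
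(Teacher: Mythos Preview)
For $n \geq 3$ your argument is essentially the paper's: both reduce via the intertwining identity $H(F*F) = (HF)*(HF)$ to a statement about $F' := HF$, and both hinge on the sum-of-squares identity $(F'*F')(\vx',\vx') = \sum_{z} F'(\vx',z)^2$. The paper argues by contradiction (pick a non-constant $\vx$ with $F'(\vx)\neq 0$, drop one coordinate, obtain a positive value where there should be zero) while you argue directly (every non-constant $(n-1)$-tuple forces vanishing, then permute), but these are the same idea and both are correct.

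Your $n=2$ handling, however, does not prove the proposition. The statement concerns a \emph{fixed} $H$: if $H(F*F) \in \geneq$, then $HF \in \geneq$ for that same $H$. Your spectral-theorem suggestion only produces \emph{some} orthogonal matrix diagonalizing $F$, not the given one. In fact the biconditional is simply false for $n=2$: with $q=2$, $F = \left(\begin{smallmatrix} 0 & 1 \\ 1 & 0 \end{smallmatrix}\right)$ and $H=I$, one has $H(F*F) = F^2 = I \in \geneq$ while $HF = F \notin \geneq$. The paper's own one-line treatment of this case (``if $H$ diagonalizes $F^2$ then $H$ diagonalizes $F$'') has the same defect. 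This does no damage downstream, since the proposition is only invoked in the proof of \autoref{thm:odeco} to replace non-unary \emph{odd}-arity signatures by $F*F$, i.e.\ only for $n \geq 3$; but you should recognize that no $n=2$ fix along these lines can succeed, rather than that yours merely needs more detail.
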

\begin{proof}
    $(\Longrightarrow)$: If $HF = E$ for $H \in O(q)$ and $E \in \geneq$, then by part 2 of 
    \autoref{prop:closure}, $H(F * F) = (HF) * (HF) = E * E \in \geneq$.

    $(\Longleftarrow)$: Let $F \in \rr^{[q]^n}$. Every unary signature is in $\geneq$, so if $n=1$
    then we are done. If $n = 2$ then $F * F = F^2$ (a matrix product) and $F$ is a real symmetric
    matrix, so if $H$ diagonalizes $F^2$ then $H$ diagonalizes $F$.
    Now assume $n \geq 3$.
    Let $\geneq \ni E = H(F * F) = (HF) * (HF)$ for $H \in O(q)$.
    Suppose toward contradiction that $HF \not\in
    \geneq$, so there is a $\vx \in [q]^n$ such that $(HF)(\vx) \neq 0$ but $\exists i,j$ such that 
    $x_i \neq x_j$. Assume WLOG that
    $i,j \neq n$. Construct $\vx' \in [q]^{n-1}$ by deleting the $n$th (last) entry of $\vx$. Then
    \[
        E(\vx',\vx') = ((HF) * (HF))(\vx',\vx') = \sum_{z \in [q]} (HF)(\vx',z)^2 \geq (HF)(\vx)^2 > 0,
    \]
    contradicting $E \in \geneq$. Thus $HF \in \geneq$.
\end{proof}
\begin{theorem}
    \label{thm:odeco}
    Let $\fc$ be a set of real-valued symmetric signatures (tensors). The following are equivalent.
    \begin{enumerate}[label=(\roman*)]
        \item $\fc$ is odeco.
        \item Every connected $\fc$-gadget has a symmetric signature.
        \item For every $F_1,F_2 \in \fc$, $F_1 * F_2$ is symmetric.
    \end{enumerate}
\end{theorem}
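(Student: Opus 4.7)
The plan is to establish the cycle $(i) \Rightarrow (ii) \Rightarrow (iii) \Rightarrow (i)$.

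For $(i) \Rightarrow (ii)$, suppose $H\fc \subset \geneq$ for some $H \in O(q)$ and let $\vk$ be a connected $\fc$-gadget. Since each signature in $\geneq$ vanishes unless all its inputs coincide, every nonzero edge assignment of the (still connected) gadget $\vk_{\fc \to H\fc}$ colors all its edges with a single element of $[q]$; hence its signature has the form $=_n^{\va}$ and is symmetric. A calculation parallel to the proof of part 2 of \autoref{prop:invariant_tcwd} then shows that the signature $K$ of $\vk$ is obtained by applying $H^{-1}$ to this symmetric tensor. Since the action $K \mapsto H^{-1}K$ uses the same matrix on every tensor leg, it commutes with input permutations and hence preserves symmetry, so $K$ is symmetric.

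For $(ii) \Rightarrow (iii)$, $F_1 * F_2$ is by construction the signature of the connected two-vertex $\fc$-gadget whose two vertices, assigned $F_1$ and $F_2$, are joined by one edge; so (ii) gives that it is symmetric.

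The main direction is $(iii) \Rightarrow (i)$. Setting $F_1 = F_2 = F$ in (iii) and invoking the single-tensor theorem of Boralevi, Draisma, Horobeţ, and Robeva -- that a symmetric tensor $F$ is odeco iff $F * F$ is symmetric -- every $F \in \fc$ is individually odeco. To promote this to $\fc$ being odeco, I induct on the domain size $q$, in the spirit of \autoref{lem:induction}. If every $F \in \fc$ already lies in $\geneq$ (in the original basis), we are done. Otherwise I choose $F_0 \in \fc$ whose individual diagonalization has nonconstant weight vector; using \autoref{prop:transform} with that diagonalizing orthogonal matrix, we may assume $F_0 = {=_{n_0}^{\va}} \in \geneq$ with $\va$ nonconstant. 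A direct calculation exploiting the symmetry of $F * F_0$ together with the symmetry of each $F \in \fc$ shows that $F(\vy) = 0$ unless all entries of $\vy$ share the same value under $\va$ (when $n_0 = 2$), or unless $\vy$ is constant with value in the support of $\va$ or lies entirely outside the support of $\va$ (when $n_0 \geq 3$). Either case yields a nontrivial partition $[q] = X_1 \sqcup \cdots \sqcup X_r$ along which every $F \in \fc$ decomposes block-diagonally, with each $F|_{X_j}$ either already in $\geneq_{X_j}$ or again satisfying (iii) by the same support calculation. The inductive hypothesis produces orthogonal $H_{X_j}$ diagonalizing each block, and $H = \bigoplus_j H_{X_j}$ is the required orthogonal transformation sending $\fc$ into $\geneq$.

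The main obstacle is this last step: turning the pairwise combinatorial compatibility of (iii) into a global simultaneous orthogonal diagonalization. The key leverage is that, once $F_0$ has been placed into $\geneq$ form, (iii) becomes a rigid pointwise constraint forcing every $F \in \fc$ to be block-diagonal with respect to the level sets of $F_0$'s weight vector; this is where the full strength of (iii), beyond the individual odeco-ness provided by Boralevi et al, is used. Care is also required in choosing an $F_0$ whose diagonalization produces a nontrivial partition, which is handled by dispatching the degenerate case (all $F \in \fc$ already in $\geneq$) separately.
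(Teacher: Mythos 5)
Your argument is essentially correct but takes a genuinely different route from the paper's. The paper proves (ii) $\Rightarrow$ (i), not (iii) $\Rightarrow$ (i), as the substantial step: it first reduces to signatures of even arity, proves by induction on the number of vertices (\autoref{claim:even}) that every connected binary gadget signature factors as $\prod_i \widetilde{F_i}$ where the matrices $\widetilde{F_i}$ commute, simultaneously diagonalizes them by the spectral theorem, and then shows $\fc$ is Holant-indistinguishable from a $\geneq$-set, so that \autoref{thm:result} produces the single orthogonal transformation; (iii) $\Rightarrow$ (ii) is handled separately by a commuting-matrices-along-a-path argument. That route makes \autoref{thm:odeco} a true application of \autoref{thm:result} and, in the single-tensor case, yields a new combinatorial proof of the Boralevi--Draisma--Horobe\c{t}--Robeva theorem. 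You instead attack (iii) $\Rightarrow$ (i) directly: you import Boralevi et al.\ as a black box to get each $F$ individually odeco, place a pivot $F_0$ into $\geneq$ form, show (iii) forces every $F$ to be block-diagonal over the level-set partition of $F_0$'s weight vector, and induct on $q$. This bypasses \autoref{thm:result} entirely and is in that sense more elementary, but it consumes the single-tensor theorem rather than reproving it, which is exactly the algebraic-geometric input the paper's route is designed to avoid. Two small points to tighten: requiring $F_0$ to have a \emph{nonconstant} odeco weight vector is too strong --- an $F_0 \notin \geneq$ of arity at least $3$ can have constant weights, and in that case the same calculation (after diagonalizing $F_0$) forces every $F$ onto constant inputs directly, so the right criterion is simply to pick any $F_0 \notin \geneq$; and the invariance of condition (iii) under an orthogonal change of basis is not \autoref{prop:transform} (which concerns Holant-indistinguishability) but the identity $(HF_1) * (HF_2) = H(F_1 * F_2)$ from part 2 of \autoref{prop:closure}, combined with the fact that the $O(q)$-action commutes with permuting tensor inputs.
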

Robeva \cite{robeva} conjectured the equivalence of items (i) and (iii) when $\fc$
contains a single signature.
Boralevi, Draisma, Horobeţ, and Robeva \cite{boralevi_orthogonal_2017} confirmed this conjecture
using techniques from algebraic geometry.
We use \autoref{thm:result} to give a combinatorial proof, generalized to arbitrary symmetric
signature sets.
\begin{remark}
    If $\fc$ is a set of symmetric binary signatures, then $F_1 * F_2 = F_1 \circ F_2 = F_1F_2$
    (a matrix product) for $F_1,F_2 \in \fc$. In general, symmetric matrices commute if and only
    if their product is symmetric (as if $F_1F_2$ is symmetric then $F_1F_2 = (F_1F_2)^\top = F_2^\top
    F_1^\top = F_2F_1$ and if $F_1$ and $F_2$ commute then $(F_1F_2)^\top = F_2^\top F_1^\top = F_2F_1 = F_1F_2$).
    Therefore \autoref{thm:odeco} encompasses the extension of the spectral theorem
    which states that commuting symmetric matrices are simultaneously diagonalizable. We use this
    fact in the proof below.
    \label{rem:spectral}
\end{remark}
\begin{proof}[Proof of \autoref{thm:odeco}]
    (i) $\implies$ (ii),(iii): Suppose $H \fc \subset \geneq$ for some $H \in O(q)$.
    Let $K \in \overline{\fc}$ be the signature of a connected $\fc$-gadget (e.g. $K = F_1 * F_2$).
    By part 2 of \autoref{prop:closure}, $HK = J$, where $J$ is the signature of a connected 
    $\geneq$-gadget. Then $J \in \geneq$, so $J$, and hence $K = H^{-1}J$, are symmetric.

    (ii) $\implies$ (i): First, replace every non-unary odd-arity $F \in \fc$ by $F * F$
    (every unary signature is in $\geneq$, so simply remove all unaries from $\fc$). 
    This does not change
    the fact that $\fc$ satisfies item (ii), and, by \autoref{prop:odeco}, does not change whether
    $\fc$ satisfies item (i). Thus we may assume all signatures in $\fc$ have even arity. 
    For $F \in \fc$, let $\widetilde{F}$ be the 
    matrix of the binary signature constructed by contracting all but one pair of inputs of $F$ (see
    \autoref{fig:star_tilde}).
    Since $F$ is symmetric, $\widetilde{F}$ doesn't depend on how we pair up $F$'s inputs. 
    Every $\widetilde{F}$, and every composition $\widetilde{F_1} \circ \widetilde{F_2}$ for
    $F_1,F_2 \in \fc$, is the signature of a connected $\fc$-gadget, so is symmetric by assumption. 
    Therefore, as in \autoref{rem:spectral}, the matrices $\widetilde{F}$ for $F \in \fc$ all commute.
    \begin{claim}
        \label{claim:even}
        If $\vk$ is a connected binary $\fc$-gadget with $p$ vertices, assigned signatures 
        $F_1,\ldots,F_p \in \fc$, then $M(\vk) = \prod_{i=1}^p \widetilde{F_i}$.
    \end{claim}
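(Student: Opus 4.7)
My plan is to prove the claim by induction on the number of vertices $p$.

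For the base case $p = 1$, the gadget $\vk$ consists of a single vertex $v_1$ of arity $2k_1$ with two dangling edges and $k_1 - 1$ self-loops. By the symmetry of $F_1$, the particular pairing among the loop inputs is irrelevant, and the resulting binary signature is exactly $\widetilde{F_1}$, so $M(\vk) = \widetilde{F_1}$.

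For the inductive step $p \geq 2$, I reduce $\vk$ to a canonical ``pearl necklace'' gadget $\vk^{\ast}$ in which the vertices $v_1, \ldots, v_p$ are laid out along a path, each adjacent pair joined by one internal edge and each $v_i$ bearing $k_i - 1$ self-loops. A direct unpacking shows $M(\vk^{\ast}) = \widetilde{F_{v_1}}\widetilde{F_{v_2}}\cdots\widetilde{F_{v_p}}$, and since the matrices $\widetilde{F_i}$ pairwise commute (by hypothesis (ii) together with \autoref{rem:spectral}), this product equals $\prod_{i=1}^p \widetilde{F_i}$ regardless of the ordering along the necklace.

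The reduction from $\vk$ to $\vk^{\ast}$ proceeds through a sequence of local edge-pivoting moves. The central move is to replace two parallel internal edges between vertices $u$ and $v$ by one self-loop at $u$ and one self-loop at $v$. This move preserves the input count at every vertex. To justify that it also preserves $M(\vk)$, I consider the auxiliary 4-dangling $\fc$-gadget $\vk^{(4)}$ obtained from $\vk$ by cutting the two parallel edges, producing four new dangling edges (two at $u$ and two at $v$). By (ii), the signature $M(\vk^{(4)})$ is a fully symmetric 4-tensor, and both the ``parallel edges'' configuration and the ``two loops'' configuration arise from $\vk^{(4)}$ by two different patterns of pairwise contraction of its four dangling inputs; the permutation between these patterns is an allowed symmetry of $M(\vk^{(4)})$, so the two contractions yield the same binary signature. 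Analogous moves handle the other local topologies (loops adjacent to inter-vertex edges, triangular subconfigurations, and so on), and iterating such moves transforms $\vk$ into $\vk^{\ast}$.

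The main obstacle I expect is making the pivoting argument fully rigorous. In particular, I must track input orderings and dangling-edge identifications through each pivot, and I must handle the case in which cutting the two edges in question disconnects $\vk^{(4)}$ (so that (ii) only applies componentwise and the symmetries must be combined coherently across components). Another delicate point is confirming that a finite sequence of pivots suffices to reach $\vk^{\ast}$, which I would establish by first reducing to a ``simple graph'' state (at most one edge between any pair of distinct vertices) and then peeling off loops to align the loop counts. I expect the proof's success to hinge crucially on (ii): without it the claim is false, since commutativity of the $\widetilde{F_i}$ alone does not force different internal topologies of a binary gadget to yield the same signature matrix.
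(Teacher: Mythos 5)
Your base case and the central tool — cut edges to obtain a connected intermediate gadget whose signature is symmetric by hypothesis (ii), then observe that re-pairing its dangling edges in any other pattern gives the same binary signature — are exactly the engine the paper uses. But your reduction to a ``pearl-necklace'' normal form via an unspecified family of local moves is a different, and considerably more delicate, route than the paper's, and it has a genuine gap. Your central move (replace two parallel $uv$-edges by one loop at $u$ and one loop at $v$) can \emph{disconnect} the gadget: if those two parallel edges are the only connection between the component containing $u$ and the component containing $v$, then after the move the result is no longer a connected binary $\fc$-gadget, so the inductive invariant of \autoref{claim:even} itself breaks and the comparison with $\vk^\ast$ has no meaning. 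You anticipate that the intermediate cut gadget might be disconnected, but the deeper problem is that the \emph{output} of the move can be disconnected, and (ii) simply does not apply there. (Also, the cut gadget has six dangling edges, not four — the two original dangling edges of $\vk$ persist.) Finally, your catalogue of ``analogous moves'' and a termination argument for the reduction are left completely open, and both would require real work.

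The paper sidesteps all of this with a single, carefully chosen step per induction round: pick a vertex $v$ whose removal leaves the underlying multigraph connected, cut all but one of $v$'s edges to other vertices (so the intermediate $\vk'$ is automatically connected, with an odd number of dangling ends at $v$ and an odd number elsewhere), invoke (ii) once to get symmetry of $\vk'$, and re-pair all but two dangling ends so that $v$ ends up with only loops, one dangling edge, and one link to $\overline{\vk''_v}$. This gives $M(\vk) = \widetilde{F_p}\cdot M(\overline{\vk''_v})$ directly, $\overline{\vk''_v}$ is a connected binary gadget with $p-1$ vertices, and induction plus commutativity of the $\widetilde{F_i}$ (which you also correctly invoke) finishes. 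If you want to salvage your plan, you would need to restrict the parallel-edge move to pairs $u,v$ that are $2$-edge-connected through the rest of the gadget and add an explicit move that handles the bridge case — at which point you are essentially re-deriving the paper's peel-off step.
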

    We prove \autoref{claim:even} by induction on $p$. For $p=1$, by the symmetry of $F_1$,
    every connected binary $F_1$-gadget with a single vertex has signature $\widetilde{F_1}$. 
    Now suppose $p \geq 2$. $\vk$ contains a vertex $v$ whose removal does not disconnect
    $\vk$ (take e.g. the final vertex visited by breadth first search). 
    Assume WLOG that $v$ is assigned
    signature $F_p$. Construct $\vk'$ from $\vk$ by breaking 
    all but one edge between $v$ and other vertices (see \autoref{fig:symmetric_gadget}).
    Each broken edge becomes two new dangling edges.
    Since $\vk'$ is connected, its signature is symmetric by assumption.
    The number of dangling edges incident to $v$ is odd,
    as $v$ has even degree and exactly one edge to another vertex (loops on
    $v$ do not affect the parity). Since $\vk'$ has an
    even number of dangling edges (two plus twice the number of edges broken), there are an odd
    number of dangling edges incident to the other vertices of $\vk'$. Now create a binary gadget
    $\vk''$ from $\vk'$ by arbitrarily pairing up and connecting all but one dangling edge incident
    to $v$, and similarly pairing up and connecting all but one dangling edge incident to the other
    vertices of $\vk'$. We may also recover $\vk$ from $\vk'$ by connecting possibly different pairs of
    dangling edges (reforming the edges broken to create $\vk'$) and,
    since the signature of $\vk'$ is symmetric, the signature of a gadget produced by connecting 
    dangling edges of $\vk'$ does not depend on which pairs of dangling edges we connect 
    (although the underlying graphs of the gadgets differ). Therefore $M(\vk'') = M(\vk)$.
    \begin{figure}[ht!]
        \centering
        \begin{tikzpicture}[scale=0.6]
    \GraphInit[vstyle=Classic]
    \SetUpEdge[style=-]
    \SetVertexMath
    \tikzset{VertexStyle/.style = {shape=circle, fill=black, minimum size=5pt, inner sep=1pt, draw}}

    \def\wlen{1.2}
    \def\wgap{0.4}
    \def\yy{0.5}

    \def\ax{0}
    \def\ay{8*\yy}
    \def\bx{0.5}
    \def\by{6*\yy}
    \def\cx{2}
    \def\cy{7*\yy}
    \def\dx{0}
    \def\dy{3*\yy}
    \def\ex{1}
    \def\ey{3*\yy}
    \def\fx{2.5}
    \def\fy{2*\yy}
    \def\gx{1.5}
    \def\gy{1*\yy}

    \def\x{0}
    \def\xx{6.5}
    \def\xxx{15}
    \def\xxxx{22}

    \node at (\x+1.5,10.5*\yy) {$\vk$};
    \node at (\xxxx+1.5,10.5*\yy) {$\vk''$};
    \node at (\xx-2,4.5*\yy) {$=$};
    \node at (\xxx-1.8,4.5*\yy) {$\rightsquigarrow$};
    \node at (\xxxx-1.3,4.5*\yy) {$=$};

    \begin{scope}[xshift=\xx cm]
        \node at (-0.12,0.5*\yy) {$\textcolor{red}{\vk'}$};
        \draw[color=red,thick] (-0.7,-0.5*\yy) rectangle (\fx+\wlen-\wgap,9.5*\yy);
    \end{scope}

    \begin{scope}[xshift=\xxx cm]
        \node at (-0.12,0.5*\yy) {$\textcolor{red}{\vk'}$};
        \draw[color=red,thick] (-0.7,-0.5*\yy) rectangle (\fx+\wlen-\wgap,9.5*\yy);
    \end{scope}

    \begin{scope}[xshift=\xxxx cm]
        \node[font=\small] at (\cx+0.9,\cy-0.4) {$\textcolor{red}{\vk''_v}$};
        \draw[color=red,thick] (\cx-0.9,\cy-0.8) rectangle (\cx+1.4,\cy+0.8);
    \end{scope}

    \draw[thin, color=gray] (\cx,\cy) -- (\fx+\wlen,\cy);
    \draw[thin, color=gray] (\gx,\gy) -- (\fx+\wlen,\gy);

    \foreach \ii/\xsh in {1/\xx,2/\xxx} {
    \begin{scope}[xshift=\xsh cm]
        \draw[thin, color=gray] (\fx+\wlen,9*\yy) .. controls +(-0.5,0) .. (\cx,\cy);
        \draw[thin, color=gray] (\fx+\wlen,8*\yy) .. controls +(-0.5,0) .. (\cx,\cy);
        \draw[thin, color=gray] (\fx+\wlen,7*\yy) .. controls +(-0.5,0) .. (\cx,\cy);
        \draw[thin, color=gray] (\fx+\wlen,6*\yy) .. controls +(-0.5,0) .. (\cx,\cy);
        \draw[thin, color=gray] (\fx+\wlen,5*\yy) .. controls +(-0.5,0) .. (\cx,\cy);
        \draw[thin, color=gray] (\fx+\wlen,4*\yy) .. controls +(-0.5,0) .. (\bx,\by);
        \draw[thin, color=gray] (\fx+\wlen,3*\yy) .. controls +(-0.5,0) .. (\ex,\ey);
        \draw[thin, color=gray] (\fx+\wlen,2*\yy) .. controls +(-0.5,0) .. (\fx,\fy);
        \draw[thin, color=gray] (\fx+\wlen,1*\yy) .. controls +(-0.5,0) .. (\fx,\fy);
        \draw[thin, color=gray] (\fx+\wlen,0*\yy) .. controls +(-0.5,0) .. (\gx,\gy);
    \end{scope}
    }

    \begin{scope}[xshift=\xx cm]
    \foreach \high/\low/\m in {8/1/1.6,7/2/1.2,6/3/0.8,5/4/0.4} {
        \draw[thin, color=gray] (\fx+\wlen+\wgap,\low*\yy) .. controls
        +(0.1,0) and (\fx+\wlen+\wgap+\m,\low*\yy) ..
        (\fx+\wlen+\wgap+\m,{(\high+\low)/2*\yy}) .. controls
        (\fx+\wlen+\wgap+\m,\high*\yy) and +(0.1,0) ..
        (\fx+\wlen+\wgap,\high*\yy);
    }
    \end{scope}

    \begin{scope}[xshift=\xxx cm]
    \foreach \high/\low/\m in {9/8/0.6,6/5/0.6,4/3/0.6,2/1/0.6} {
        \draw[thin, color=gray] (\fx+\wlen+\wgap,\low*\yy) .. controls
        +(0.1,0) and (\fx+\wlen+\wgap+\m,\low*\yy) ..
        (\fx+\wlen+\wgap+\m,{(\high+\low)/2*\yy}) .. controls
        (\fx+\wlen+\wgap+\m,\high*\yy) and +(0.1,0) ..
        (\fx+\wlen+\wgap,\high*\yy);
    }
    \end{scope}

    \begin{scope}[xshift=\xxxx cm]
        \draw[thin, color=gray] (\cx,\cy) -- (\fx+\wlen,\cy);
        \draw[thin, color=gray] (\gx,\gy) -- (\fx+\wlen,\gy);
    \end{scope}

    \foreach \ii/\xsh in {0/\x,1/\xx,2/\xxx,3/\xxxx} {
        \begin{scope}[xshift=\xsh cm]
            \Vertex[x=\ax,y=\ay,NoLabel]{a\ii}
            \Vertex[x=\bx,y=\by,NoLabel]{b\ii}
            \Vertex[x=\cx,y=\cy,L=v,Lpos=90,Ldist=-0.05cm]{c\ii}
            \Vertex[x=\dx,y=\dy,NoLabel]{d\ii}
            \Vertex[x=\ex,y=\ey,NoLabel]{e\ii}
            \Vertex[x=\fx,y=\fy,NoLabel]{f\ii}
            \Vertex[x=\gx,y=\gy,NoLabel]{g\ii}
        \end{scope}        
    }    

    \foreach \ii in {0} {
        \Edges(a\ii,c\ii,b\ii,f\ii,g\ii,e\ii,c\ii)
        \Edges(g\ii,d\ii,b\ii,e\ii,d\ii)
        \Edge[style={bend right}](c\ii)(f\ii)
        \Edge[style={bend left}](c\ii)(f\ii)
        \Edge[style={bend right}](a\ii)(d\ii)
    }

    \foreach \ii in {1,2} {
        \Edge(c\ii)(a\ii)
        \Edge(d\ii)(g\ii)
        \Edge(b\ii)(f\ii)
        \Edges(e\ii,b\ii,d\ii,e\ii)
        \Edges(f\ii,g\ii,e\ii)
        \Edge[style={bend right}](a\ii)(d\ii)
    }

    \foreach \ii in {3} {
        \Edge(c\ii)(a\ii)
        \Edge(d\ii)(g\ii)
        \Edge(b\ii)(f\ii)
        \Edge[style={bend right}](b\ii)(e\ii)
        \Edge[style={bend left}](b\ii)(e\ii)
        \Edges(b\ii,d\ii,e\ii)
        \Edges(f\ii,g\ii,e\ii)
        \Edge[style={bend right}](a\ii)(d\ii)
        \Loop[dist=1cm, dir=NO, style={thick,-}](c\ii)
        \Loop[dist=1cm, dir=SO, style={thick,-}](c\ii)
        \Loop[dist=1cm, dir=EA, style={thick,-}](f\ii)
    }
\end{tikzpicture}
        \caption{Illustrating the proof of \autoref{claim:even} when $v$ has a single incident dangling
        edge in $\vk$. The cases where $v$ has zero or two incident dangling edges are similar.}
        \label{fig:symmetric_gadget}
    \end{figure}
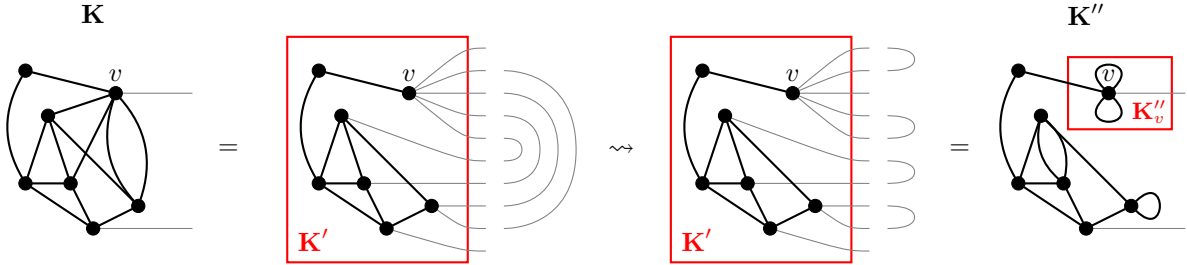

    Let $\vk''_v$ be the subgadget of $\vk''$ induced by $v$. $\vk''_v$ has
    two dangling edges, one dangling in $\vk''$, and one which, in $\vk''$, connects $v$ to a 
    vertex in $\overline{\vk''_v}$ (the complement of $\vk''_v$), with the remaining edges incident
    to $v$ paired into loops. Thus $M(\vk''_v) = \widetilde{F_p}$. 
    Similarly, $\overline{\vk''_v}$ has two dangling edges, one dangling in $\vk''$ and one incident to
    to $v$ in $\vk''$, so we recover $\vk''$ by reconnecting the edge between $\vk''_v$ and 
    $\overline{\vk''_v}$. Thus, applying induction to $\overline{\vk''_v}$, which has $p-1$ vertices,
    we obtain
    \[
        M(\vk) = M(\vk'') = M(\vk''_v) \circ M(\overline{\vk''_v}) = \widetilde{F_p} \circ
        \prod_{i=1}^{p-1} \widetilde{F_i} = \prod_{i=1}^{p} \widetilde{F_i}
    \]
    (recall that the $\widetilde{F_i}$ commute). This completes the proof of \autoref{claim:even}.

    The matrices $\widetilde{F}$ for $F \in \fc$ are symmetric and commute, so they are simultaneously diagonalizable:
    there is an $H \in O(q)$ such that, for every $F \in \fc$, $H\widetilde{F}H^\top = (=_2^{\va^F})$ 
    for some $\va^F \in \rr^{q}$.
    By \autoref{prop:transform} and part 2 of \autoref{prop:invariant_tcwd},
    we may replace $\fc$ with $H\fc$ to assume each $\widetilde{F} = (=_2^{\va^F})$. Define
    \[
       \gc = \{=_{\arity(F)}^{\va^F} \mid F \in \fc\} \subset \geneq,
    \]
    a set similar to $\fc$.
    Let $\Omega$ be a $\fc$-grid,
    containing signatures $F_1,\ldots,F_p$.
    As $\Omega$ is not a tree (all of its vertices have even degree), we can break some edge of $\Omega$
    to produce a connected binary $\fc$-gadget $\vk$. By \autoref{claim:even},
    \[
        M(\vk) = \prod_{i=1}^{p} \widetilde{F_i} = \prod_{i=1}^{p}(=_2^{\va^{F_i}})
        = \left(=_2^{\va^{F_1} \bullet \ldots \bullet \va^{F_p}}\right) = M(\vk_{\fc\to\gc}).
    \]
    Now, reconnecting the dangling edges of $\vk$, we find $\holant_{\Omega} = 
    \holant_{\Omega_{\fc\to\gc}}$. Thus $\fc$ and $\gc$ are Holant-indistinguishable, so, by
    \autoref{thm:result}, $\fc$ and $\gc$ are ortho-equivalent. Hence $\fc$ is odeco.

    (iii) $\implies$ (ii): Let $K$ be the $n$-ary signature of a connected $\fc$-gadget $\vk$.
    Every unary signature is trivially symmetric, so assume $n \geq 2$. It suffices to show that,
    for any fixed partial input $\vz \in [q]^{n-2}$ to $K$ and any $x,y \in [q]$, we have
    $K(x,y,\vz) = K(y,x,\vz)$ (where we assume WLOG that $x$ and $y$ are the first two inputs to $K$
    by reordering the dangling edges of $\vk$). Let $u$ and $w$ be the vertices of $\vk$ incident to
    the first and second dangling edges of $\vk$ (after reordering).
    If $u = w$ then we are done, as every $F \in \fc$ is symmetric.
    Otherwise, since $\vk$ is connected, it contains a path 
    $P = (u = v_0,v_1,\ldots,v_{p-2},v_{p-1} = w)$ 
    from $u$ to $w$, where $v_i$ is assigned signature $F_i \in \fc$, for $i \in [p]$.
    Let $E(P) := \{e_0,e_1,\ldots,e_{p-1},e_p\}$ be the edges of $P$, including the dangling edges
    $e_0$ and $e_p$ incident to $u$ and $w$, respectively.
    Then $e_i$ and $e_{i+1}$ are inputs to $F_i$ for all $i \in [p]$. 
    For any fixed assignment $\sigma: E(\vk) \setminus E(P) \to [q]$,
    define the matrix $F^\sigma_i \in \rr^{q \times q}$ by 
    $(F^\sigma_i)_{a,b} := F_i(\sigma|_{\delta(v_i)},a,b)$ 
    (that is, fix the inputs to $F_i$ from edges outside of $E(P)$). Then
    \begin{equation}
        K(x,y,\vz) = 
        \sum_{\substack{\sigma: E(\vk) \setminus E(P) \to [q] \\ \sigma(D) = \vz}} 
        \left(\prod_{v \in V(\vk) \setminus P} F_v(\sigma|_{\delta(v)})\right) 
        F^\sigma_P(x,y),
        \label{eq:kxyz}
    \end{equation}
    where $D$ is the ordered list of the last $n-2$ dangling edges of $\vk$ and
    \[
        F^\sigma_P(x,y) =
        \sum_{\substack{\phi: E(P) \to [q] \\ \phi(e_0)=x,\phi(e_p)=y}} 
            \prod_{i=0}^{p-1} F_i(\sigma|_{\delta(v_i)}, \phi(e_i),\phi(e_{i+1}))
        = \sum_{\substack{\phi: E(P) \to [q] \\ \phi(e_0)=x,\phi(e_p)=y}} 
        \prod_{i=0}^{p-1} (F^\sigma_i)_{\phi(e_i),\phi(e_{i+1})}
        = \left(\prod_{i=0}^{p-1} F^\sigma_i\right)_{x,y}.
    \]
    On the RHS of \eqref{eq:kxyz}, $x$ and $y$
    appear only in $F_P^\sigma(x,y)$. Thus it suffices to show that, for any fixed $\sigma$,
    $F^\sigma_P(x,y) = F^\sigma_P(y,x)$. For any $i,j \in [p]$ and $a,b \in [q]$,
    \begin{align*}
        &(F_i^\sigma F_j^\sigma)_{a,b} = \sum_{z \in [q]} (F_i^\sigma)_{a,z} (F_j^\sigma)_{z,b}
        = \sum_{z \in [q]} F_i(\sigma|_{\delta(v_i)}, a,z) F_j(\sigma|_{\delta(v_j)}, b,z)
        = (F_i * F_j)(\sigma|_{\delta(v_i)}, a, \sigma|_{\delta(v_j)}, b) \\
        &= (F_i * F_j)(\sigma|_{\delta(v_i)}, b, \sigma|_{\delta(v_j)}, a)
        = \sum_{z \in [q]} F_i(\sigma|_{\delta(v_i)}, b,z) F_j(\sigma|_{\delta(v_j)}, a,z) 
        = \sum_{z \in [q]} (F_i^\sigma)_{b,z} (F_j^\sigma)_{z,a}
        = (F_i^\sigma F_j^\sigma)_{b,a},
        \label{eq:swap}
    \end{align*}
    where the fourth equality uses the assumption that $F_i * F_j$ is symmetric.
    Thus $F_i^\sigma F_j^\sigma$ is symmetric. Both $F_i^\sigma$ 
    and $F_j^\sigma$ are symmetric, as $F_i$ and $F_j$ are symmetric, so, as in \autoref{rem:spectral},
    $F^\sigma_i$ and $F^\sigma_j$ commute. Therefore
    \[
        F^\sigma_P(x,y)
        = \left(\prod_{i=0}^{p-1} F^\sigma_i\right)_{x,y}
        = \left(\prod_{i=0}^{p-1} F^\sigma_i\right)^\top_{y,x}
        = \left(\prod_{i=0}^{p-1} (F^\sigma_{p-1-i})^\top\right)_{y,x}
        = \left(\prod_{i=0}^{p-1} F^\sigma_i\right)_{y,x}
        = F^\sigma_P(y,x). \qedhere
    \]
\end{proof}

\section{Possible Variations of \autoref{thm:result}}
\label{sec:variations}
\subsection{Complex-valued signatures}
\label{sec:complex}
Although we have focused on real-valued signatures and matrices, the general and orthogonal 
Holant Theorems hold for complex-valued signatures and matrices. However,
\autoref{thm:result} does not hold for general sets $\fc$ and $\gc$ of complex-valued
signatures, even when we allow $H$ to be complex. For example, Draisma and Regts
\cite{draisma_tensor_2013} consider the \emph{vanishing} unary signature $F \in \mathbb{C}^{[2]^1}$
defined by $F_0 = 1$ and $F_1 = i$. Any $F$-grid $\Omega$ with at least
one vertex satisfies $\holant_\Omega(F) = 0$, as $\Omega$ is a disjoint union of $K_2$ complete graphs,
with each component having value $f^\top f = [1,i]^\top [1,i] = 0$. 
Thus $F$ is Holant-indistinguishable from 0, but
there is no orthogonal matrix $H$, real or complex, satisfying $Hf = H [1,i]^\top = [0,0]^\top$.

Draisma and Regts also observe that a direct extension of \autoref{thm:duality} to the complex
orthogonal group $O_q(\cc)$ is impossible because $O_q(\cc)$ is not compact. However, they
also provide some techniques for handling complex-valued
signatures in edge coloring models, including a version \cite[Theorem 3]{draisma_tensor_2013} of \autoref{thm:intertwiner_gadget}
for edge-coloring models over any algebraically closed field of characteristic zero.
Cai and Young \cite{cai_planar_2023,young2022equality} prove their counting
indistinguishability theorems for complex-valued signature sets with the assumption that the sets
are \emph{conjugate-closed}, meaning they must contain the entrywise conjugate of each of their
complex signatures.
It is feasible that \autoref{thm:result} could similarly hold for complex-valued conjugate-closed
$\fc$ and $\gc$ and complex orthogonal $H$ (indeed, this result was later obtained in
\cite{bipartite}).

\subsection{Quantum orthogonal matrices and planar signature grids}
A \emph{quantum orthogonal matrix} $U$ is, roughly, a matrix whose entries come from an abstract,
not necessarily commutative,
$C^*$-algebra, satisfying the relation $UU^\top = U^\top U = I \otimes \mathbf{1}$, where $\mathbf{1}$
is the identity element of the $C^*$-algebra.
Just as a permutation matrix is an orthogonal matrix that stabilizes $\eq$, 
a \emph{quantum permutation matrix} is a quantum orthogonal matrix that
stabilizes $\eq$ (see e.g. \cite[Equation 27]{cai_planar_2023}). The main theorem of Cai and Young
\cite{cai_planar_2023}, extending the result of Man\v{c}inska and Roberson \cite{planar}, is a planar,
quantum version of \autoref{cor:isomorphism}:
$\fc$ and $\gc$ are $\plholant(\cdot \cup \eq)$-indistinguishable 
(planar-\#CSP-indistinguishable) if and only if there is a quantum permutation matrix $U$ satisfying
$U \fc = \gc$ ($\fc$ and $\gc$ are \emph{quantum isomorphic}). 
Removing $\eq$, we should obtain the following planar, quantum version of \autoref{thm:result}.
\begin{conjecture}
    \label{con:quantum}
    Let $\fc$, $\gc$ be sets of real-valued signatures. Then the following are equivalent.
    \begin{enumerate}[label=(\roman*)]
        \item $\plholant_{\Omega}(\fc) = \plholant_{\Omega_{\fc\to\gc}}(\gc)$
        for every planar $\fc$-grid $\Omega$.
        \item There is a quantum orthogonal matrix $U$ such that $U\fc = \gc$.
    \end{enumerate}
\end{conjecture}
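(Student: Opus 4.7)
The plan is to follow the three-stage architecture of the proof of \autoref{thm:result} but replace all orthogonal-group machinery with its quantum-orthogonal analog, throughout enforcing planarity of gadgets. First, the easy direction $(ii) \implies (i)$ should follow from a quantum orthogonal version of \autoref{cor:holanttheorem}: the defining relations $UU^\top = U^\top U = I \otimes \mathbf{1}$ of a quantum orthogonal matrix are exactly what is needed to move $U$'s past $I$ and $I^{0,2}$, as in \autoref{fig:fft}, so any planar $(\fc\sqcup\fc')$-grid transported through $U$ yields the same Holant value. (This mirrors how Cai and Young \cite{cai_planar_2023} lift the classical Holant theorem to quantum permutation matrices.)

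For $(i) \implies (ii)$, I would first establish a planar, quantum orthogonal analog of \autoref{thm:intertwiner_gadget}. Define $\stab^+(\fc) \subset O_q^+$ (Wang's free orthogonal quantum group) as the quantum subgroup cut out by the $C^*$-algebraic relations that force the defining matrix of generators to stabilize every $F \in \fc$. By Woronowicz's Tannaka-Krein duality for compact matrix quantum groups, the intertwiner space of $\stab^+(\fc)$ is a TCWD; the question is whether it equals $M(\qk^{\mathrm{pl}})$, the signature matrices of \emph{planar} quantum $\fc$-gadgets. The forward inclusion should hold by construction, and the reverse should follow from an analog of \autoref{lem:tcwd} with the symmetric braid $S^{2,2}$ removed from \autoref{prop:mqf}, exactly paralleling the non-symmetric TCWD proof in \cite[Section 4]{cai_planar_2023}. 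Having this duality, a verbatim adaptation of \autoref{lem:offdiagblock} yields: planar-Holant-indistinguishability of $\fc$ and $\gc$ produces some element of the defining matrix algebra of $\stab^+(\fc \oplus \gc)$ whose $V(\fc) \times V(\gc)$ or $V(\gc) \times V(\fc)$ block is nonzero, because otherwise the matrix $\bigl[\begin{smallmatrix} I & 0 \\ 0 & 2I\end{smallmatrix}\bigr]$ would be realizable as a planar $(\fc\oplus\gc)$-quantum-gadget, and the trace argument around \eqref{eq:zero_and_i} extends without change.

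The hard part will be the final step: extracting an actual quantum orthogonal $U$ with $U\fc = \gc$ from this nontrivial off-diagonal intertwiner. The proof of \autoref{thm:result} relied crucially on two classical tools with no obvious quantum substitute. The singular value decomposition was used to replace the off-diagonal block with a real diagonal $D$; and when $D \notin \spn(I)$, \autoref{lem:induction} split $[q]$ into two invariant subdomains via the indicator $\mathds{1}_Z$ obtained from Vandermonde interpolation (\autoref{prop:vandermonde}), and recursed. Neither tool quantizes cleanly: matrices over noncommutative $C^*$-algebras admit no ordinary SVD, and the quantum orthogonal group action on $\cc^q$ does not preserve any partition of the basis set $[q]$, so there is nothing to induct on. One natural substitute is to argue directly from the representation theory of $\stab^+(\fc\oplus\gc)$ inside $O_q^+$, using Banica's Temperley–Lieb description of the intertwiner categories of $O_q^+$ to decompose the off-diagonal block and read off $U$; the orbital techniques of \cite{planar, cai_planar_2023} fail here because no analog of an orbital exists for $O_q^+$, which is why a fundamentally new ingredient, rather than a mechanical transfer, appears necessary to settle \autoref{con:quantum}.
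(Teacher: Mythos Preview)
Your proposal is not a proof, and neither is the paper's treatment: \autoref{con:quantum} is stated as an open conjecture, and the paper gives only a discussion of obstacles, not a proof. Your write-up mirrors that discussion almost point for point. You correctly note that $(ii)\implies(i)$ follows from the quantum orthogonal Holant theorem (the paper observes the same, citing \cite[Theorem 5]{cai_planar_2023} and remarking that its proof only uses quantum orthogonality), and you correctly sketch how Tannaka--Krein duality for $O_q^+$ yields a planar analog of \autoref{thm:intertwiner_gadget} and hence of \autoref{lem:offdiagblock} --- exactly what the paper outlines in \autoref{sec:variations}.

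Where both you and the paper stop is the same place, for the same reason: the domain-induction endgame of \autoref{thm:result} does not transfer. You identify the two essential obstructions --- no singular value decomposition over a noncommutative $C^*$-algebra, and no invariant partition of $[q]$ under $O_q^+$ on which to recurse --- and the paper names the same two, adding explicitly that the grid manipulation in \autoref{fig:grid_trace} is in general nonplanar (the $D$-vertices cannot all be pushed to the outer face). Your suggestion to attack the off-diagonal block via Banica's Temperley--Lieb description of $O_q^+$ intertwiners is a reasonable direction the paper does not mention, but it is a hope, not an argument. In short: you have accurately reconstructed the paper's own assessment that \autoref{con:quantum} is open and that settling it requires a genuinely new idea beyond the architecture of \autoref{thm:result}.
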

Cai and Young \cite[Theorem 5]{cai_planar_2023} prove (ii) $\implies$ (i) when $U$ is a quantum permutation matrix;
however, their proof only relies on $U$ being a quantum orthogonal matrix. Therefore,
to prove \autoref{con:quantum}, it suffices to show (i) $\implies$ (ii). The \emph{Tannaka-Krein duality}
for the quantum symmetric group used by Man\v{c}inska and Roberson and Cai and Young 
\cite[Theorem 3]{cai_planar_2023} has a more general version
for the quantum orthogonal group $O_q^+$ \cite[Theorem 2.13]{planar}. This version is
analogous to \autoref{thm:duality} above, but concerning quantum subgroups of $O_q^+$ and asymmetric
(planar) tensor categories with duals. Then, following the proof of \cite[Theorem 4]{cai_planar_2023},
but ommitting the gadgets $E^{1,0}$ and $E^{1,2}$ used to construct $\eq$, we will obtain a quantum analogue
of our \autoref{thm:intertwiner_gadget} for planar quantum $\fc$-gadgets, giving a quantum
analogue of \autoref{lem:offdiagblock}. The rest of the proof of \autoref{thm:result}, however, involves
nonplanar signature grid manipulations (e.g. in \autoref{fig:grid_trace} it is in general impossible to embed $\Omega$ such that 
every instance of $D$ lies on the outer face) and, more critically, relies on the existence of the
singular value decomposition of a submatrix of a real orthogonal matrix, then on viewing the resulting
diagonal matrix as a signature. It is yet unclear whether the same or similar reasoning applies to a
submatrix of a quantum orthogonal matrix.

\subsection*{Acknowledgements}
The author thanks Jin-Yi Cai for helpful discussions and Ashwin Maran for his suggestion which
improved the proof of \autoref{thm:odeco}. The author also thanks Guus Regts for pointing out
reference \cite{schrijver_graph_2008}.

\appendix
\section{Block Signature Actions}
\label{sec:appendix_block}
In this appendix, we prove several technical results which state that the action of a 
block matrix $H$ on a block signature $K$ follows block matrix multiplication rules as one would expect.
\begin{proposition}
    \label{prop:block}
    Let $\ic = X \sqcup Y$.
    For $H \in \rr^{\ic\times\ic}$, $K^{m,d} \in \rr^{\ic^m \times \ic^d}$ and any $\vr\in\{X,Y\}^m$
    $\vc\in\{X,Y\}^d$,
    \[
        (H^{\otimes m} K^{m,d})|_{\vr,\vc} = 
        \sum_{\vj \in \{X,Y\}^m} \left(\bigotimes_{i=1}^m H|_{R_i,J_i}\right) K^{m,d}|_{\vj,\vc}
    \]
    (with $H^{\otimes m} K^{m,d} \in \rr^{\ic^m \times \ic^d}$ 
    indexed as in part 3 of \autoref{def:block}) and similarly
    \[
        (K^{m,d}H^{\otimes d})|_{\vr,\vc} = 
        \sum_{\vj \in \{X,Y\}^d}  K^{m,d}|_{\vr,\vj}\left(\bigotimes_{i=1}^d H|_{J_i,C_i}\right).
    \]
\end{proposition}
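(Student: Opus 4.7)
The plan is to prove both identities by an entrywise computation, reducing the block-structured action to the standard identity $(H^{\otimes m})_{\vx,\vz} = \prod_{i=1}^m H_{x_i,z_i}$ together with the partition $\ic^m = \bigsqcup_{\vj \in \{X,Y\}^m} \prod_{i=1}^m J_i$. I will focus on the first identity; the second follows either by a completely symmetric argument or by taking the transpose of the first and using that $(H^\top)|_{A,B} = (H|_{B,A})^\top$.

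First I would fix $\vx \in \prod_{i=1}^m R_i$ and $\vy \in \prod_{i=1}^d C_i$ and expand the left-hand side by matrix multiplication as
\[
    \bigl((H^{\otimes m} K^{m,d})|_{\vr,\vc}\bigr)_{\vx,\vy}
    = \sum_{\vz \in \ic^m} (H^{\otimes m})_{\vx,\vz}\, K^{m,d}_{\vz,\vy}
    = \sum_{\vz \in \ic^m} \left(\prod_{i=1}^m H_{x_i, z_i}\right) K^{m,d}_{\vz,\vy}.
\]
Next I would partition the summation index $\vz \in \ic^m$ according to which of $X$ or $Y$ each coordinate $z_i$ lies in. Writing $\vj = (J_1,\ldots,J_m) \in \{X,Y\}^m$ for the resulting pattern, the sum becomes
\[
    \sum_{\vj \in \{X,Y\}^m} \sum_{\vz \in \prod_i J_i} \left(\prod_{i=1}^m H_{x_i, z_i}\right) K^{m,d}_{\vz,\vy}.
\]

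The final step is to recognize each inner sum as a block of the claimed right-hand side. Since $x_i \in R_i$ and $z_i \in J_i$, we have $H_{x_i,z_i} = (H|_{R_i,J_i})_{x_i,z_i}$, so $\prod_{i} H_{x_i,z_i} = \bigl(\bigotimes_{i=1}^m H|_{R_i,J_i}\bigr)_{\vx,\vz}$ by the tensor-product indexing convention. Moreover, since $\vz \in \prod_i J_i$ and $\vy \in \prod_i C_i$, by part 3 of \autoref{def:block} we have $K^{m,d}_{\vz,\vy} = \bigl(K^{m,d}|_{\vj,\vc}\bigr)_{\vz,\vy}$. Collecting these observations turns the inner sum into the standard matrix product $\bigl(\bigotimes_{i=1}^m H|_{R_i,J_i}\bigr) K^{m,d}|_{\vj,\vc}$ evaluated at $(\vx,\vy)$, and summing over $\vj$ yields the right-hand side. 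There is essentially no obstacle here; the only care needed is in bookkeeping the correspondence between the flattened index $\vx \in \ic^m$ (as used in \autoref{def:sig_matrix}) and the componentwise decomposition $\vx = (x_1,\ldots,x_m)$ used in the tensor product formula, so that the block restriction, the Kronecker product, and the partitioned summation all line up consistently.
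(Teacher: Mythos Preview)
Your proof is correct and follows essentially the same approach as the paper: an entrywise computation that expands the matrix product, partitions the intermediate summation index over $\ic^m$ according to the block pattern $\vj \in \{X,Y\}^m$, and then recognizes each inner sum as the $(\vx,\vy)$-entry of the corresponding block product. The paper's proof is line-for-line the same, differing only in the choice of letters for the indices.
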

That is, with $H^{\otimes m}|_{\vr,\vj} = \bigotimes_{i=1}^m H|_{R_i,J_i}$,
we can compute $H^{\otimes m}K^{m,d}$ as a block matrix product with
\[
    H^{\otimes m} = 
    \begin{bmatrix} 
        H^{\otimes m}|_{X^{m},X^{m}} &
        H^{\otimes m}|_{X^{m},X^{m-1}Y} & 
        H^{\otimes m}|_{X^{m},X^{m-2}YX} & \ldots & 
        H^{\otimes m}|_{X^{m},Y^{m}}\\
        H^{\otimes m}|_{X^{m-1}Y,X^{m}} &
        H^{\otimes m}|_{X^{m-1}Y,X^{m-1}Y} & 
        H^{\otimes m}|_{X^{m-1}Y,X^{m-2}YX} & \ldots & 
        H^{\otimes m}|_{X^{m-1}Y,Y^{m}}\\
        \vdots & \vdots & \vdots && \vdots \\
        H^{\otimes m}|_{Y^{m-1}X,X^{m}} & 
        H^{\otimes m}|_{Y^{m-1}X,X^{m-1}Y} &
        H^{\otimes m}|_{Y^{m-1}X,X^{m-2}YX} & \ldots & 
        H^{\otimes m}|_{Y^{m-1}X,Y^{m}}
        \\
        H^{\otimes m}|_{Y^{m},X^{m}} 
        & H^{\otimes m}|_{Y^{m},X^{m-1}Y} 
        & H^{\otimes m}|_{Y^{m},X^{m-2}YX} 
        & \ldots & H^{\otimes m}|_{Y^{m},Y^{m}} 
    \end{bmatrix}
\]
and
\[
    K^{m,d} =
    \begin{bmatrix} 
        K^{m,d}|_{X^{m},X^{d}} &
        K^{m,d}|_{X^{m},X^{d-1}Y} & \ldots &
        K^{m,d}|_{X^{m},Y^{d-1}X} & 
        K^{m,d}|_{X^{m},Y^{d}}\\
        K^{m,d}|_{X^{m-1}Y,X^{d}} &
        K^{m,d}|_{X^{m-1}Y,X^{d-1}Y} & \ldots & 
        K^{m,d}|_{X^{m-1}Y,Y^{d-1}X} & 
        K^{m,d}|_{X^{m-1}Y,Y^{d}}\\
        K^{m,d}|_{X^{m-2}YX,X^{d}} & 
        K^{m,d}|_{X^{m-2}YX,X^{d-1}Y} &\ldots & 
        K^{m,d}|_{X^{m-2}YX,Y^{d-1}X} & 
        K^{m,d}|_{X^{m-2}YX,Y^{d}} \\
        \vdots & \vdots && \vdots & \vdots \\
        K^{m,d}|_{Y^{m},X^{d}} &
        K^{m,d}|_{Y^{m},X^{d-1}Y} & \ldots & 
        K^{m,d}|_{Y^{m},Y^{d-1}X} &
        K^{m,d}|_{Y^{m},Y^{d}}.
    \end{bmatrix}.
\]
\begin{proof}
    We prove the first statement. The second is proved similarly.
    Let $\vrr \in \vr$ and $\vcc \in \vc$. Then
    \begin{align*}
        ((H^{\otimes m} K^{m,d})|_{\vr,\vc})_{\vrr,\vcc}
        = (H^{\otimes m} K^{m,d})_{\vrr,\vcc}
        &= \sum_{\vjj\in\ic^m} \left(\prod_{i=1}^m H_{r_i,j_i}\right)K^{m,d}_{\vjj,\vcc}\\
        &= \sum_{\vj\in\{X,Y\}^m} \sum_{\vjj\in\vj} \left(\prod_{i=1}^m H_{r_i,j_i}\right)K^{m,d}_{\vjj,\vcc}\\
        &= \sum_{\vj \in \{X,Y\}^m} \sum_{\vjj\in\vj}\left(\bigotimes_{i=1}^m H|_{R_i,J_i}\right)_{\vrr,\vjj} (K^{m,d}|_{\vj,\vc})_{\vjj,\vcc}\\
        &= \sum_{\vj \in \{X,Y\}^m} \left(\left(\bigotimes_{i=1}^m H|_{R_i,J_i}\right) K^{m,d}|_{\vj,\vc}\right)_{\vrr,\vcc} \\
        &= \left(\sum_{\vj \in \{X,Y\}^m} \left(\bigotimes_{i=1}^m H|_{R_i,J_i}\right) K^{m,d}|_{\vj,\vc}\right)_{\vrr,\vcc}. \qedhere
    \end{align*}
\end{proof}
We will only need \autoref{prop:block} as written, for partitions of $\ic$ into two blocks, but it is
not hard to see that it extends naturally to partitions of $\ic$ into more than two blocks.

We will apply \autoref{prop:block} for two special types of $H$.
\begin{corollary}
    \label{cor:on_diag_block}
    If 
    \[
        H = H_X \oplus H_Y = \begin{bmatrix} H_X & 0 \\ 0 & H_Y\end{bmatrix}
    \]
    is block-diagonal, then, for any $K \in \rr^{\ic^n}$, 
    the block form of $H^{\otimes n} K^{n,0}$ is
    \[
        \begin{bmatrix} 
            H_X^{\otimes n} & 0 & \ldots & 0 & 0\\
            0 & * & \ldots & 0 & 0\\
            \vdots & \vdots & \ddots & \vdots & \vdots \\
            0 & 0 & \ldots & * & 0\\
            0 & 0 & \ldots & 0 & H_Y^{\otimes n}
        \end{bmatrix}
        \begin{bmatrix} 
            (K|_X)^{n,0} \\ * \\ \vdots \\ * \\ (K|_Y)^{n,0}
        \end{bmatrix} =
        \begin{bmatrix} 
            H_X^{\otimes n}(K|_X)^{n,0} \\ * \\ \vdots \\ * \\ H_Y^{\otimes n}(K|_Y)^{n,0}
        \end{bmatrix}.
    \]
\end{corollary}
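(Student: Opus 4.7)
The plan is to apply \autoref{prop:block} with $m = n$, $d = 0$, and then exploit the block-diagonality of $H$ to make the sum over block indices collapse. The key observation is that $H = H_X \oplus H_Y$ gives $H|_{X,X} = H_X$, $H|_{Y,Y} = H_Y$, and $H|_{X,Y} = H|_{Y,X} = 0$. Consequently, for any $\vr, \vj \in \{X,Y\}^n$, the tensor factor $\bigotimes_{i=1}^n H|_{R_i, J_i}$ vanishes whenever $R_i \neq J_i$ for some $i$, i.e., whenever $\vj \neq \vr$.

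I would then instantiate \autoref{prop:block} at each block index $\vr \in \{X,Y\}^n$:
\[
    (H^{\otimes n} K^{n,0})|_{\vr} = \sum_{\vj \in \{X,Y\}^n} \left(\bigotimes_{i=1}^n H|_{R_i, J_i}\right) K^{n,0}|_{\vj}.
\]
By the vanishing observation, only the $\vj = \vr$ summand survives, yielding $\left(\bigotimes_{i=1}^n H|_{R_i,R_i}\right) K^{n,0}|_{\vr}$. Specializing to the extremal blocks $\vr = X^n$ and $\vr = Y^n$, and using the identification $K^{n,0}|_{X^n} = (K|_X)^{n,0}$ from part 3 of \autoref{def:block} (and similarly for $Y$), I recover the claimed top and bottom blocks $H_X^{\otimes n}(K|_X)^{n,0}$ and $H_Y^{\otimes n}(K|_Y)^{n,0}$. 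The intermediate blocks are allowed to be arbitrary (marked $*$ in the statement), so no further analysis is needed there.

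To justify the block-diagonal matrix form of $H^{\otimes n}$ itself displayed in the statement, I would apply the same vanishing argument directly to the entries of the Kronecker power: $(H^{\otimes n})|_{\vr,\vc} = \bigotimes_{i=1}^n H|_{R_i, C_i} = 0$ whenever $\vr \neq \vc$, so $H^{\otimes n}$ is block diagonal with respect to the partition $\ic^n = \bigsqcup_{\vr \in \{X,Y\}^n} \vr$, with diagonal blocks $\bigotimes_i H|_{R_i,R_i}$ (which are $H_X^{\otimes n}$ and $H_Y^{\otimes n}$ at the two extremes).

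I do not anticipate a real obstacle: the corollary is a routine unpacking of \autoref{prop:block} once one notes that off-diagonal blocks of $H$ vanish. The only care required is bookkeeping around the $\{X,Y\}^n$ index set and consistently invoking the identification between $K^{n,0}|_{\vr, \varnothing}$ and $(K|_{\vr})^{n,0}$.
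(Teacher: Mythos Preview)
Your proposal is correct and follows essentially the same approach as the paper: observe that block-diagonality of $H$ forces $H|_{R_i,J_i}=0$ whenever $R_i\neq J_i$, so the sum in \autoref{prop:block} collapses to the single term $\vj=\vr$. The paper's proof is a one-line version of exactly this argument.
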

\begin{proof}
    Here,
    $H|_{R_i,J_i} = H_{R_i}$ if $R_i = J_i$ and $H|_{R_i,J_i} = 0$ if $R_i\neq J_i$, so $H$ takes the
    claimed block form in \autoref{prop:block}.
\end{proof}
If $X = V(F)$, $Y = V(G)$, and $K = F \oplus G$, then by \eqref{eq:oplus_index},
all blocks of $K$ are 0 except $K|_X = F$ and $K|_Y = G$, so, by \autoref{cor:on_diag_block},
\begin{equation}
    \label{eq:oplus_action}
    (H_{V(F)} \oplus H_{V(G)})(F \oplus G) = (H_{V(F)} F) \oplus (H_{V(G)} G).
\end{equation}

\begin{corollary}
\label{cor:off_diag_block}
If
\[
    B = \begin{bmatrix} 0 & H^\top \\ H & 0 \end{bmatrix}
\]
is block-antidiagonal, with blocks indexed by $X = V(F)$ and $Y = V(G)$, then, for any $m,d$,
\[
    H^{\otimes m}F^{m,d} = G^{m,d}H^{\otimes d} \text{~~and~~}
    (H^\top)^{\otimes m}G^{m,d} = F^{m,d}(H^\top)^{\otimes d}
    \iff 
    B^{\otimes m}(F \oplus G)^{m,d} = (F \oplus G)^{m,d}B^{\otimes d}.
\]
\end{corollary}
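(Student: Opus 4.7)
The plan is to apply \autoref{prop:block} to both sides of the block matrix equation $B^{\otimes m}(F \oplus G)^{m,d} = (F \oplus G)^{m,d} B^{\otimes d}$, identify which blocks can be nonzero, and check that equating corresponding nonzero blocks yields exactly the two matrix equations in the corollary. Because $B$ is block-antidiagonal with $B|_{X,Y} = H^\top$ and $B|_{Y,X} = H$, the block $B^{\otimes m}|_{\vr, \vj} = \bigotimes_{i=1}^m B|_{R_i, J_i}$ vanishes unless $J_i \neq R_i$ for every $i$; a symmetric statement holds for $B^{\otimes d}$.

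By \eqref{eq:oplus_index}, the only nonzero blocks of $(F \oplus G)^{m,d}$ are $(F \oplus G)^{m,d}|_{X^m, X^d} = F^{m,d}$ and $(F \oplus G)^{m,d}|_{Y^m, Y^d} = G^{m,d}$, so in the sum
\[
    (B^{\otimes m}(F \oplus G)^{m,d})|_{\vr, \vc} = \sum_{\vj \in \{X,Y\}^m} B^{\otimes m}|_{\vr, \vj} \cdot (F \oplus G)^{m,d}|_{\vj, \vc}
\]
the only surviving contributions require $\vj \in \{X^m, Y^m\}$, which forces $\vr \in \{Y^m, X^m\}$ respectively (since $\vj$ must equal the elementwise complement of $\vr$) and in turn forces $\vc \in \{X^d, Y^d\}$ from the $(F \oplus G)^{m,d}$ factor. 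The two nonzero blocks are therefore $(Y^m, X^d) = H^{\otimes m}F^{m,d}$ and $(X^m, Y^d) = (H^\top)^{\otimes m}G^{m,d}$. Applying the second half of \autoref{prop:block} to the right-hand product, an identical analysis shows its only nonzero blocks are $(Y^m, X^d) = G^{m,d}H^{\otimes d}$ and $(X^m, Y^d) = F^{m,d}(H^\top)^{\otimes d}$.

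Since every other block on both sides is zero, equality of the two matrices is equivalent to equality of the two corresponding pairs of blocks, which is exactly the conjunction $H^{\otimes m}F^{m,d} = G^{m,d}H^{\otimes d}$ and $(H^\top)^{\otimes m}G^{m,d} = F^{m,d}(H^\top)^{\otimes d}$. Both directions of the iff are handled simultaneously by this block-equality reduction. The proof is essentially a mechanical block computation on top of \autoref{prop:block}; I do not anticipate a real obstacle. The only thing to be careful about is the indexing convention on $B^{\otimes m}|_{\vr, \vj}$ and on $B^{\otimes d}|_{\vj, \vc}$, so that the factors $H$ versus $H^\top$ end up on the correct side of each surviving block, matching the two asserted equations.
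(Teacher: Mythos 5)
Your proof is correct and follows essentially the same route as the paper: apply \autoref{prop:block}, use \eqref{eq:oplus_index} to see that only the $X^m,X^d$ and $Y^m,Y^d$ blocks of $(F\oplus G)^{m,d}$ are nonzero, observe that the block-antidiagonal structure of $B^{\otimes m}$ and $B^{\otimes d}$ kills every block of the two products except $(Y^m,X^d)$ and $(X^m,Y^d)$, and then identify those two surviving blocks. Your write-up is, if anything, a bit more explicit than the paper about why the non-corner blocks vanish on both sides, which is exactly the detail needed to get both directions of the iff.
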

\begin{proof}
    $B|_{R_i,J_i} = 0$ if $R_i = J_i$, so, by \autoref{prop:block} and \eqref{eq:oplus_index}, 
    we compute blockwise with
    \begin{equation}
        B^{\otimes m}(F \oplus G)^{m,d} \text{ as }
        \begin{bmatrix} 
            0 & 0 & \ldots & 0 & (H^\top)^{\otimes m}\\
            0 & 0 & \ldots & * & 0\\
            \vdots & \vdots & \iddots & \vdots & \vdots \\
            0 & * & \ldots & 0 & 0\\
            H^{\otimes m} & 0 & \ldots & 0 & 0
        \end{bmatrix}
        \begin{bmatrix} 
            F^{m,d} & 0 & \ldots & 0 & 0\\
            0 & 0 &\ldots & 0 & 0\\
            \vdots & \vdots & \ddots & \vdots & \vdots\\
            0 & 0 &\ldots & 0 & 0\\
            0 & 0 & \ldots & 0 & G^{m,d}
        \end{bmatrix}
        \label{eq:hmkmd}
    \end{equation}
    and
    \begin{equation}
        (F \oplus G)^{m,d} B^{\otimes d} \text{ as }
        \begin{bmatrix} 
            F^{m,d} & 0 & \ldots & 0 & 0\\
            0 & 0 &\ldots & 0 & 0\\
            \vdots & \vdots & \ddots & \vdots & \vdots\\
            0 & 0 &\ldots & 0 & 0\\
            0 & 0 & \ldots & 0 & G^{m,d}
        \end{bmatrix}
        \begin{bmatrix} 
            0 & 0 & \ldots & 0 & (H^\top)^{\otimes d}\\
            0 & 0 & \ldots & * & 0\\
            \vdots & \vdots & \iddots & \vdots & \vdots \\
            0 & * & \ldots & 0 & 0\\
            H^{\otimes d} & 0 & \ldots & 0 & 0
        \end{bmatrix}.
        \label{eq:hmkmd2}
    \end{equation}
    The result follows from
    comparing the bottom left and upper right blocks of \eqref{eq:hmkmd} and \eqref{eq:hmkmd2}.
\end{proof}

\printbibliography

\end{document}